\DeclareMathAlphabet      {\mathbfit}{OML}{cmm}{b}{it}
\DeclareMathAlphabet      {\mathbfrm}{OT1}{cmr}{b}{n}
\theoremstyle{plain}
\newtheorem{thm}{Theorem}
\newtheorem{lem}{Lemma}
\newtheorem{prop}{Proposition}
\theoremstyle{definition}
\newtheorem{defn}{Definition}
\newtheorem{assm}{Assumption}
\theoremstyle{remark}
\newtheorem{rem}{Remark}
\newcommand{\slfrac}[2]{\left.#1\middle/#2\right.}
\begin{document}
%
\title{Distributed Asynchronous Optimization Framework for the MISO Interference Channel}
%
%
%

\author{
\IEEEauthorblockN{Stefan Wesemann and Gerhard Fettweis}\\
\IEEEauthorblockA{Vodafone Chair Mobile Communications Systems, TU-Dresden, Germany\\
Email: \{stefan.wesemann, fettweis\}@ifn.et.tu-dresden.de}}

\author{Stefan~Wesemann,~\IEEEmembership{Student~Member,~IEEE,}
        and~Gerhard~Fettweis,~\IEEEmembership{Fellow,~IEEE}
\thanks{\copyright \copyright 2014 IEEE. Personal use of this material is permitted. Permission
from IEEE must be obtained for all other uses, in any current or future media,
including reprinting/republishing this material for advertising or promotional
purposes, creating new collective works, for resale or redistribution to servers
or lists, or reuse of any copyrighted component of this work in other works.}%
\thanks{S. Wesemann and G. Fettweis are with the Vodafone Chair Mobile Communications Systems, TU-Dresden, Germany, e-mail: \{stefan.wesemann, fettweis\}@ifn.et.tu-dresden.de.} 
}

\maketitle

\begin{abstract}
We study the distributed optimization of transmit strategies in a multiple-input, single-output (MISO) interference channel (IFC). Existing distributed algorithms rely on strictly synchronized update steps by the individual users. They require a global synchronization mechanism and potentially suffer from the synchronization penalty caused by e.g., backhaul communication delays and fixed update sequences.
We establish a general optimization framework that allows asynchronous update steps. The users perform their computations at arbitrary instants of time, and do not wait for information that has been sent to them. Based on certain bounds on the amount of asynchronism that is present in the execution of the algorithm, we are able to characterize its convergence. As illustrated by our numerical results, the proposed algorithm is not excessively slowed down by neither communication delays, nor by specific update orders, and thus enables faster convergence to (local) optimal solution.
\end{abstract}


%
\IEEEpeerreviewmaketitle

\section{Introduction}
%
%
%
%

\IEEEPARstart{D}{istributed} 
interference coordination in wireless networks \cite{5371760} is of special interest, since the alternative of centralized control involves added infrastructure, latency and network vulnerability. We consider networks that can be modeled as a set of mutually interfering multiple-input, single-output (MISO) links \cite{1405330}, each representing a user. Although the optimal transmit strategy requires complex signal-level en-/decoding cooperations among the users, we assume that each user employs single-stream beamforming with single-user detection. Our objective is the maximization of the sum of all user utilities, which is referred to as the sum utility problem.

The primary focus of this work is on the design and evaluation of a distributed \emph{asynchronous} optimization framework, in which the users update their transmission strategies autonomously, based on locally available channel state information and the (possibly delayed) exchange of optimization parameters via backhaul. In contrast to synchronous algorithms, the proposed method does not rely on any centralized control, and can cope with outdated information; that is, the local update computations never wait for inputs but keep performing whatever information is currently available.\\
The crucial question is whether or not asynchronism helps to alleviate the synchronization penalty \cite{Bertsekas:1989:PDC} that is caused by specific update orders, backhaul delays and differences in the computation intervals.

The considered MISO interference channel (IFC) is a well-investigated model (see the excellent tutorial in \cite{CIT-069}). However, its distributed asynchronous optimization has still been an open problem. We now give a brief reference to work that relates to the sum utility problem (SUP) in the MISO IFC.

\vspace{-0.2cm}
\subsection{Related Work}
Determining the sum utility \emph{optimal} transmit strategy is proven to be NP-hard in general, as shown in \cite{5638157}. 
Interestingly, for some special cases there exist distributed optimal closed-form solutions. Sum-rate optimal solutions are obtained by the maximum ratio transmission (MRT) beamformers \cite{795811} at low signal-to-noise ratios (SNR), and by the zero-forcing (ZF) beamformers at high SNRs, but only in scenarios where zero-forcing is possible for all users.
Under general conditions, the monotonic optimization framework from \cite{Tuy:2000} provides mechanisms for finding an $\epsilon$-optimal solution in a finite number of iterations, but only if the user utility functions satisfy certain monotonicity properties. Examples of such centralized algorithms are found in \cite{5504616},\cite{6129540},\cite{5946278}. Due to the NP-hardness, the number of required iterations scales exponentially with the number of users; that is, attempting to find a real-time optimal solution for a large number of users is infeasible. Thus, the framework is only suitable for computing benchmarks.

There exists a multitude of distributed \emph{synchronous} algorithms with guaranteed convergence to a stationary point of the SUP. First note that the (optimal) closed-form solutions for the sum-rate maximization (SRMax) problem (i.e., MRT and ZF) can be generalized by a minimum mean square error beamforming structure, yielding the maximum virtual signal-to-interference-plus-noise (SINR) beamformer in \cite{VSINR}. However, the virtual SINR maximization always results in full power for all users, which is in general not sum-rate optimal. A closely related algorithm for the weighted SRMax problem is given in \cite{5610967}. The algorithm employs a high-SINR approximation in order to obtain fully decoupled subproblems; that is, each transmitter requires only local CSI to generate a \emph{near} optimal solution.\\
In order to achieve (local) optimal solutions for a broader class of utility functions, one has to resort to iterative algorithms. In \cite{5205801}, a distributed pricing (DP) algorithm for the MISO IFC has been proposed, in which each user iteratively maximizes its own utility function plus the summation of the first-order approximation of all other users' utility functions at the current operating point. The linearization is based on so-called interference prices, which must be exchanged between the users. The monotonic convergence to a stationary point is guaranteed if the (twice differentiable) utility functions are convex with respect to the interference power, and the users perform sequential updates with current knowledge of the interference prices. An extension of the algorithm for the MIMO interference channel is found in \cite{5198927}.\\ 
A closely related cyclic coordinate descent (CCD) algorithm is found in \cite{5638157}. This algorithm also requires sequential update steps and current knowledge of the optimization parameters. By assuming that every user has a-priori knowledge of all (twice continuously differentiable) utility functions, it is sufficient for the users to announce the numerator and the denominator of the SINR after each iteration. However, as shown in \cite{schmidtDiss}, the number of iterations required for convergence is very large, especially at high SNR, because the algorithm does not make any assumptions on the curvature of the utility functions.\\
In \cite{5470055}, a weighted sum mean-square error (MSE) minimization is proposed, in which the weights are adaptively chosen to mimic the behavior of arbitrary utility functions. Each time the weights are updated and communicated among the users, the proposed algorithm alternates between the updates of the receiver gains and the transmit beamformers. If the user utility functions are convex in the MSE then the solution monotonically converges to a stationary point. An extension for the MIMO interfering broadcast channel is found in \cite{5756489}.\\
Note that the described algorithms can not cope with outdated information and thus rely on a synchronization mechanism, which introduces idle periods. 

By focusing on distributed \emph{asynchronous} approaches, our literature study identified only one algorithm for general user utility functions. In \cite{4797605} an asynchronous distributed pricing (ADP) algorithm is proposed for the \emph{two-user} MISO IFC, in which the users perform their update steps at arbitrary instants of time, based on possibly outdated information. By re-parameterizing the original problem, the authors show that the algorithm corresponds to best response updates in a supermodular game, which relies on the principle of strategic complements (i.e., the strategies of the two users mutually reinforce one another). If certain beamformer initializations are used and the utility functions satisfy some special criteria of the coefficient of relative risk aversion, then the solution of the ADP algorithm converges monotonically to a stationary point of the sum utility problem. However, its convergence can only be established for the two-user case.

\subsection{Contributions}
We adopt the distributed computation model from \cite{1104412} in order to formulate an asynchronous optimization algorithm for the general MISO IFC. We start with the re-parametrization of the SUP in terms of received signal powers (so-called power gains), which entails the following advantages:
\begin{itemize}
	\item Typically, the user utility functions (e.g., SINR, data rate) are defined in terms of signal and interference powers. Any phase rotation of the received signal is irrelevant. Consequently, the power gain based problem representation reflects the essential problem structure and provides a reduced parameter space.
	\item By focusing on distributed optimization approaches, the coupling between the subproblems can be efficiently described by few real-valued scalars, which stands in contrast to multi-dimensional complex matrices that arise in the original beamforming domain.
	\item The underlying power gain regions, which serve as the constraint set, admit viewpoints from convex geometry \cite{hiriart2001fundamentals} for the characterization of (local) optimal operating points.
\end{itemize}
Note that the feasible set of the re-parameterized problem is non-convex, when focusing on single-stream beamforming only. This may appear as a disadvantage of the re-parameterization, since most of the optimization approaches rely on convex constraint sets. However, we can convexify our constraint sets by employing a rank-relaxation for the underlying transmit correlation matrices, and we show that this relaxation is tight for all stationary points of the SUP.\\
This enables the application of the distributed scaled gradient projection (SGP) algorithm \cite{Bertsekas:1989:PDC}, which provides (local) optimal solutions of the relaxed SUP. The underlying projection onto the power gain region is formulated as a (convex) quadratic semi-definite program \cite{QSDP0}. Moreover, we show how to extract the corresponding beamforming vectors by solving an interference-constrained beamforming problem.\\
Inspired by \cite[Section 5.6]{2140467}, we formulate explicit bounds on the backhaul delays and curvatures of the sum utility function, in order to provide sufficient conditions that ensure the convergence of the asynchronous SGP algorithm to a stationary point of the SUP. 
Finally, we investigate the convergence rate of different synchronous and asynchronous algorithms by means of numerical experiments.\\

\emph{Outline}: In Section II, we provide the system model and introduce the power gain region. In Section III, we formulate the sum utility problem and provide necessary optimality conditions. In Section IV, we describe the distributed asynchronous optimization framework and adopt the scaled gradient projection method. In Section V, we provide simulation results before we conclude in Section VI.\\

\emph{Notation}: Vectors and matrices are written in lowercase and uppercase boldface letters, respectively. The notation $x_{k,l}$ describes the $l$-th component of the vector $\mathbfit{x}_k$. The Euclidean norm of a vector $\mathbfit{a}\in\mathbb{C}^N$, is written as $\left\|\mathbfit{a}\right\|$. $(\cdot)^T$ and $(\cdot)^H$ denote the transpose and Hermitian transpose, respectively. Let $\lambda_1(\mathbfit{A}) \geq \ldots \geq \lambda_N(\mathbfit{A})$ be the eigenvalues of the matrix $\mathbfit{A}\in\mathbb{C}^{N\times N}$, and $\mathcal{E}_k(\mathbfit{A}),1\leq k \leq N$ are the corresponding eigenspaces. The dominant eigenvector of the matrix $\mathbfit{Z}$ is denoted by $v_{max}(\mathbfit{Z})$. $\mathbfit{Z}\succeq 0$ means that $\mathbfit{Z}$ is positive semi-definite. The rank and trace of a matrix $\mathbfit{Z}$ are given by $\mathrm{rank}(\mathbfit{Z})$ and $\mathrm{tr}(\mathbfit{Z})$, respectively. $\Re(x)$ and $\Im(x)$ denote the real and imaginary parts of $x$. We use $\mathbb{R}_+$ (resp. $\mathbb{R}_{++}$) to denote the set of nonnegative (resp. positive) real numbers.

\section{System Model and Power Gain Region}\label{SysModProbForm}
\subsection{System Model}
We consider a narrowband, time-invariant MISO interference channel with $K$ users. Each user consists of a transmitter/receiver pair, where the transmitter has $N$ antennas and the receiver is assumed to have a single effective antenna. As illustrated in Fig. \ref{fig:chanKnow}, the $k$-th receiver observes a superposition of signals from all transmitters but is interested only in the transmit signal from its associated transmitter. The received symbol at receiver $k$ is given by
\begin{align}
r_k = \sum_{l=1}^K \mathbfit{h}_{lk}^H\mathbfit{s}_l + n_k,
\end{align}
where $\mathbfit{s}_l \in \mathbb{C}^{N}$ denotes the transmit signal of the $l$-th transmitter; $\mathbfit{h}_{lk}\in \mathbb{C}^{N}$ denotes the channel vector between the $l$-th transmitter and the $k$-th receiver. Each receiver experiences additive noise $n_k$ with zero mean and variance $\sigma^2$. The stochastic transmit signals are modeled as zero-mean r.v. with signal correlation matrices $\mathbfit{Q}_k=\mathbb{E}\left\{\mathbfit{s}_k\mathbfit{s}_k^H\right\} \in \mathbb{C}^{N\times N}$. In case of multi-stream beamforming, we have $\mathrm{rank}(\mathbfit{Q}_k)>1$; that is, the individual data streams are assumed to be statistically independent. Each transmitter $k$ has a total power constraint, given by $\mathrm{tr}(\mathbfit{Q}_k) \leq 1$. At receiver side, each receiver treats the co-channel interference as additional noise.

We argue that it may not be reasonable to assume that all the channel state information (CSI) is shared by all users. 
\begin{assm}[Local CSI Knowledge] Each user $k$ has only local CSI; that is,
\begin{itemize}
	\item it knows perfectly the channel vector $\mathbfit{h}_{kl}$ between its transmitter $k$ and \emph{each} receiver $l$,
	\item it knows perfectly the scalar channel gain $\left\|\mathbfit{h}_{lk}\right\|^2$ between \emph{each} transmitter $l$ and its receiver $k$.
\end{itemize}
\end{assm}
The local CSI of the $k$-th user is illustrated in Figure \ref{fig:chanKnow}. It can be obtained by using uplink pilots (see, e.g., \cite{5722348}) in time-division-duplex systems or through feedback from receivers (see, e.g., \cite{5466522}) in frequency-division-duplex systems. Note that the channel gain information $\left\|\mathbfit{h}_{lk}\right\|^2$ is only needed for the convergence speed-ups described in Section \ref{convSpeedUp}. 

\begin{figure}[t]
\centering
\includegraphics[width=8cm]{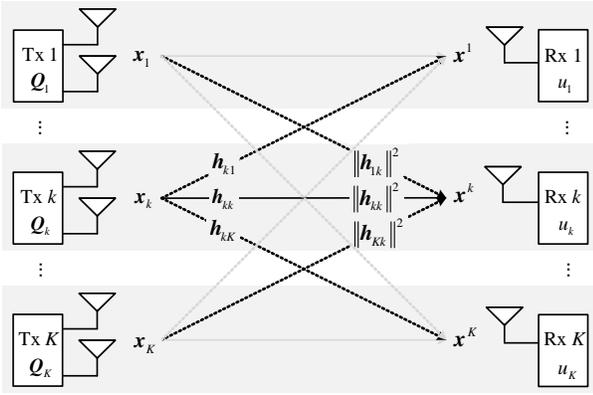}
\caption{$K$-user MISO IFC, illustrated for $N=2$ transmit antennas. Required channel knowledge for the $k$-th user is marked black.}
\label{fig:chanKnow}
\end{figure}


\subsection{Concept of the Power Gain Region}
By the nature of the interference channel, each transmitted signal will in general affect all users. Here, we characterize the impact of each transmitter by its power gain vector, which allows an efficient description of the interactions between a transmitter and all receivers. Consider a transmit signal of the $k$-th transmitter with correlation matrix $\mathbfit{Q}_k$. The received signal power at user $l$ is given by the power gain $x_{k,l}\left(\mathbfit{Q}_k\right) = \mathbfit{h}_{kl}^H \mathbfit{Q}_k \mathbfit{h}_{kl}$. The $K$-tuple of simultaneously achievable power gains from transmitter $k$ forms the transmit power gain vector $\mathbfit{x}_k\left(\mathbfit{Q}_k\right)=[x_{k,1}\left(\mathbfit{Q}_k\right),\ldots,x_{k,K}\left(\mathbfit{Q}_k\right)]^T$. For ease of notation, we introduce the power gain matrix $\mathbfit{X}\in\mathbb{R}_+^{K\times K}$ that collects all power gains of the network, given by
\begin{align}
\mathbfit{X}\left(\mathbfit{Q}_1,\ldots,\mathbfit{Q}_K\right) =  \left[\mathbfit{x}_1\left(\mathbfit{Q}_1\right),\ldots,\mathbfit{x}_K\left(\mathbfit{Q}_K\right)\right]
.\label{eq_pgmatrix}
\end{align}
Note that the $l$-th row of matrix $\mathbfit{X}$ represents the receive power gain vector $\mathbfit{x}^l\left(\mathbfit{Q}_1,\ldots,\mathbfit{Q}_K\right)\in \mathbb{R}_+^{1\times K}$, which contains the power gains that are experienced by the $l$-th receiver.\\
Next, we define the set of feasible transmit power gain vectors for the $k$-th transmitter.
\begin{defn}[Power Gain Region]
 The power gain region $\Omega_k\subset \mathbb{R}_+^{K}$ of the $k$-th transmitter is defined as the set of all achievable power gain vectors $\mathbfit{x}_k\left(\mathbfit{Q}_k\right)$, and is given by
\begin{align}
\Omega_k=\left\{\mathbfit{x}_k\left(\mathbfit{Q}_k\right):  \mathbfit{Q}_k \in \mathcal{Q} \right\}, \label{eq:pg_region:def}
\end{align}
where $\mathcal{Q}=\left\{\mathbfit{Q} \in \mathbb{C}^{N \times N}, \mathrm{tr}(\mathbfit{Q})\leq 1, \mathbfit{Q}\succeq 0\right\}$.
\end{defn}
The power gain region was originally introduced in \cite{5643183}, and is called channel gain region in \cite{CIT-069}. By \cite[Lemma 1]{5643183}, the set $\Omega_k$ is compact and convex.

\begin{rem}
The definition of the power gain region utilizes transmit correlation matrices of arbitrary rank. If we restrict the correlation matrices to be rank one (i.e., correlation matrices that correspond to single-stream beamforming) then the resulting feasible set of power gain vectors is not necessarily convex (see Appendix \ref{app_review}).
\end{rem}

\section{Performance Measures and Optimality Conditions} \label{perfOptCond}
In this section, we seek to characterize the performance of the wireless network by means of utility functions. Therefore, we split the utility measure into two parts: (1) the user utility that is achieved by each user; and (2) the system utility which induces an order on the vectors of simultaneously achievable user utilities. 

\subsection{User Utilities, Utility Region and Pareto Optimality}\label{UserUtilParetoOpt}
We start with the definition of the user utilities and the characterization of efficient operating points. The performance of the $k$-th user is measured by the utility $u_k:\mathbb{R}_+^{1\times K} \rightarrow \mathbb{R}_+$, which is a function of the receive power gain vector $\mathbfit{x}^k\left(\mathbfit{Q}_1,\ldots,\mathbfit{Q}_K\right)$.
\begin{assm}[User Utility Properties]\label{ukProp} The user utility function $u_k(\mathbfit{x}^k\left(\mathbfit{Q}_1,\ldots,\mathbfit{Q}_K\right))$ has the following two properties:
\begin{enumerate}
	\item $u_k$ is strictly monotonically increasing in the power gain $x_{k,k}(\mathbfit{Q}_k)$ from its associated transmitter $k$,
	\item $u_k$ is strictly monotonically decreasing in the power gain $x_{l,k}(\mathbfit{Q}_l)$  from transmitter $l\neq k$.
\end{enumerate}
Without loss of generality, we assume $u_k=0$ if and only if $x_{k,k}=0$.
\end{assm}
Typical examples on user utility functions are the signal-to-interference-plus-noise ratio (SINR), the achievable information rate, and the bit error rate.

Each vector $[u_1,\ldots,u_K]^T$ of simultaneously achievable user utilities represents a feasible operating point. The set of all achievable operating points constitutes the utility region $\mathcal{U}\subset\mathbb{R}_+^K$, defined as
\begin{align}
\mathcal{U}:&=\left\{\left[u_1(\mathbfit{x}^1),\ldots,u_K(\mathbfit{x}^K)\right]^T:\mathbfit{Q}_k\in\mathcal{Q},\forall k\right\}.
\end{align}
Note that there is no total order of the utility vectors in $\mathcal{U}$. However, we can find efficient operating points in $\mathcal{U}$ which are preferable because they are not dominated by any other feasible point. These  points are called Pareto optimal and have the characteristic property that it is impossible to improve the utility of one user without simultaneously degrading the utility of at least one other user.
\begin{defn}[Pareto Optimality]
A point $\mathbfit{u}\in\mathcal{U}$ is Pareto optimal if there is no other tuple $\mathbfit{u}'\in\mathcal{U}$ such that $\mathbfit{u}' \geq \mathbfit{u}$, where the inequality is component-wise and strict for at least one component. The set of all Pareto optimal operating points constitutes the \emph{Pareto boundary} $\mathcal{PB}(\mathcal{U})$.
\end{defn}
In \cite{5426006}, \cite{5504193} it is shown that single-stream beamforming (i.e. signal correlation matrices $\mathbfit{Q}_k$ with $\mathrm{rank}(\mathbfit{Q}_k)\leq 1$) is sufficient for achieving all Pareto optimal points. An alternative proof based on the power gain region is made in \cite{CIT-069}, \cite{5643183}. However, the underlying proof turned out to be incomplete as illustrated in Appendix \ref{prfParOpt}.
\begin{thm}[Sufficiency of Single-Stream Beamforming for Pareto Optimality] \label{thm1}
All Pareto optimal points in the utility region $\mathcal{U}$ can be achieved using single-stream beamforming.
\end{thm}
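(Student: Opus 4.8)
The plan is to show that every achievable utility vector is componentwise no larger than one achievable with rank\nobreakdash-$\le 1$ transmit covariances; since by definition no point of $\mathcal{PB}(\mathcal{U})$ is strictly dominated, the Pareto optimal points must then themselves be attained by single-stream beamforming. So let $\mathbfit{u}^\star\in\mathcal{PB}(\mathcal{U})$ be achieved by $(\mathbfit{Q}_1^\star,\ldots,\mathbfit{Q}_K^\star)$ with each $\mathbfit{Q}_k^\star\in\mathcal{Q}$, and process the transmitters one at a time. At the start of step $k$ the covariances $\hat{\mathbfit{Q}}_1,\ldots,\hat{\mathbfit{Q}}_{k-1}$ have rank $\le 1$, the rest are still $\mathbfit{Q}_k^\star,\ldots,\mathbfit{Q}_K^\star$, and (inductively) the configuration achieves $\mathbfit{u}^\star$. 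I replace $\mathbfit{Q}_k^\star$ by a rank\nobreakdash-$\le 1$ matrix $\hat{\mathbfit{Q}}_k\in\mathcal{Q}$ that does not decrease the useful power gain $x_{k,k}$ and does not increase any cross power gain $x_{k,l}$, $l\neq k$. Such a replacement changes only the $k$\nobreakdash-th column of the power-gain matrix $\mathbfit{X}$, i.e.\ it can only raise the own-signal argument $x_{k,k}$ of $u_k$ and can only lower the transmitter-$k$ interference argument $x_{k,l}$ of each $u_l$, $l\neq k$; by Assumption~\ref{ukProp} every user utility then weakly increases, and Pareto optimality of $\mathbfit{u}^\star$ forces the utility vector to stay at $\mathbfit{u}^\star$. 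Since the step-$k$ replacement uses only $\mathbfit{Q}_k^\star$ and the channels $\{\mathbfit{h}_{kl}\}_l$, it does not conflict with later steps, and after $K$ rounds all covariances have rank $\le 1$ while the utility vector is still $\mathbfit{u}^\star$.

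The whole argument thus reduces to one rank-reduction claim, which I would establish directly, by shaving off transmit power in a direction orthogonal to the desired channel, rather than through a black-box SDP rank bound. Fix $k$, let
\[
T_k=\{\,\mathbfit{Q}\in\mathcal{Q}\;:\;x_{k,l}(\mathbfit{Q})\le x_{k,l}(\mathbfit{Q}_k^\star)\ \text{for all }l\neq k\,\},
\]
which is nonempty (it contains $\mathbfit{Q}_k^\star$), compact and convex, and set $v^\star=\max_{\mathbfit{Q}\in T_k}x_{k,k}(\mathbfit{Q})\ge x_{k,k}(\mathbfit{Q}_k^\star)$, attained by compactness. Among the maximizers pick one, $\hat{\mathbfit{Q}}_k$, of minimal rank $r$. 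If $v^\star=0$ then $\mathbfit{Q}=\mathbf{0}$ is a rank\nobreakdash-$0$ maximizer and we are done; otherwise $r\ge 1$ and I claim $r=1$. Assume $r\ge 2$ and let $V=\mathrm{range}(\hat{\mathbfit{Q}}_k)$, so $\dim V=r$. From $x_{k,k}(\hat{\mathbfit{Q}}_k)=\mathbfit{h}_{kk}^H\hat{\mathbfit{Q}}_k\mathbfit{h}_{kk}=v^\star>0$ the orthogonal projection of $\mathbfit{h}_{kk}$ onto $V$ is nonzero, so (as $\dim V\ge 2$) there is a unit vector $\mathbfit{u}\in V$ with $\mathbfit{h}_{kk}^H\mathbfit{u}=0$. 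Then $\mathbfit{Q}(t)=\hat{\mathbfit{Q}}_k-t\,\mathbfit{u}\mathbfit{u}^H$ stays positive semidefinite for $t\in[0,t^\star]$ with $t^\star>0$ and $\mathrm{rank}(\mathbfit{Q}(t^\star))=r-1$; along this segment $\mathrm{tr}(\mathbfit{Q}(t))$ only decreases, so $\mathbfit{Q}(t^\star)\in\mathcal{Q}$; each $x_{k,l}(\mathbfit{Q}(t))=x_{k,l}(\hat{\mathbfit{Q}}_k)-t\,|\mathbfit{h}_{kl}^H\mathbfit{u}|^2$ only decreases, so $\mathbfit{Q}(t^\star)\in T_k$; and $x_{k,k}(\mathbfit{Q}(t))\equiv v^\star$ because $\mathbfit{h}_{kk}^H\mathbfit{u}=0$. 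Hence $\mathbfit{Q}(t^\star)$ is a maximizer of rank $r-1$, contradicting minimality, so $r\le 1$; taking $\hat{\mathbfit{Q}}_k$ to be this maximizer closes the induction above.

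I expect the rank-reduction step to be the only real obstacle, and — as the paper notes — it is precisely where the earlier power-gain-region argument of \cite{CIT-069},\cite{5643183} is incomplete: the problem $\max_{\mathbfit{Q}\in T_k}x_{k,k}(\mathbfit{Q})$ is a semidefinite program with $K$ affine constraints, whose solutions need only have rank $O(\sqrt{K})$ in general, so no generic bound delivers rank $\le 1$, and one must exploit the special structure that the objective matrix $\mathbfit{h}_{kk}\mathbfit{h}_{kk}^H$ and every constraint matrix $\mathbfit{h}_{kl}\mathbfit{h}_{kl}^H$ are rank one, the first being maximized and the others upper bounded — which is exactly what the direct argument above uses. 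The remaining ingredients are routine: attainment of $v^\star$ from compactness of $T_k$, the index bookkeeping for the $K$ sequential swaps (changing column $k$ of $\mathbfit{X}$ touches only the desired-signal entry of $u_k$ and the transmitter-$k$ interference entry of each other $u_l$), and the degenerate case $v^\star=0$ (then $x_{k,k}(\mathbfit{Q}_k^\star)=0$, hence $u_k=0$ by Assumption~\ref{ukProp}, and the zero beamformer — still a single-stream strategy — suffices). Finally, since rank\nobreakdash-$0$ covariances are subsumed under single-stream beamforming ($\mathrm{rank}(\mathbfit{Q}_k)\le 1$), the theorem follows as stated.
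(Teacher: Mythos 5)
Your argument is correct, and it takes a genuinely different route from the paper. The paper's proof (Appendix \ref{prfParOpt}) first reduces Pareto optimality to membership of each $\mathbfit{x}_k$ in the outer boundary part $\partial_0^{\mathbfit{e}_k}\Omega_k$, and then must show that \emph{every} point of every exposed face $\Phi_{\Omega_k}(\boldsymbol\eta)$ with normal $\boldsymbol\eta\in\mathcal{T}_k$ is rank-one achievable; this is exactly where the earlier literature was incomplete, and the paper repairs it with joint-numerical-range machinery: Weyl's inequality to show the top eigenspace of $\boldsymbol\eta^T\mathbfit{H}$ is one-dimensional when $s_{\Omega(\mathbfit{H})}(\boldsymbol\eta)>0$ (so the exposed face is a singleton), plus a separate limit argument with scaled numerical ranges for the conical boundary part (Proposition \ref{prop3893883}). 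You bypass that difficulty altogether by not insisting on reproducing the same power-gain vector: you only produce, transmitter by transmitter, a rank-$\le 1$ covariance that weakly raises $x_{k,k}$ and weakly lowers each $x_{k,l}$, $l\neq k$, and then let Assumption \ref{ukProp} plus Pareto optimality of the utility point force the utility vector to be unchanged. Your rank-reduction step (shaving a rank-one term $t\,\mathbfit{u}\mathbfit{u}^H$ with $\mathbfit{u}\in\mathrm{range}(\hat{\mathbfit{Q}}_k)$, $\mathbfit{h}_{kk}^H\mathbfit{u}=0$, off a minimal-rank maximizer of the interference-constrained SDP) is a standard purification argument and is sound: trace and all constrained gains only decrease, the objective is preserved, and the rank strictly drops, contradicting minimality. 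What your proof buys is brevity and elementarity — no support functions, exposed faces, or eigenvalue-separation arguments, and the non-singleton exposed-face issue never arises. What the paper's proof buys is a stronger geometric fact — that every exposed-face point with normal in $\mathcal{T}_k$ is itself achieved by a rank-one matrix, i.e., the \emph{exact} power-gain operating point is reconstructed — and this fact is reused verbatim in the proof of Theorem \ref{thm2} for stationary points, whereas your construction only guarantees a (weakly better) operating point with the same utilities, which suffices for Theorem \ref{thm1} but would need the Pareto-domination argument replaced by a stationarity argument to cover Theorem \ref{thm2}.
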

\begin{proof}
The proof, completing the earlier arguments, is provided in Appendix \ref{prfParOpt}.
\end{proof}

\subsection{Sum Utility Problem and Optimality Conditions} \label{SupOptCond}
By introducing a system utility function $U:\mathcal{U}\rightarrow\mathbb{R}$, we impose a subjective order on the elements in $\mathcal{U}$. Herein, we focus on the dependency of $U(u_1,\ldots,u_K)$ with respect to the power gains $\mathbfit{X}\in\Omega_1\times\ldots\times\Omega_K$. For brevity we write $U(\mathbfit{X})$ instead of the function composition $\left(U\circ(u_1,\ldots,u_K)\right)(\mathbfit{X})$.

\begin{assm}[System Utility Properties]\label{Uprop}
The system utility function $U(\mathbfit{X})$ is defined as the sum of the user utilities; that is, $U(\mathbfit{X})=\sum_{k=1}^K u_k(\mathbfit{x}^k)$. The function $U$ has the following two properties:
\begin{enumerate} \label{assm:twicediff}
  \item $U(\mathbfit{X})$  is twice differentiable over $\Omega_1\times\ldots\times\Omega_K$
	\item $U(\mathbfit{X})$ is bounded from above over $\Omega_1\times\ldots\times\Omega_K$
\end{enumerate}
\end{assm}
\begin{rem}\label{remSingular} Many typical system utility functions (e.g., weighted proportional fairness, weighted harmonic mean) admit equivalent sum utility formulations that satisfy Assumptions \ref{ukProp} and \ref{Uprop}. An example is given in Appendix \ref{ExamplePF}. If the corresponding utility functions are not differentiable at $x_{k,k}=0$, then we have to restrict the optimization domain $\Omega_k$ as follows: For some $\mu\in\mathbb{R}_{++}$, we define
\begin{align}
\mathbb{R}_{k,\mu}^K=\left\{\left[y_1,\ldots,y_K\right]^T: y_k\in[\slfrac{1}{\mu},\infty],y_l\in\mathbb{R}_+,\forall l\neq k\right\}\nonumber
\end{align} 
and  $\Omega_{k,\mu}=\Omega_k \cap \mathbb{R}_{k,\mu}^K$. Consequently, the sum utility function $U$ is twice differentiable on the compact convex set $\Omega_{1,\mu}\times\ldots\times\Omega_{K,\mu}$. Note that $\lim_{\mu\rightarrow\infty} \Omega_{k,\mu} = \Omega_{k}$ so that this restriction becomes negligible for large $\mu$.
\end{rem}
\vspace{0.25cm}
The beamforming optimization problem is given by
\begin{align}
\max_{\mathbfit{X}} U(\mathbfit{X})\;\mathrm{s.\,t.}\;  \mathbfit{X} \in \Omega_1\times\ldots\times\Omega_K.\tag{$\mathrm{P}_0$} \label{bfOptProb} \nonumber 
\end{align}
As already mentioned, Problem \eqref{bfOptProb} is non-convex and NP-hard. Due to the convexity of $\Omega_k,\forall k$, we can formulate a necessary condition for the optimal solution to Problem \eqref{bfOptProb}, which is also a sufficient condition when the sum utility $U$ is convex with respect to $\mathbfit{X}$. 
\begin{prop}[Optimality Condition, \cite{bertsekas1995nonlinear} Proposition 2.1.2]\label{propStatCond}
If $\mathbfit{X}^*=[\mathbfit{x}_1^*,\ldots,\mathbfit{x}_K^*]$ is a local maximum of $U$ over $\Omega_1\times\ldots\times\Omega_K$, then we have
\begin{align}
\nabla_k U(\mathbfit{X}^*)\left(\mathbfit{x}-\mathbfit{x}_k^*\right)\leq 0,\forall \mathbfit{x}\in \Omega_k,\forall k, \label{eq:optcond1} \tag{$\mathrm{C}_0$}  \nonumber 
\end{align}
where $\nabla_{k} U$ denotes the gradient vector of $U$ with respect to $\mathbfit{x}_k$, given by
\begin{align}
\nabla_{k} U\left(\mathbfit{X}\right)=[\slfrac{\partial{U(\mathbfit{X})}}{\partial{x_{k,1}}}, \ldots,\slfrac{\partial{U(\mathbfit{X})}}{\partial{x_{k,K}}}]. \nonumber
\end{align}
\end{prop}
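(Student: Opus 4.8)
The plan is to treat this as the standard first-order necessary condition for a local maximum over a convex set, specialized to the product constraint $\Omega_1\times\cdots\times\Omega_K$; the only hypotheses I expect to use are the convexity of each $\Omega_k$ and the differentiability of $U$ asserted in Assumption~\ref{Uprop}.

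First I would fix an index $k$ and an arbitrary $\mathbfit{x}\in\Omega_k$ and build a one-block feasible perturbation of $\mathbfit{X}^*$: let $\mathbfit{X}(t)$ coincide with $\mathbfit{X}^*$ in every block except the $k$-th, whose value I set to $\mathbfit{x}_k^*+t(\mathbfit{x}-\mathbfit{x}_k^*)$. Since $\Omega_k$ is convex and contains both $\mathbfit{x}_k^*$ and $\mathbfit{x}$, the perturbed block lies in $\Omega_k$ for every $t\in[0,1]$, so $\mathbfit{X}(t)\in\Omega_1\times\cdots\times\Omega_K$ on $[0,1]$ and $\mathbfit{X}(0)=\mathbfit{X}^*$. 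I would then pass to the scalar function $\phi(t):=U(\mathbfit{X}(t))$ on $[0,1]$: by Assumption~\ref{Uprop} it is differentiable, and since only the $k$-th block moves and it moves along the fixed direction $\mathbfit{x}-\mathbfit{x}_k^*$, the chain rule yields $\phi'(0)=\nabla_k U(\mathbfit{X}^*)(\mathbfit{x}-\mathbfit{x}_k^*)$.

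To close the argument I would use that $t=0$ is a local maximizer of $\phi$ on $[0,1]$: for all small $t>0$ the point $\mathbfit{X}(t)$ is feasible and arbitrarily close to $\mathbfit{X}^*$, so local optimality of $\mathbfit{X}^*$ forces $\phi(t)=U(\mathbfit{X}(t))\le U(\mathbfit{X}^*)=\phi(0)$; hence the one-sided derivative satisfies $\phi'(0)\le 0$, that is, $\nabla_k U(\mathbfit{X}^*)(\mathbfit{x}-\mathbfit{x}_k^*)\le 0$. Since $k$ and $\mathbfit{x}\in\Omega_k$ were arbitrary, \eqref{eq:optcond1} follows.

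I do not anticipate a genuine obstacle here — this is the textbook variational-inequality argument (cf. \cite{bertsekas1995nonlinear}, Proposition~2.1.2). The two points I would make sure to spell out are that it is precisely the \emph{product} structure of the constraint set that allows freezing all blocks but one, which is what produces the stationarity condition separately for every $k$, and that, with $U$ assumed merely differentiable, \eqref{eq:optcond1} is necessary but in general not sufficient for (local) optimality. If the utility functions fail to be differentiable at $x_{k,k}=0$, the identical argument applies on the restricted domain $\Omega_{1,\mu}\times\cdots\times\Omega_{K,\mu}$ introduced in Remark~\ref{remSingular}.
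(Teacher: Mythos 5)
Your argument is correct and is exactly the standard one-dimensional variational argument behind the cited result (\cite{bertsekas1995nonlinear}, Proposition 2.1.2), which the paper invokes without reproducing a proof; freezing all blocks but the $k$-th and exploiting the product structure of $\Omega_1\times\cdots\times\Omega_K$ is precisely how the per-block condition \eqref{eq:optcond1} is obtained. No gaps — your caveats about one-sidedness of the derivative at $t=0$ and the restricted domain $\Omega_{k,\mu}$ of Remark~\ref{remSingular} are the right ones.
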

Note that Condition \eqref{eq:optcond1} is sufficient for all stationary points of Problem \eqref{bfOptProb}, which are of special interest because these can be easily found with a gradient-based algorithm. The next theorem establishes an important property of stationary points.
\begin{thm}[Sufficiency of Single-Stream Beamforming for Stationary Points]\label{thm2}
All stationary points $\mathbfit{X}^*=[\mathbfit{x}_1^*,\ldots,\mathbfit{x}_K^*]$ of Problem \eqref{bfOptProb} can be achieved using single-stream beamforming.
A set of corresponding beamforming vectors can be found as follows: Let $\left(\mathbfit{Q}_1^*,\ldots,\mathbfit{Q}_K^*\right)$ be the tuple of (possibly high rank) correlation matrices\footnote{These correlation matrices can be obtained by the scaled gradient projection algorithm as described in Section \ref{ADBF}.} that achieve the stationary point $\mathbfit{X}^*$. For each $k$, a corresponding beamforming vector $\mathbfit{w}_k^*$ can be approached as follows:
\begin{enumerate}
	\item If $\mathrm{rank}(\mathbfit{Q}_k^*)\leq 1$ then $\mathbfit{w}_k^*=\sqrt{\lambda_1(\mathbfit{Q}_k^*)}\cdot v_\mathrm{max}(\mathbfit{Q}_k^*)$.
	\item If $\mathrm{rank}(\mathbfit{Q}_k^*)>1$ then $\mathbfit{w}_k^*$ is given by the solution of the convex optimization problem
\begin{align}
\min_{\mathbfit{w}_k} &-\Re(\mathbfit{h}_{kk}^H\mathbfit{w}_k) \tag{$\mathrm{P}_1$} \label{findBfVecProb}  \nonumber \\
\mathrm{s.\,t.}\;  &\left|\mathbfit{h}_{kl}^H\mathbfit{w}_k\right|^2\leq x_{k,l}(\mathbfit{Q}_k^*),\forall l\neq k\nonumber\\
&\left\|\mathbfit{w}_k\right\|^2\leq 1.\nonumber
\end{align}
\end{enumerate}
\end{thm}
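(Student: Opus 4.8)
\emph{Proof plan.} The plan is to read the stationarity condition \eqref{eq:optcond1} as a family of linear programs over the sets $\Omega_k$ and to exploit the rigid sign pattern that Assumption~\ref{ukProp} imposes on the gradient. Fix a stationary point $\mathbfit{X}^*=[\mathbfit{x}_1^*,\ldots,\mathbfit{x}_K^*]$ with generating correlation matrices $\mathbfit{Q}_1^*,\ldots,\mathbfit{Q}_K^*$. By \eqref{eq:optcond1}, each block $\mathbfit{x}_k^*$ maximizes the linear functional $\mathbfit{x}\mapsto\nabla_k U(\mathbfit{X}^*)\,\mathbfit{x}$ over the compact convex set $\Omega_k$. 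Since only $u_k$ depends on $x_{k,k}$ and only $u_l$ on $x_{k,l}$ for $l\neq k$, Assumption~\ref{ukProp} forces the $k$-th entry of $\nabla_k U(\mathbfit{X}^*)$ to be strictly positive and all other entries strictly negative. A one-line manipulation using this sign pattern then shows that $\mathbfit{x}_k^*$ is \emph{also} an optimal point of ``maximize $x_{k,k}$ over $\mathbfit{x}\in\Omega_k$ subject to $x_{k,l}\leq x_{k,l}^*$ for all $l\neq k$'', with optimal value $x_{k,k}^*$: for any feasible $\mathbfit{x}$, positivity and negativity of the gradient entries together with $x_{k,l}\leq x_{k,l}^*$ give $\nabla_k U(\mathbfit{X}^*)\mathbfit{x}\leq\nabla_k U(\mathbfit{X}^*)\mathbfit{x}_k^*$, hence $x_{k,k}\leq x_{k,k}^*$, while $\mathbfit{x}_k^*$ itself is feasible and attains equality.

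Writing $x_{k,l}(\mathbfit{Q}_k)=\mathbfit{h}_{kl}^H\mathbfit{Q}_k\mathbfit{h}_{kl}$ and recalling $\Omega_k=\{\mathbfit{x}_k(\mathbfit{Q}_k):\mathbfit{Q}_k\in\mathcal{Q}\}$, this auxiliary problem is, up to the global phase reduction (passing from $-\Re(\mathbfit{h}_{kk}^H\mathbfit{w}_k)$ to $|\mathbfit{h}_{kk}^H\mathbfit{w}_k|$ leaves every constraint of \eqref{findBfVecProb} unchanged, the constraints being phase invariant), exactly the semidefinite relaxation of \eqref{findBfVecProb}. We have just shown that this relaxation has optimal value $x_{k,k}^*$, attained by the possibly high-rank $\mathbfit{Q}_k^*$. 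Case~1 of the theorem is then immediate ($\mathbfit{w}_k^*(\mathbfit{w}_k^*)^H=\mathbfit{Q}_k^*$). For Case~2 the crucial step is that the relaxation is \emph{tight}: it has an optimal solution of rank at most one, equivalently \eqref{findBfVecProb} is solved by some $\mathbfit{w}_k^*$ with $|\mathbfit{h}_{kk}^H\mathbfit{w}_k^*|^2=x_{k,k}^*$, $|\mathbfit{h}_{kl}^H\mathbfit{w}_k^*|^2\leq x_{k,l}^*$ for $l\neq k$, and $\|\mathbfit{w}_k^*\|^2\leq1$. I expect this to be the main obstacle; it is the only place where the rank-one structure of the data matrices $\mathbfit{h}_{kl}\mathbfit{h}_{kl}^H$ is used essentially, and it can be obtained from the KKT/complementary-slackness structure of the semidefinite program, as is classical for MISO beamforming problems with linear (interference-temperature) shaping constraints, or by an explicit rank reduction on the optimal face.

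Granting tightness, it remains to verify that the rank-one matrix $\widehat{\mathbfit{Q}}_k=\mathbfit{w}_k^*(\mathbfit{w}_k^*)^H\in\mathcal{Q}$ reproduces $\mathbfit{x}_k^*$ exactly, not merely a dominating point. Its power gain vector $\widehat{\mathbfit{x}}_k=\mathbfit{x}_k(\widehat{\mathbfit{Q}}_k)\in\Omega_k$ satisfies $\widehat{x}_{k,k}\geq x_{k,k}^*$ and $\widehat{x}_{k,l}\leq x_{k,l}^*$ for $l\neq k$. Evaluating the functional $\nabla_k U(\mathbfit{X}^*)$ at $\widehat{\mathbfit{x}}_k$: in every coordinate the gradient entry and the coordinate gap $\widehat{x}_{k,l}-x_{k,l}^*$ have opposite signs, so each term of $\nabla_k U(\mathbfit{X}^*)\widehat{\mathbfit{x}}_k-\nabla_k U(\mathbfit{X}^*)\mathbfit{x}_k^*$ is nonnegative; but $\mathbfit{x}_k^*$ maximizes this functional over $\Omega_k$, so the sum is zero, and strict positivity/negativity of the entries forces $\widehat{\mathbfit{x}}_k=\mathbfit{x}_k^*$. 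Doing this for every $k$ shows $\mathbfit{X}^*$ is achieved by the single-stream beamformers $\mathbfit{w}_1^*,\ldots,\mathbfit{w}_K^*$; the formula $\mathbfit{w}_k^*=\sqrt{\lambda_1(\mathbfit{Q}_k^*)}\,v_\mathrm{max}(\mathbfit{Q}_k^*)$ of Case~1 is the degenerate instance of the same recipe. Apart from the tightness step everything is bookkeeping with the monotonicity assumption; the one point to treat carefully is that ``strictly monotone'' must be read so that $\nabla_k U(\mathbfit{X}^*)$ is genuinely strictly signed (automatic for the SINR and rate utilities), otherwise the restriction to $\Omega_{k,\mu}$ of Remark~\ref{remSingular} is invoked.
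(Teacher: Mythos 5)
Your route is essentially correct but genuinely different from the paper's. The paper proves the rank-one achievability of $\mathbfit{x}_k^*$ through the convex geometry of the power gain region: stationarity is rewritten as $\mathbfit{x}_k^*\in\Phi_{\Omega(\mathbfit{H})}(\boldsymbol\eta^*)$ with $\boldsymbol\eta^*\in\mathcal{T}_k$, and Proposition \ref{prop3893883} shows via Weyl's eigenvalue inequality that such exposed faces are singletons generated by a one-dimensional dominant eigenspace whenever $s_{\Omega(\mathbfit{H})}(\boldsymbol\eta^*)>0$, with a separate limiting argument for the conical boundary; the extraction problem \eqref{findBfVecProb} is then obtained by relaxing the feasibility problem and is only shown to be feasible and utility-nondecreasing. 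You instead exploit stationarity directly: the sign pattern of $\nabla_k U(\mathbfit{X}^*)$ makes $\mathbfit{x}_k^*$ optimal for the rank-relaxed version of \eqref{findBfVecProb} with value $x_{k,k}^*$, and your closing argument (opposite signs of gradient entries and coordinate gaps force $\widehat{\mathbfit{x}}_k=\mathbfit{x}_k^*$) is correct; it even yields a sharper conclusion than the paper's remark, namely that at a genuine stationary point the solution of \eqref{findBfVecProb} reproduces $\mathbfit{x}_k^*$ exactly. Your proof needs neither the joint-numerical-range machinery nor the singleton/conical-boundary analysis; what the paper's geometric route buys in exchange is that Proposition \ref{prop3893883} is anyway required to repair the Pareto-boundary argument of Theorem \ref{thm1}, so Theorem \ref{thm2} comes almost for free once that proposition is in place.

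The one place where your plan is thinner than it should be is the tightness step, which you correctly flag as the main obstacle but then dispatch with generic pointers. Note that ``rank reduction on the optimal face'' in the Pataki/Barvinok sense only guarantees an optimal solution of rank on the order of $\sqrt{K}$ for an SDP with $K$ linear constraints, so it does not by itself give rank one; what is essential is the rank-one objective matrix $\mathbfit{h}_{kk}\mathbfit{h}_{kk}^H$. The cleanest way to close the step is an explicit construction rather than KKT bookkeeping: given any optimal $\mathbfit{Q}_k^\star$ of the relaxation, set $\mathbfit{w}_k=\mathbfit{Q}_k^\star\mathbfit{h}_{kk}\big/\big(\mathbfit{h}_{kk}^H\mathbfit{Q}_k^\star\mathbfit{h}_{kk}\big)^{1/2}$. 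Then the objective is preserved, Cauchy--Schwarz in the $\mathbfit{Q}_k^\star$-weighted inner product gives $\left|\mathbfit{h}_{kl}^H\mathbfit{w}_k\right|^2\leq \mathbfit{h}_{kl}^H\mathbfit{Q}_k^\star\mathbfit{h}_{kl}\leq x_{k,l}^*$ for $l\neq k$, and $\left(\mathbfit{Q}_k^\star\right)^2\preceq \mathrm{tr}(\mathbfit{Q}_k^\star)\,\mathbfit{Q}_k^\star$ gives $\left\|\mathbfit{w}_k\right\|^2\leq \mathrm{tr}(\mathbfit{Q}_k^\star)\leq 1$. With this lemma inserted, your argument is complete and arguably more elementary and self-contained than the paper's proof of this theorem.
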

\begin{proof} The proof is provided in Appendix \ref{prfUtilOpt}.
\end{proof}
\begin{rem}
In our numerical experiments (see Section \ref{SimRes}), the first case (i.e., $\mathrm{rank}(\mathbfit{Q}_k^*)\leq 1$) always occurred; that is, the second case is mainly for the sake of mathematical completeness. Furthermore, if $\mathrm{rank}(\mathbfit{Q}_k^*)>1$ then we do not necessarily find the exact beamforming vector $\mathbfit{w}_k^*$ which generates the power gain vector $\mathbfit{x}_k(\mathbfit{Q}_k^*)$, as described in the proof. However, for the resulting sum utility $U$ it always holds that $U\left(\mathbfit{X}\left(\mathbfit{Q}_1^*,\ldots,\mathbfit{Q}_K^*\right)\right)\leq U(\mathbfit{X}(\mathbfit{w}_1^*(\mathbfit{w}_1^*)^H,\ldots,\mathbfit{w}_K^*(\mathbfit{w}_K^*)^H))$. The strict inequality can occur when $\mathbfit{X}^*$ is not a local maximum so that it may be possible  
\begin{enumerate}
	\item to increase the useful signal power $x_{k,k}$ in Problem \eqref{findBfVecProb} without violating the interference constraints $ x_{k,l}(\mathbfit{Q}_k^*),\forall l\neq k$,
	\item to satisfy at least one interference constraint $x_{k,l}(\mathbfit{Q}_k^*)$ in Problem \eqref{findBfVecProb} with strict inequality.
\end{enumerate}
By Assumption \ref{ukProp}, each case will yield an increased $u_k$ for some $k$ and thus an increased sum utility $U$.
\end{rem}

\section{Distributed Asynchronous Optimization} \label{ADBF}
The structure of Problem \eqref{bfOptProb} admits a (spatially) distributed implementation whereby the transmitters solve local subproblems and exchange interim computation results via a backhaul network. None of the transmitters possess all relevant information, and there exist communication delays between the transmitters. Following \cite{Bertsekas:1989:PDC}, an algorithm is said to experience a substantial \textit{synchronization penalty} if the waiting time due to communication delays as well as due to specific computation sequences is a sizable fraction of the total time needed to solve the problem. In that case, an asynchronous implementation can often substantially reduce the synchronization penalty because there is no requirement for waiting at predetermined points. Another advantage is that a global synchronization mechanism is not necessary.\\
We start with the derivation of the synchronous distributed implementation, which serves as a reference solution. Thereafter, we introduce the asynchronous computation model and elaborate on the algorithm's convergence and rate of convergence.
For ease of notation, we omit the dependence of $\mathbfit{x}_k$ on $\mathbfit{Q}_k$. In order to distinguish variable values at different time instants, we introduce the iteration index $n$ as an argument (e.g., the value of $\mathbfit{x}_k$ at time instant $n$ is denoted by $\mathbfit{x}_k(n)$).


\subsection{Synchronous Scaled Gradient Projection Algorithm}\label{syncSGP}
Due to the separability of the constraint set $\Omega_1\times\ldots\times\Omega_K$, we can split Problem \eqref{bfOptProb} into $K$ coupled subproblems, which are iteratively solved by the individual transmitters. The $k$-th subproblem at iteration index $n$ solves for the improved transmit power gain vector $\mathbfit{x}_k(n+1)$, and is given by
\begin{align}
\mathbfit{x}_k(n+1)=\arg\max_{\mathbfit{x}_k} U(\mathbfit{X}(n))\;\mathrm{s.\,t.}\;  \mathbfit{x}_k \in \Omega_k.  \label{eq_parOptUpdate}
\end{align}
The convergence of the sequences $\left\{\mathbfit{x}_k(n)\right\},\forall k$, generated by the nonlinear equation \eqref{eq_parOptUpdate}, can not be guaranteed because $U$ is generally non-convex (i.e., \eqref{eq_parOptUpdate} can not be formulated as a (pseudo-) contraction iteration\footnote{An iterative algorithm of the form $x(n+1)=T(x(n)),n=0,1,\ldots,$ is called contraction iteration if the mapping $T:\mathcal{X}\rightarrow\mathcal{X}$ has the property $\left\|T(x)-T(y)\right\|\leq \alpha \left\|x-y\right\|,\forall x,y\in\mathcal{X}$ with $\alpha\in[0,1)$.
Contraction iterations are of particular interest because there exists general results on the existence and uniqueness of fixed points, see \cite[Section 3.1]{Bertsekas:1989:PDC}.}).
However, convergence to a limit point can be established for linearized algorithms where the variable update is a linear function of $\nabla U(\mathbfit{X})$. Thus, we adopt the scaled gradient projection (SGP) method from \cite[Subsection 3.3.3]{Bertsekas:1989:PDC}, where the update for the $k$-th subproblem is described by the equation
\begin{align}
\mathbfit{x}_k(n+1)&=\left[\mathbfit{x}_k(n)+\gamma_k \mathbfit{M}_k^{-1} \boldsymbol{\lambda}_k(n)\right]_{\mathbfit{M}_k}^{\Omega_k}, \label{eq:updateStep}
\end{align}
using the step size parameter $\gamma_k$, the update direction 
\begin{align}
\boldsymbol{\lambda}_k(n)=&\left[\nabla_k U\left(\mathbfit{X}(n)\right)\right]^T\nonumber\\
=&\left[\frac{\partial{u_1(\mathbfit{x}^1(n))}}{\partial{x_{k,1}}},\ldots,\frac{\partial{u_K(\mathbfit{x}^K(n))}}{\partial{x_{k,K}}}\right]^T \label{eq:currPartDer}, 
\end{align}
and the diagonal\footnote{In general, $\mathbfit{M}_k$ is assumed to be positive definite, at least on a proper subspace in which all update steps \eqref{eq:updateStep} take place. Typically, $\mathbfit{M}_k$ would be chosen to approximate the Hessian matrix $\nabla_k^2 U\left(\mathbfit{X}(n)\right)$ such as done in the projected Jacobi method where $\mathbfit{M}_k$ is a diagonal matrix, with its diagonal entries equal to the diagonal entries of $\nabla_k^2 U\left(\mathbfit{X}(n)\right)$. For the considered sum utility problem, the Hessian is always a diagonal matrix. Furthermore, the use of diagonal scaling matrices facilitates the proof of Theorem \ref{thm:tsit}.} 
scaling matrix $\mathbfit{M}_k=\mathrm{diag}\left(\beta_{k,1},\ldots,\beta_{k,K}\right)$ with $\beta_{kl}\in\mathbb{R}_{++},\forall l$.
We use the notation $\left[\mathbfit{x}\right]_{\mathbfit{M}_k}^{\Omega_k}$ to denote the scaled projection (with respect to Euclidean norm) of the vector $\mathbfit{x}\in\mathbb{R}^K$ onto the convex set $\Omega_k$, see Subsection \ref{pgProject}.

As illustrated in Fig. \ref{fig:interactions}, the subproblems are coupled by the power gains and the partial derivatives, which are iteratively calculated and exchanged between the transmitters. For instance, the $l$-th component of the vector $\boldsymbol{\lambda}_k(n)$ is computed by the $l$-th transmitter, which in turn requires the knowledge of all receive power gains $x_{j,l},\forall j$. Consequently, each pair of transmitters $(k,l)$ needs to exchange the two (real-valued) power gains $(x_{k,l},x_{l,k})$ and the two (real-valued) derivatives $(\partial u_k/\partial x_{l,k},\partial u_l/\partial x_{k,l})$, to accomplish the local update \eqref{eq:updateStep}.

\begin{figure}[t]
\centering
\includegraphics[width=0.5\textwidth]{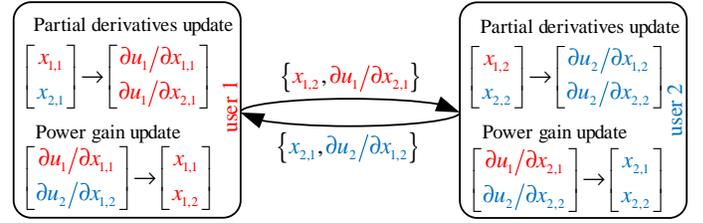}
\caption{Information exchange and dependencies for the two user case; i.e., $U=u_1(\mathbfit{x}^1)+u_2(\mathbfit{x}^2)$ with $\mathbfit{x}^1=[x_{1,1},x_{2,1}]$ and $\mathbfit{x}^2=[x_{1,2},x_{2,2}]$.}
\label{fig:interactions}
\end{figure}

The \textit{synchronous SGP} algorithm is summarized as follows:
\begin{enumerate}
	\item \textit{Initialization}: Each transmitter $k$ chooses an initial power gain vector $\mathbfit{x}_k(0) \in \Omega_k$. Set iteration index to $n=0$.
	\item \textit{Gradient Update}: Each transmitter $k$ calculates the set of current partial derivatives ${\partial{u_k(\mathbfit{x}^k(n))}}/{\partial{x_{l,k}}},\forall l$, and communicates the $l$-th element to the $l$-th transmitter.
	\item \textit{Update Step}: Each transmitter $k$ calculates the new power gain vector $\mathbfit{x}_k(n+1)$ according to \eqref{eq:updateStep}, and communicates the $l$-th component of $\mathbfit{x}_k(n+1)$ to the $l$-th transmitter.
	\item Increment $n$ and repeat from step 2).
\end{enumerate}
If all transmitters wait until they have acquired the most recent information and perform their update steps concurrently at the same iteration index then the algorithm is mathematically equivalent to the centralized SGP method (cf. \cite[Equation (3.6)]{Bertsekas:1989:PDC}) and the corresponding convergence result is applicable:
\begin{prop}[Convergence of the SGP Algorithm, \cite{Bertsekas:1989:PDC} Proposition 3.7 (h)]
If the step size $\gamma$ is chosen small enough, then any limit point $\mathbfit{X}^*=[\mathbfit{x}_1,\ldots,\mathbfit{x}_K^*]$ of the sequence $\left\{\mathbfit{X}(n)\right\}$, generated by the centralized SGP algorithm, satisfies the stationarity conditions \eqref{eq:optcond1}. If $U$ is also convex on the set $\Omega_1\times\ldots\times\Omega_K$ then $\mathbfit{X}^*$ is the global maximizer. 
\end{prop}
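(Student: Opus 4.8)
The plan is to recognize the update \eqref{eq:updateStep} as a single (block) scaled gradient-projection step for the smooth maximization \eqref{bfOptProb} over the compact convex product set $\Omega:=\Omega_1\times\cdots\times\Omega_K$, and to run through the classical convergence argument for such iterations. First I would stack the per-transmitter updates into the joint iteration $\mathbfit{X}(n{+}1)=\bigl[\mathbfit{X}(n)+\gamma\,\mathbfit{M}^{-1}\nabla U(\mathbfit{X}(n))\bigr]_{\mathbfit{M}}^{\Omega}$ with block-diagonal $\mathbfit{M}=\mathrm{diag}(\mathbfit{M}_1,\ldots,\mathbfit{M}_K)$ and common step size $\gamma$; since $\Omega$ is a Cartesian product the weighted projection decouples blockwise, so this joint iteration is exactly the concurrent execution of \eqref{eq:updateStep}. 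I will work on the (possibly proper) subspace on which every $\mathbfit{M}_k$ is positive definite — by construction of the scaled projection all iterates remain in it — so that $\mathbfit{M}$ has a uniform smallest eigenvalue $m>0$ there; if the user utilities are singular at $x_{k,k}=0$ the domain is first restricted to $\Omega_{1,\mu}\times\cdots\times\Omega_{K,\mu}$ as in Remark \ref{remSingular}, so that by the twice-differentiability hypothesis of Assumption \ref{Uprop} together with compactness, $\nabla U$ is Lipschitz on $\Omega$ with some constant $L<\infty$.

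The central step is a sufficient-ascent inequality. The variational characterization of the scaled projection gives, for $\mathbfit{X}^+:=\mathbfit{X}(n{+}1)$, the inequality $\bigl(\mathbfit{X}^+-\mathbfit{X}(n)-\gamma\mathbfit{M}^{-1}\nabla U(\mathbfit{X}(n))\bigr)^T\mathbfit{M}\,(\mathbfit{Y}-\mathbfit{X}^+)\ge 0$ for all $\mathbfit{Y}\in\Omega$; choosing $\mathbfit{Y}=\mathbfit{X}(n)$ and simplifying yields $\nabla U(\mathbfit{X}(n))^T(\mathbfit{X}^+-\mathbfit{X}(n))\ge \tfrac{1}{\gamma}(\mathbfit{X}^+-\mathbfit{X}(n))^T\mathbfit{M}(\mathbfit{X}^+-\mathbfit{X}(n))\ge \tfrac{m}{\gamma}\|\mathbfit{X}^+-\mathbfit{X}(n)\|^2$. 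Combining this with the quadratic bound $U(\mathbfit{X}^+)\ge U(\mathbfit{X}(n))+\nabla U(\mathbfit{X}(n))^T(\mathbfit{X}^+-\mathbfit{X}(n))-\tfrac{L}{2}\|\mathbfit{X}^+-\mathbfit{X}(n)\|^2$ (the descent lemma for an $L$-Lipschitz gradient) gives $U(\mathbfit{X}(n{+}1))\ge U(\mathbfit{X}(n))+\bigl(\tfrac{m}{\gamma}-\tfrac{L}{2}\bigr)\|\mathbfit{X}(n{+}1)-\mathbfit{X}(n)\|^2$. Hence any step size with $0<\gamma<2m/L$ makes $\{U(\mathbfit{X}(n))\}$ nondecreasing with a strictly positive coefficient on the squared increment — this is the quantitative meaning of ``$\gamma$ small enough''.

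With this in hand the rest is routine. Because $U$ is bounded above by the second part of Assumption \ref{Uprop}, the monotone sequence $\{U(\mathbfit{X}(n))\}$ converges, so $\|\mathbfit{X}(n{+}1)-\mathbfit{X}(n)\|\to0$. Let $\mathbfit{X}^*$ be a limit point along a subsequence $\mathbfit{X}(n_j)\to\mathbfit{X}^*$; then also $\mathbfit{X}(n_j{+}1)\to\mathbfit{X}^*$, and passing to the limit in the iteration — using continuity of $\nabla U$ and nonexpansiveness (hence continuity) of the weighted projection — shows $\mathbfit{X}^*=\bigl[\mathbfit{X}^*+\gamma\mathbfit{M}^{-1}\nabla U(\mathbfit{X}^*)\bigr]_{\mathbfit{M}}^{\Omega}$. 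The variational inequality for this fixed point reads $\nabla U(\mathbfit{X}^*)^T(\mathbfit{Y}-\mathbfit{X}^*)\le0$ for all $\mathbfit{Y}\in\Omega$, and since $\Omega$ is a product, letting $\mathbfit{Y}$ differ from $\mathbfit{X}^*$ only in block $k$ recovers $\nabla_k U(\mathbfit{X}^*)(\mathbfit{x}-\mathbfit{x}_k^*)\le0$ for all $\mathbfit{x}\in\Omega_k$, i.e.\ condition \eqref{eq:optcond1}. If in addition $U$ is convex over $\Omega$ as assumed, then \eqref{eq:optcond1} is precisely the first-order characterization of a global maximizer over the convex set $\Omega$, so $\mathbfit{X}^*$ solves \eqref{bfOptProb} globally.

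The main obstacle is the sufficient-ascent step together with the bookkeeping around the scaling matrices: one must verify that each $\mathbfit{M}_k$ need only be positive definite on the subspace where the updates live, that all iterates and the Lipschitz estimate stay confined to that subspace, and that uniform constants $m>0$, $L<\infty$ exist there — which is exactly where compactness of $\Omega$ (and, in the singular-utility case, the truncation to $\Omega_{k,\mu}$ of Remark \ref{remSingular}) is used. The limit-point and fixed-point parts are then standard; altogether the statement is the specialization of \cite[Prop.~3.7(h)]{Bertsekas:1989:PDC} to the present smooth, bounded, compactly constrained sum-utility problem.
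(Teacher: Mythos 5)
Your argument is correct and is essentially the standard proof behind the result that the paper itself does not prove but simply cites (Bertsekas, Prop.\ 3.7(h)): the scaled-projection variational inequality plus the descent/ascent lemma yield sufficient ascent for $0<\gamma<2m/L$, boundedness of $U$ forces vanishing increments, and the fixed-point variational inequality decouples blockwise into \eqref{eq:optcond1}. The only caveat, inherited from the paper's own wording, is that the final global-optimality step really requires $U$ to be concave (so that the first-order condition is sufficient for a maximizer); if \emph{convex} is read literally, that last implication would not follow.
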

A proper condition for the step size parameters $\gamma_k,\forall k$ is formulated in the next subsection by Theorem \ref{thm:tsit}.

Note that the computation of the projection $\mathbfit{x}_k(n+1)=[\cdot]_{\mathbfit{M}_k}^{\Omega_k}$ is accomplished over the convex set $\mathcal{Q}$ of correlation matrices (see Subsection \ref{pgProject}), and produces a correlation matrix $\mathbfit{Q}_k(n+1)$. By Theorem \ref{thm2}, we can extract a corresponding beamforming vector $\mathbfit{w}_k(n+1)$, which can be applied for data transmission while the optimization process is still in progress.

\begin{rem}
Similar to the proposed SGP algorithm, the DP algorithm \cite{4797605} utilizes the partial derivatives of $U$ with respect to the power gains. There, the derivatives are negated and called interference prices. 
\end{rem}

\subsection{Asynchronous SGP Algorithm}\label{asyncSGP}
We now adopt the asynchronous computation model from \cite{1104412}, in which each transmitter does \emph{not} need to communicate to each other transmitter at each time instant; also the transmitters may perform their updates at different intervals and they may keep performing without having to wait until they receive messages that have been transmitted to them. Thus, they perform their updates with possibly outdated information.\\
For analysis purposes only, we consider a global event-driven clock that indexes all events of interest (such as an update step, transmission or reception of a message) by a discrete variable $n\in \mathbb{N}_0$, which is called the time index. 
Furthermore, we define sets of time indices at which each user updates its power gain vector or partial derivatives. These sets need not be known to any of the users; that is, their knowledge is not required to compute an update.
\begin{defn}[Set of Update Times]\label{PXdefnUpdTimes}
Let $\mathcal{X}_k$ (resp. $\mathcal{Y}_k$) be the unbounded set of time indices at which $\mathbfit{x}_k$ (resp. $\partial u_k/\partial x_{l,k},\forall l$) is updated by the $k$-th user.
\end{defn}
\begin{rem}
The convergence analysis of the asynchronous SGP algorithm relies on the un-boundedness of the sets $\mathcal{X}_k$ and $\mathcal{Y}_k$; that is, theoretically the algorithm never stops updating its variables. In practice, a stopping criterion is required (e.g., on the number of iterations). Moreover, an alternative to the unbounded set assumption is to bound the inter-update interval by a constant, as proposed in \cite[Chapter 7]{Bertsekas:1989:PDC}.
\end{rem}

There is no explicit notion of a processing period for the update computations. Without loss of generality, we index the time instant when an update computation starts by $n$, and assume that it is completed at index $n+1$ (i.e., there occurs no indexed event in between). Moreover, we assume that at index $n+1$ a message with the updated value is sent to the other users.
Any effective processing period can be accounted for in the difference between time index $n+1$ and the time index of a received message which contains the updated value.

\begin{defn}[Communication Delays]\label{defn1}
Let $q_{k,l}(n)\in \mathbb{N}_0,0\leq q_{k,l}(n)\leq n$ (resp. $p_{k,l}(n)\in \mathbb{N}_0,0\leq p_{k,l}(n)\leq n$) be the time index of a message with a value of $\partial{u_k}/\partial{x_{l,k}}$ (resp. $x_{k,l}$) that was sent from transmitter $k$ to transmitter $l$, and this was the last such message received not later than at time index $n$.
Without loss of generality, we assume $q_{k,k}(n)=p_{k,k}(n)=n,\forall k$; that is, each user modifies its (local) data exclusively, so that no communication delays arise. 
\end{defn}
Based on this definition, each user $k$ has the following local information at time index $n$:
\begin{align}
\boldsymbol\chi^k(n) = &\left[x_{1,k}\left(p_{1,k}(n)\right),\ldots,x_{K,k}\left(p_{K,k}(n)\right)\right], \label{eq:delayedPGvec}\\
\boldsymbol\lambda_k(n)=&\left[\frac{\partial{u_1\left(\boldsymbol{\chi}^1\left(q_{1,k}(n)\right)\right)}}{\partial{x_{k,1}}},\ldots,\frac{\partial{u_K\left(\boldsymbol{\chi}^K\left(q_{K,k}(n)\right)\right)}}{\partial{x_{k,K}}}\right]^T.\label{eq:delayedPartDer}
\end{align}

The \textit{asynchronous SGP} algorithm is summarized as follows: 
\begin{enumerate}
	\item \textit{Initialization}: Each transmitter $k$ chooses an initial power gain vector $\mathbfit{x}_k(0)\in \Omega_k$. Set time index $n=0$ and subsequently increment $n$ in arbitrary intervals.
	\item  \textit{Update Steps}:	
	\begin{itemize}
		\item If $n\in\mathcal{Y}_k$ then the $k$-th transmitter calculates the set of the partial derivatives ${\partial{u_k}}/{\partial{x_{l,k}}},\forall l$, using the received power gain vector defined in \eqref{eq:delayedPGvec}, and sends (at time index $n+1$) the $l$-th element to the $l$-th transmitter.
		\item If $n\in\mathcal{X}_k$ then the $k$-th transmitter calculates $\mathbfit{x}_k(n+1)$ according to \eqref{eq:updateStep} and \eqref{eq:delayedPartDer},	and communicates the $l$-th component of $\mathbfit{x}_k(n+1)$ to the $l$-th transmitter. 
	\end{itemize}	
\end{enumerate}

As shown in \cite[Subsection 6.3.2, Example 3.1]{Bertsekas:1989:PDC}, gradient algorithms require finite delays to ensure the convergence. Such algorithms are called \textit{partially asynchronous}. 
\begin{assm}[Finite Communication Delays] \label{assm:1}
For some (finite) constants $Q_{l,k}$ and $P_{l,k}$, we have for all $l,k,n$
\begin{align}
n-Q_{l,k}\leq q_{l,k}(n)\leq n,\nonumber\\
n-P_{l,k}\leq p_{l,k}(n)\leq n.\nonumber
\end{align}
\end{assm}
The delay bounds require a-priori knowledge of the network, and can be determined e.g., during system design.

Since the partial derivatives are subject to delays, we have to quantify their rate of change with respect to $\mathbfit{X}$, which yields the following assumption with respect to the curvature of $U$.
\begin{assm}[Curvature Bounds] \label{assm:2}
For the second-order partial derivatives, there exist bounds\footnote{By definition of $U$ we have $\slfrac{\partial^2 U}{\partial x_{k,l}\partial x_{s,t}}\left(\mathbfit{X}\right)=0$ for $l\neq t$. So, there is no need for a fourth subindex $t$ such as $K_{kl,st}$.} $K_{kl,s}$ such that 
\begin{align}
\left|\frac{\partial^2 U\left(\mathbfit{X}\right)}{\partial x_{k,l}\partial x_{s,l}}\right|\leq K_{kl,s},\forall \mathbfit{x}_k\in\Omega_k,\forall k,l,s\nonumber
\end{align}
\end{assm}
By Assumption \ref{Uprop} these bounds always exist. Note that instead of using the Lipschitz constants of $\nabla_k U,\forall k$ as an upper bound (as proposed in \cite[Subsection 7.5.1]{Bertsekas:1989:PDC}), the bounds $K_{kl,s}$ can be explicitly determined with moderate effort and allow a more detailed description of the interactions between the subproblems. An example is given in Appendix \ref{ExamplePF}.\\
Next, we give sufficient conditions for the step size parameters $\gamma_k$ which ensure, that the asynchronous SGP algorithm converges to a stationary point of Problem \eqref{bfOptProb}. 
\begin{thm}[Step Size Bounds] \label{thm:tsit}
Suppose that for each transmitter $k$ we have
\begin{align}
\gamma_k < \mathrm{min}_l\, \slfrac{2\beta_{k,l}}{D_{k,l}}\label{eq:suffcond}
\end{align}
with $D_{k,l}=\sum_{s=1}^K K_{kl,s}(1+P_{s,l}+Q_{l,k})+K_{sl,k}\left(P_{k,l}+Q_{l,s}\right)$,
then any limit point $\mathbfit{X}^*$ of the sequence $\left\{\mathbfit{X}(n)\right\}$, generated by the asynchronous SGP algorithm, satisfies the stationary conditions $(\mathrm{C}_0)$.
\end{thm}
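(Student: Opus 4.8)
The plan is to adapt the standard convergence argument for partially asynchronous scaled gradient projection iterations from \cite[Section 7.5]{Bertsekas:1989:PDC} (see also \cite[Section 5.6]{2140467}): derive a one-step ``sufficient ascent'' inequality for $U$ along the generated trajectory, sum it over time, invoke the upper-boundedness of $U$ (Assumption \ref{Uprop}) to conclude that the iterate increments vanish, and finally pass to the limit in the projection fixed-point equation \eqref{eq:updateStep}. Throughout I write $\mathbfit{d}_k(n)=\mathbfit{x}_k(n+1)-\mathbfit{x}_k(n)$, which is the zero vector unless $n\in\mathcal{X}_k$, and, since the model indexes one update event per time step, I treat the update of a single block $k$ at time $n$.

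The first ingredient is the variational characterization of the scaled projection: since $\mathbfit{x}_k(n+1)=[\mathbfit{x}_k(n)+\gamma_k\mathbfit{M}_k^{-1}\boldsymbol{\lambda}_k(n)]_{\mathbfit{M}_k}^{\Omega_k}$, testing optimality against $\mathbfit{x}_k(n)\in\Omega_k$ gives $\boldsymbol{\lambda}_k(n)^T\mathbfit{d}_k(n)\ge\gamma_k^{-1}\sum_l\beta_{k,l}\,d_{k,l}(n)^2$. The second ingredient is a second-order Taylor expansion of $U$ between $\mathbfit{X}(n)$ and $\mathbfit{X}(n+1)$, which differ only in block $k$; combined with the diagonal structure of the Hessian of $U$ and the bounds $K_{kl,k}$ of Assumption \ref{assm:2}, it yields $U(\mathbfit{X}(n+1))-U(\mathbfit{X}(n))\ge\nabla_k U(\mathbfit{X}(n))\mathbfit{d}_k(n)-\tfrac12\sum_l K_{kl,k}\,d_{k,l}(n)^2$. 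The third, and most delicate, ingredient is to control the fact that $\boldsymbol{\lambda}_k(n)$ is a \emph{stale} gradient, whose $l$-th entry is $\partial u_l/\partial x_{k,l}$ evaluated at $\boldsymbol{\chi}^l(q_{l,k}(n))$ rather than at $\mathbfit{x}^l(n)$. Writing this error as a telescoping sum of one-step changes of the power gains $x_{s,l}$ over the time window fixed by the delays and estimating each change by the curvature constants, one obtains $\bigl|\lambda_{k,l}(n)-\partial U(\mathbfit{X}(n))/\partial x_{k,l}\bigr|\le\sum_s K_{kl,s}\sum_m|d_{s,l}(m)|$, where the inner sum runs over at most $P_{s,l}+Q_{l,k}$ past indices by Assumption \ref{assm:1}.

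Assembling the three estimates gives a one-step bound whose right-hand side, besides the ``good'' term $\sum_l(\beta_{k,l}/\gamma_k)\,d_{k,l}(n)^2$, contains the negative Hessian term and cross terms $|d_{k,l}(n)|\,|d_{s,l}(m)|$ with $m$ in a bounded past window. Splitting each cross term via $2ab\le a^2+b^2$ and summing over $n=0,\dots,N$ makes the left side telescope to $U(\mathbfit{X}(N+1))-U(\mathbfit{X}(0))$, which is bounded above. On the right side every squared increment $d_{k,l}(m)^2$ is charged a fixed number of times --- once with weight $\beta_{k,l}/\gamma_k$ by its own projection term, and, with multiplicities governed by the window lengths $1$, $P_{s,l}$, $Q_{l,k}$, $P_{k,l}$, $Q_{l,s}$, by the Hessian and cross terms --- so that after collecting coefficients the net weight of $d_{k,l}(m)^2$ is $\beta_{k,l}/\gamma_k-\tfrac12 D_{k,l}$, which is strictly positive exactly under hypothesis \eqref{eq:suffcond}. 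Hence $\sum_n\sum_{k,l}d_{k,l}(n)^2<\infty$ and in particular $\mathbfit{d}_k(n)\to\mathbf{0}$ for every $k$. For any limit point $\mathbfit{X}^*$ along $\mathbfit{X}(n_j)\to\mathbfit{X}^*$, the unboundedness of $\mathcal{X}_k,\mathcal{Y}_k$ together with the finite delays (Assumption \ref{assm:1}) and $\mathbfit{d}_k(n)\to\mathbf{0}$ imply that, along a suitable subsequence, the iterate and all delayed quantities entering $\boldsymbol{\chi}^l$ and $\boldsymbol{\lambda}_k$ converge to the corresponding components of $\mathbfit{X}^*$; passing to the limit in \eqref{eq:updateStep}, using continuity of $\nabla U$ and of the scaled projection onto the compact convex set $\Omega_k$, gives $\mathbfit{x}_k^*=[\mathbfit{x}_k^*+\gamma_k\mathbfit{M}_k^{-1}(\nabla_k U(\mathbfit{X}^*))^T]_{\mathbfit{M}_k}^{\Omega_k}$ for every $k$, which by the projection characterization is equivalent to $\nabla_k U(\mathbfit{X}^*)(\mathbfit{x}-\mathbfit{x}_k^*)\le 0$ for all $\mathbfit{x}\in\Omega_k$, i.e. condition $(\mathrm{C}_0)$.

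I expect the main obstacle to be the bookkeeping in the summation step: one has to identify precisely which past increments $d_{s,l}(m)$ feed into which current staleness error and over windows of exactly the right lengths, so that the accumulated coefficient works out to be $\tfrac12 D_{k,l}$ and \eqref{eq:suffcond} is the sharp threshold rather than a loose sufficient condition; the two groups of terms in $D_{k,l}$ should correspond, respectively, to the effect of a step on the gradient component it directly perturbs (window $1+P_{s,l}+Q_{l,k}$) and to its cross-effect on the gradient components used by the other blocks (window $P_{k,l}+Q_{l,s}$). A secondary care point is the limiting argument near the boundary of $\Omega_k$ (or of the restricted domain $\Omega_{k,\mu}$ of Remark \ref{remSingular}), which relies on continuity of the projection operator and of $\nabla U$ there.
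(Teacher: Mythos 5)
Your proposal is correct and follows essentially the same route as the paper's proof: your projection inequality is exactly the paper's Block Ascent Property (obtained from the Scaled Projection Theorem), the Taylor-expansion-plus-staleness-error bookkeeping summed over time is the paper's adaptation of Tsitsiklis's Theorem 5.6.1 leading to the coefficient $2\beta_{k,l}-\gamma_k D_{k,l}$, and the final passage to the limit via the fixed-point characterization of the scaled projection matches the paper's use of \cite[Proposition 3.7 (e)]{Bertsekas:1989:PDC}. You even spell out the telescoping delay-window estimate that the paper compresses into ``some algebraic manipulations,'' so no substantive difference remains.
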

\begin{proof} The proof is provided in Appendix \ref{proof:tsit}.
\end{proof}
As remarked in \cite[Subsection 5.6]{2140467}, the step size bounds are sufficient for convergence but they are not tight, nor necessary. Since the convergence rate is governed by the smallest and largest eigenvalues of the transformed Hessian $\mathbfit{M}_k^{-\frac{1}{2}}\nabla_k^2 U \mathbfit{M}_k^{-\frac{1}{2}}$ (\cite[Section 2.3.1]{bertsekas1995nonlinear}), one should try to choose the scaling matrix $\mathbfit{M}_k$ as close as possible to the Hessian matrix $\nabla_k^2 U$. This is achieved by setting $\beta_{k,l} = D_{k,l}$, for which we obtain a common upper bound on the step sizes; that is, $\gamma_k < 2,\forall k$. Then, each element $\beta_{k,l}$ of the scaling matrix $\mathbfit{M}_k$ acts as a step size parameter for the $l$-th component of the gradient vector $\boldsymbol \lambda_k$.
Further mechanisms for improving the convergence rate are discussed in the Subsection \ref{convSpeedUp}.

\subsection{Scaled Projection onto the Power Gain Region}\label{pgProject}
Next, we show how to accomplish the projection step in \eqref{eq:updateStep}.
By \cite[Proposition 3.7 (a)]{Bertsekas:1989:PDC} the projection $\mathbfit{x}_k(n+1)=\left[\mathbfit{z}\right]_{\mathbfit{M}_k}^{\Omega_k}$ is unique and given by
\begin{align}
\left[\mathbfit{z}\right]_{\mathbfit{M}_k}^{\Omega_k}=&\arg\min \left\|\mathbfit{z}-\mathbfit{x}\right\|_{\mathbfit{M}_k}^2  \tag{$\mathrm{P}_2$}\label{projProb}\nonumber\\
&\mathrm{s.\,t.}\; \mathbfit{x}\in \Omega_k. \nonumber
\end{align}
By rewriting $x_{k,l}\left(\mathbfit{Q}\right) =\mathrm{tr}(\mathbfit{Q}\mathbfit{h}_{kl}\mathbfit{h}_{kl}^H)$, the weighted inner product can be formulated as
\begin{align}
\left\|\mathbfit{z}-\mathbfit{x}\right\|_{\mathbfit{M}_k}^2&=\left(\mathbfit{z}-\mathbfit{x}_k\left(\mathbfit{Q}\right)\right)^T \mathbfit{M}_k \left(\mathbfit{z}-\mathbfit{x}_k\left(\mathbfit{Q}\right)\right)\nonumber\\
&=\frac{1}{2}\mathrm{tr}\left(\varphi(\mathbfit{Q})\mathbfit{Q}\right) +  
\mathrm{tr}\left( \mathbfit{C}\mathbfit{Q}\right) +\left\|\mathbfit{z}\right\|_{\mathbfit{M}_k}^2 \label{eq:equiProjObjFun}
\end{align}
with the self-adjoint positive semi-definite linear operator $\varphi(\mathbfit{Q})=2\sum_{l=1}^K \beta_{k,l}\mathbfit{H}_{kl} \mathbfit{Q} \mathbfit{H}_{kl}$, and matrices $\mathbfit{C}=-2\sum_{l=1}^K \beta_{k,l}z_l\mathbfit{H}_{kl}$ and $\mathbfit{H}_{kl}=\mathbfit{h}_{kl}\mathbfit{h}_{kl}^H$. Thus, the minimization over the power gain region $\Omega_k$ can be equivalently accomplished over the convex set $\mathcal{Q}$. The resulting (convex) quadratic semi-definite program (QSDP) is given by
\begin{align}
&\min \frac{1}{2}\mathrm{tr}\left(\varphi(\mathbfit{Q})\mathbfit{Q}\right) +  
\mathrm{tr}\left( \mathbfit{C}\mathbfit{Q}\right) \tag{$\mathrm{P}_3$}\label{qsdpProb}\nonumber\\
&\mathrm{s.\,t.}\; \mathrm{tr}\left(\mathbfit{Q}\right) \leq 1,\;\mathbfit{Q} \succeq 0. \nonumber
\end{align}
The global optimal solution $\mathbfit{Q}^*$ can be found efficiently by a QSDP solver, e.g. the \textsc{Matlab} software QSDP-0 \cite{QSDP0}. The solution of Problem \eqref{projProb} is obtained by 
\begin{align}
\mathbfit{x}_k(n+1)=\left[\mathbfit{z}\right]_{\mathbfit{M}_k}^{\Omega_k}=\mathbfit{x}_k(\mathbfit{Q}^*).
\end{align}

\begin{rem}[Solution by Gradient Projection Method]
Alternatively, Problem \eqref{projProb} can be solved iteratively with the gradient projection method \cite[Section 2.3]{bertsekas1995nonlinear}. Therefore, we minimize \eqref{eq:equiProjObjFun} over the convex cone of positive semi-definite matrices, subject to the linear inequality constraint $\mathrm{tr}(\mathbfit{Q})\leq 1$. The projection onto the constraint set is accomplished by an appropriate scaling of the eigenvalues of $\mathbfit{Q}$ (see \cite[Section 8.1.1]{Boyd}, \cite[Section IV-C]{1237413}).
\end{rem}

\subsection{Improving the Convergence Rate} \label{convSpeedUp}
In the following, we describe two mechanisms that improve the convergence rate of the asynchronous SGP algorithm, and which preserve its convergence to a stationary point of \eqref{bfOptProb}.

\subsubsection{Speed-up ${S_1}$ (Normalized Power Gains)}
The first speed-up mechanism tightens the bounds for the step size parameters $\gamma_k,\forall k$, by exploiting the fine structure of the problem (i.e., the channel coupling strength between the users).
Therefore, the SGP algorithm is formulated in the linearly transformed optimization domain $\Omega_1'\times\ldots\times\Omega_K'$ with $\Omega_k'=\{\mathbfit{x}_k'=\mathbfit{T}_k^{-1}\mathbfit{x}_k:\mathbfit{x}_k\in \Omega_k\},\forall k$ and $\mathbfit{T}_k=\mathrm{diag}(\left\|\mathbfit{h}_{k1}\right\|^2,\ldots,\left\|\mathbfit{h}_{kK}\right\|^2)$.
Consequently, each curvature bound $K_{kl,s}$ is scaled by the corresponding channel gains, yielding $K_{kl,s}'=\left\|\mathbfit{h}_{kl}\right\|^2 \left\|\mathbfit{h}_{sl}\right\|^2 K_{kl,s},\forall k,l,s$. One should note that small channel gains scale down the curvature bounds and thus yield a tighter lower bound in \eqref{PXeq:taylorLB} for the quadratic term of the second-order Taylor expansion of $U$. The resulting convergence speed-up is illustrated in Section \ref{SimRes}.

\subsubsection{Speed-up ${S_2}$ (Adaptive Curvature Bounds)}
The basic idea of the second speed-up mechanism is to adapt the curvature bounds during the optimization process. One should note that the global curvature bounds $K_{lk,s},\forall, l,k,s$, as formulated in Assumption \ref{assm:2}, reflect the \emph{worst case} curvature of the sum utility function $U$. For the majority of operating points, these bounds are too stringent and cause a slow convergence speed. The proposed speed-up mechanism relies on the following assumption, which is satisfied by e.g., the sum rate and proportional fair rate utility.

\begin{assm}[Monotonicity of the Curvature Bounds] \label{P2monCurvAssm} 
For all $k,l,s$, the absolute value of the second-order partial derivative $\slfrac{\partial^2 U}{\partial x_{l,k}\partial x_{s,k}}$ is a monotonic function with respect to the power gains $x_{l,k},\forall l$.
\end{assm}

Let $\mathcal{Z}_k$ be the unbounded set of time indices when the $k$-th transmitter updates its curvature bounds $K_{lk,s},\forall l,s$. A suitable choice for the set $\mathcal{Z}_k$ is given by the set of time indices, at which user $k$ receives power gain messages from the other users. By doing so, every change in the operating point is tracked immediately (but subject to communication delays).

The asynchronous SGP algorithm is extended as follows:
\begin{enumerate}
	\item \textit{Initialization}: Each transmitter $k$ maintains an upper bound $\hat{x}_{l,k}$ and a lower bound $\check{x}_{l,k}$ for every received power gain $x_{l,k},\forall l$. These bounds are initialized with the smallest and largest feasible value; that is, the $\hat{x}_{l,k}(0)=0$ (ZF beamforming) and $\check{x}_{l,k}(0)=\left\|\mathbfit{h}_{lk}\right\|^2$ (MRT beamforming). Based on these bounds and the monotonicity properties\footnote{A monotonic function attains its maximum at the boundary of its domain.} of the second-order partial derivatives (i.e., increasing or decreasing), transmitter $k$ calculates the initial upper bounds $K_{lk,s}(0),\forall l,s$, which are then communicated to the corresponding transmitters $l$ and $s$.
	\item \textit{Update Steps}: 	
	\begin{itemize}
		\item If $n\in\mathcal{Z}_k$ then transmitter $k$ updates the upper and lower bounds for the power gains according to $\hat{x}_{l,k}(n+1)=\max(\hat{x}_{l,k}(n),x_{l,k}(n)),\forall l$ and $\check{x}_{l,k}(n+1)=\min(\check{x}_{l,k}(n),x_{l,k}(n)),\forall l$. Based on these updated bounds, it adapts the curvature bounds $K_{lk,s}(n+1),\forall l,s$, which are then communicated to transmitters $l$ and $s$.
		\item The power gain update step at the $k$-th transmitter follows \eqref{eq:updateStep} but with the adapted step size parameter $\gamma_k(n)$ and scaling matrix $\mathbfit{M}_k(n)$. Both have been updated based on the (received) curvature bounds $K_{kl,s}(q_{l,k}(n)),\forall l,s$.
	\end{itemize}	
\end{enumerate}

Basically, the extended SGP algorithm can be understood as a second-order algorithm; that is, an algorithm which utilizes a second-order Taylor approximation for every operating point. The next proposition establishes the convergence of the extended SGP algorithm.

\begin{prop}[Asymptotic Convergence of the SGP Algorithm with Speed-Up $S_2$]
If the sum utility function $U$ satisfies Assumption \ref{P2monCurvAssm}, then the asynchronous SGP algorithm with adaptive curvature bounds converges asymptotically to a stationary point of Problem \eqref{bfOptProb}. 
\end{prop}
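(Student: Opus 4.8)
The plan is to show that the adaptive curvature bounds settle down fast enough that, after a finite transient, the iteration is an arbitrarily small perturbation of the fixed-parameter scheme already handled by Theorem~\ref{thm:tsit}, and then to re-run that theorem's descent argument. The first step is to prove that the adaptive bounds converge. Fix $k,l,s$. For $n\ge 1$ the interval $[\check{x}_{j,k}(n),\hat{x}_{j,k}(n)]$ is, for every $j$, the range of the values of the power gain $x_{j,k}$ that transmitter $k$ has received up to time $n$; these intervals are nested, non-decreasing in $n$, and contained in the feasible range $[0,\|\mathbfit{h}_{jk}\|^2]$ (at $n=0$ the bound is simply initialized to the global value $K_{lk,s}$, which is valid a fortiori). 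Under Assumption~\ref{P2monCurvAssm} the number $K_{lk,s}(n)$ returned by transmitter $k$ --- the largest value of $|\partial^2 U/\partial x_{l,k}\partial x_{s,k}|$ over the vertices of the box $\prod_j[\check{x}_{j,k}(n),\hat{x}_{j,k}(n)]$ --- equals the true maximum of that quantity over the box, so it is a genuine curvature bound over all operating points encountered so far. Because the boxes grow, $\{K_{lk,s}(n)\}_{n\ge 1}$ is non-decreasing, and by Assumption~\ref{assm:2} it is bounded above by the global bound $K_{lk,s}$; hence $K_{lk,s}(n)\nearrow K_{lk,s}^{\infty}\le K_{lk,s}$, the total variation $\sum_n(K_{lk,s}(n+1)-K_{lk,s}(n))$ is finite, and therefore $D_{k,l}(n)\to D_{k,l}^{\infty}$ and $\mathbfit{M}_k(n)\to\mathbfit{M}_k^{\infty}\succ 0$. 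With the recommended choice $\beta_{k,l}(n)=D_{k,l}(n)$ the step-size condition of Theorem~\ref{thm:tsit} reduces to $\gamma_k<2$, so it holds with a margin uniform in $n$.

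The hard part will be the second step: showing that the \emph{outdated} bound $K_{kl,s}(q_{l,k}(n))$ actually used at iteration $n$ still dominates the curvature along the segments that appear in the second-order Taylor expansions underlying the proof of Theorem~\ref{thm:tsit}. Since the communication delays are finite (Assumption~\ref{assm:1}), every operating point entering those expansions at iteration $n$ lies in a window of bounded length ending at $n$, hence in a box contained in $\prod_j[0,\|\mathbfit{h}_{jk}\|^2]$ and --- up to the at most $Q_{l,k}$ indices inside the delay window --- in the box on which $K_{kl,s}(q_{l,k}(n))$ was formed. Using that the adaptive bounds are non-decreasing, any shortfall of the outdated bound relative to the one valid at $n$ is at most $K_{kl,s}(n)-K_{kl,s}(q_{l,k}(n))=\sum_{m=q_{l,k}(n)}^{n-1}(K_{kl,s}(m+1)-K_{kl,s}(m))$, and a routine index swap shows that these shortfalls sum over $n$ to at most $Q_{l,k}$ times the (finite) total variation from the first step. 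In short, replacing the global bounds by the delayed adaptive ones introduces only a summable perturbation into the analysis.

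Finally I would replay the descent estimate from the proof of Theorem~\ref{thm:tsit} with the time-varying, delayed scaling matrices. The two facts above turn that estimate into an inequality of the form
\begin{align}
U(\mathbfit{X}(n+1))-U(\mathbfit{X}(n))\ \ge\ \sum_{k,l}c_{k,l}(n)(x_{k,l}(n+1)-x_{k,l}(n))^2-\varepsilon(n),\nonumber
\end{align}
where $c_{k,l}(n)\ge c>0$ for all large $n$ (this is where the uniform margin $\gamma_k<2$ enters) and $\sum_n\varepsilon(n)<\infty$ collects the perturbations caused by the drift of the curvature bounds. Since $U$ is bounded above (Assumption~\ref{Uprop}), summing telescopes to $\sum_n\|\mathbfit{X}(n+1)-\mathbfit{X}(n)\|^2<\infty$, so $\mathbfit{X}(n+1)-\mathbfit{X}(n)\to 0$. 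As $\{\mathbfit{X}(n)\}$ lies in the compact set $\Omega_1\times\cdots\times\Omega_K$ it has limit points; passing to the limit along a convergent subsequence $\mathbfit{X}(n_j)\to\mathbfit{X}^*$ in~\eqref{eq:updateStep} --- the vanishing increments together with the bounded delays drive every delayed argument to $\mathbfit{X}^*$, while $\mathbfit{M}_k(n_j)\to\mathbfit{M}_k^{\infty}\succ 0$ and $U$ is twice differentiable --- yields $\mathbfit{x}_k^*=[\mathbfit{x}_k^*+\gamma_k(\mathbfit{M}_k^{\infty})^{-1}\nabla_k U(\mathbfit{X}^*)^T]_{\mathbfit{M}_k^{\infty}}^{\Omega_k}$ for every $k$, which is exactly the stationarity condition~\eqref{eq:optcond1}. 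Hence every limit point of the asynchronous SGP iteration with adaptive curvature bounds is a stationary point of Problem~\eqref{bfOptProb}; apart from the delicate bound-validity bookkeeping of the second step, the argument is a re-run of the proof of Theorem~\ref{thm:tsit}.
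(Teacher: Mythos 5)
Your argument is sound in outline but follows a genuinely different, more quantitative route than the paper. The paper's own proof is a short qualitative argument in three steps: each sequence $\{K_{lk,s}(n)\}$ is non-decreasing and bounded above (Assumption~\ref{assm:2}), hence converges; at the joint limit of all these sequences, Assumption~\ref{P2monCurvAssm} is used to argue that the bounds are valid for the iterates (otherwise the joint limit would not yet have been reached); and Theorem~\ref{thm:tsit} is then invoked directly, the word \emph{asymptotically} in the statement covering exactly the transient during which the bounds may still be violated. You instead quantify that transient: finite total variation of the monotone bound sequences, an index-swap argument making the per-iteration bound violations summable, a perturbed re-run of the descent estimate of Appendix~\ref{proof:tsit}, and a fixed-point passage to the limit. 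This buys an explicit accounting of the non-monotone utility excursions that the paper only mentions in the remark following the proposition, at the cost of more bookkeeping; the paper's version is shorter but leaves the ``validity at the limit'' step informal.

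Two points in your bookkeeping need repair, though neither is fatal. First, the ``bound valid at $n$'' is not $K_{kl,s}(n)$: the point $\mathbfit{x}_k(n+1)$ produced at iteration $n$ (and hence the Taylor segment between $\mathbfit{X}(n)$ and $\mathbfit{X}(n+1)$) is not yet reflected in the box from which $K_{kl,s}(n)$ was computed, because the boxes only record power gains already received. You must defer the comparison by a bounded number of steps, until the new gains have reached receiver $k$ (Assumption~\ref{assm:1}) and a $\mathcal{Z}_k$-update has occurred, and then use the coordinate-wise monotonicity of Assumption~\ref{P2monCurvAssm} to conclude that the later box bound dominates the curvature on the entire interval hull of the relevant segments; since this deferral window is of bounded length, your index swap still yields summability of the shortfalls. (Both your argument and the paper's implicitly require that every visited power gain is eventually incorporated into some box, which holds for the paper's choice $\mathcal{Z}_k=\mathcal{Y}_k$ with a message sent after every update.) Second, you misstate the initialization: the algorithm starts from the degenerate box $\hat{x}_{l,k}(0)=0$, $\check{x}_{l,k}(0)=\|\mathbfit{h}_{lk}\|^2$, not from the global bounds $K_{lk,s}$, so the initial adaptive bounds are typically \emph{not} valid. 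This is harmless for your scheme, since early violations are absorbed into the summable perturbation $\varepsilon(n)$, but the parenthetical claim as written contradicts the algorithm description.
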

\begin{proof}
For each $l,k,s$, the curvature bound sequence $\{K_{lk,s}(n)\}$ converges to a limit point
$\lim_{n\rightarrow \infty}K_{lk,s}(n) = K_{lk,s}^*$, because it is upper bounded (cf. Assumption \ref{assm:2}) and never decreased by an update step. At the joint limit point of all sequences $\{K_{lk,s}(n)\},\forall l,k,s$, Assumption \ref{P2monCurvAssm} ensures\footnote{Assumption \ref{P2monCurvAssm} ensures that there are no local maxima of the curvature function, which can be overlooked by the algorithm.} that the curvature bounds are valid for the sequence $\{\mathbfit{X}(n)\}$, generated by the SGP algorithm. (If not, the joint limit point has not been reached.) Thus, the sequence $\{\mathbfit{X}(n)\}$ converges to a limit point which satisfies the stationarity condition \eqref{eq:optcond1} (Theorem \eqref{thm:tsit}).
\end{proof}

\begin{rem}[Non-Monotonic Convergence]
Although the convergence to a stationary point is guaranteed, we do not have \emph{monotonic} convergence in terms of the $U$. Every change (i.e., increase) in the curvature bounds implies a preceding power gain update step that has been based on incorrect curvature bounds, and which possibly decreased the sum utility.
\end{rem}

\section{Numerical Simulations}\label{SimRes}
To demonstrate the relative performance of different beamforming algorithms, we present numerical simulations for a small MISO IFC with different backhaul network topologies. Our interest lies on the overall processing time needed for convergence by an algorithm, which primarily depends on the number of iterations, update cycles, communication delays and synchronization periods. The following subsection provides a detailed description of these factors.

\subsection{Simulation Model}
\subsubsection{Simulation Clock}
For simulation purposes, we employ a dimensionless discrete clock $n\in\mathbb{N}_0$ with clock period $1$. This clock is used as a timeline for our simulations, and to quantify update cycles and communication delays.
Our special focus is on the communication bottleneck of different backhaul networks. Therefore, we assume that the duration of the update computations is negligible (i.e., zero with respect to our simulation clock), which is in contrast to the event-driven clock in Section \ref{asyncSGP}. This assumption implies a sufficiently large processing power at the transmitters.

\subsubsection{Channel Model}
We consider a MISO IFC with $K=4$ users, where each transmitter has $N=2$ antennas. The variance of the additive white Gaussian noise is $\sigma^2=10^{-2}$, yielding 20dB SNR at transmitter side. 
The elements of each channel vector $\mathbfit{h}_{kl}$ are independent circularly symmetric complex Gaussian random variables with zero mean and variance $\sigma_{k,l}^2$. The variance is used to model a 1-dimensional local coupling of neighboring users, and depends on the user indices as follows:
\begin{align}
\sigma_{k,l}^2=\left\{\begin{array}{ll}
1 & ,\,\mathrm{if}\;k=l, \\
10^{1-2\left|k-l\right|} & ,\,\mathrm{if}\;k\neq l.
\end{array} \right. \label{eq:chanCoupl}
\end{align}
One should note that the edge users $0$ and $3$ have only one interferer above noise level, while the inner users $1$ and $2$ have two interferers above noise level.

\subsubsection{Network Objective}
We assume the \emph{achievable rate} utility for each user $k$, given by $u_k(\mathbfit{x}^k) = \log_2\left(1 + \Gamma_k(\mathbfit{x}^k)\right)$ with $\Gamma_k(\mathbfit{x}^k)=x_{k,k}/(\sum_{l\neq k} x_{l,k} + \sigma^2)$. The system utility function is the proportional fair rate, which is defined as
\begin{align}
U^\mathrm{pf}(\mathbfit{X})=\prod_k u_k^{\frac{1}{K}}(\mathbfit{x}^k).\label{eq:PFutil}
\end{align}
As shown in Appendix \ref{ExamplePF}, this system utility function admits an equivalent sum utility problem formulation.

\begin{figure}[t]
\centering
\includegraphics[width=0.48\textwidth]{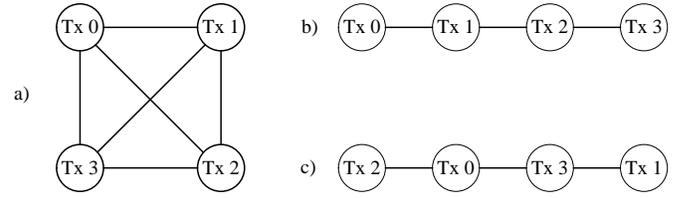}
\caption{Illustration of a) mesh backhaul network, b) daisy chain backhaul network and c) permuted daisy chain backhaul network.}
\label{PXfig:backhaulNW}
\end{figure}

\begin{table*}[!t]
\small
\begin{tabular}{|l|l|l|}
     \hline
			             & NW1  &  NW2   \\
      \hline
			\hline
			async. SGP   &  \multicolumn{2}{c|}{ $\mathcal{X}_k=\mathcal{Y}_k=\{Tn: n \in\mathbb{N}_0\}$}  \\
			\cline{2-3}
			with $T\in\mathbb{N}$ & $n-p_{k,l}(n)=n-q_{k,l}(n)=\left\lfloor \frac{D}{T}\right\rfloor$ & $n-p_{k,l}(n)=n-q_{k,l}(n)=\left\lfloor \frac{D\left|k-l\right|}{T}\right\rfloor$ \\
			\hline	
			sync. SGP  & $\mathcal{X}_k=\{Tn: n \in\mathbb{N}_0\}$, $\mathcal{Y}_k=\{Tn+D: n \in\mathbb{N}_0\}$& $\mathcal{X}_k=\{Tn: n \in\mathbb{N}_0\}$, $\mathcal{Y}_k=\{Tn+(K-1)D: n \in\mathbb{N}_0\}$\\
									& with $T=2D$ & with $T=2D(K-1)$\\
									\cline{2-3}
									 & \multicolumn{2}{c|}{$n-p_{k,l}(n)=n-q_{k,l}(n)=0$}   \\
			\hline	
			sync. DP & $\mathcal{X}_k=\{Tn+\varphi_k: n \in\mathbb{N}_0\}$, $\mathcal{Y}_k=\bigcup_l \mathcal{X}_l$, & $\mathcal{X}_k=\{T n+\varphi_k: n\in\mathbb{N}_0\}$, $\mathcal{Y}_k=\bigcup_l \mathcal{X}_l$\\ 
			with $M\in\mathbb{N}_0$	 & with $T=(D+M)K$ and $\varphi_k=(D+M)k$ & with $T=\left[2\sum_{l=1}^{\left\lfloor \slfrac{K}{2}\right\rfloor}(K-l)+\left\lfloor \frac{K}{2}\right\rfloor\right]D+KM\cdot(K \bmod 2)$,\\								
										&& $\varphi_k=kM+D\left[\sum_{l=0}^{k-1}(K-l-1)+\max(0,2k-K-1)\right]$\\
										\cline{2-3}
		                 & \multicolumn{2}{c|}{$n-p_{k,l}(n)=n-q_{k,l}(n)=0$}   \\
			\hline	
\end{tabular}
\caption{Algorithm configurations for user $k\in\{0,\ldots,K-1\}$, with backhaul delay $D$, update period $T$, 'measurement and reporting' period $M$. The configurations for NW3 are obtained by permuting the user indices in the NW2 case.\label{PX:tab:AlgConfig}}
\end{table*}

\subsubsection{Backhaul Network}
The transmitters are able to exchange information via the backhaul network. Depending on the network topology, diverse communication delays arise. As shown in Figure \ref{PXfig:backhaulNW}, we evaluate three network topologies:
\begin{enumerate}[a)]
	\item NW1 - Mesh Network: Every pair of transmitters has a direct communication link. We assume that each link introduces a communication delay of $D$ clock cycles.
	\item NW2 - Linear Daisy Chain Network: Only neighboring transmitters (i.e., transmitters whose indices differ by one) possess a backhaul link with a delay of $D$ clock cycles. We assume that messages are forwarded so that every pair of transmitters is able to exchange information. The overall delay between transmitters $k$ and $l$ is given by $\left|k-l\right|D$. 
	\item NW3 - Permuted Daisy Chain Network: This network structure serves as a reference case, and is derived from BHNW2 by permuting the transmitter indices but keeping the channel coupling \eqref{eq:chanCoupl}.
\end{enumerate}

\subsubsection{Algorithm Configuration}
Next, we describe the evaluated algorithms in terms of their possible parametrizations and the user's timing behavior, which is assumed to be equal for all users. Following Section \ref{asyncSGP}, the timing behavior of the $k$-th user is characterized by the sets of update times $\mathcal{X}_k$, $\mathcal{Y}_k$ and the \emph{effective} communication delays $n-p_{k,l}(n)$ and $n-q_{k,l}(n),\forall n,l$. Here, the elements of $\mathcal{X}_k$, $\mathcal{Y}_k$ represent time instants of our simulation clock. We assume that every variable update is sent immediately to the other users. Thus, the effective communication delays reflect the number of update computations by which a received message is outdated.\\
Table \ref{PX:tab:AlgConfig} summarizes the sets of update times and effective communication delays for the evaluated algorithms. Note that:

\begin{itemize}
	\item For the \emph{synchronous SGP} algorithm, the update equation \eqref{eq:updateStep} requires current gradient information. Therefore, every power gain update is sent to the other transmitters, which in turn feed back their updated partial derivatives. We assume that the derivatives are updated immediately after the reception of a power gain message. Thus, the shortest feasible power gain update cycle $T$ is determined by the largest round trip delay between the transmitters. 
 \item The \emph{asynchronous SGP} algorithm can cope with outdated gradient information. Thus, its power gain update cycle $T$ can be chosen arbitrarily. Here, we assume that all power gains and partial derivatives are updated concurrently.
\item As a reference case, we simulate the synchronous \emph{distributed pricing} (DP) algorithm \cite{5205801}, which requires sequential power gain updates based on current gradient information. The exchange of the power gains is accomplished via radio transmission. Based on the received signals, receiver $k$ calculates its partial derivative and reports the value to transmitter $k$. Then, every transmitter sends its value via backhaul to the certain transmitter that will perform the next update step. We assume a round-robin update sequence for the transmitters. The signaling of the power gains is not subject to any ascertainable communication delays, but the 'measurement and reporting' task requires $M$ clock cycles.
\item As a second reference case, we simulate the (synchronous) \emph{cyclic coordinate descent} (CCD) algorithm \cite{5638157}, which has the same timing behavior as the DP algorithm. 
\end{itemize}
At time instant $n=0$, the algorithms are initialized with power gain vectors that correspond to the MRT beamformers. Furthermore, we assume that every transmitter has \emph{current} knowledge of the system state (i.e., partial derivatives, curvature bounds, etc.), and performs its first power gain update.\\
Moreover, we use the constant step size parameters $\gamma_k=1.99,\forall k$ for the SGP algorithm. The scaling matrices $\mathbfit{M}_k,\forall k$ are chosen as described in Section \ref{asyncSGP}, and require the knowledge about the curvature bounds $K_{kl,s},\forall k,l,s$. In Appendix \ref{ExamplePF}, we illustrate their computation for the proportional fair rate utility. Due to the singularity of the transformed utility functions $\log(u_k)$ at $x_{k,k}=0$, we employ the restricted optimization domains $\Omega_{k,\mu},\forall k$ (see Remark \ref{remSingular}) with\footnote{A smaller parameter $\mu$ would yield very conservative curvature bounds, which significantly slow down the SGP algorithm. However, very small direct link gains $x_{k,k}<0.1$ are unlikely to occur because such operating points are repulsive for the proportional fair rate utility.} $\mu=0.1$. For the SGP algorithm speed-up $S_2$, we assume that the required sets of update times are given by $\mathcal{Z}_k=\mathcal{Y}_k,\forall k$.
The projection problems within the SGP update steps are solved with the QSDP-0 solver \cite{QSDP0} and the accuracy tolerance $10^{-6}$.

\subsection{Simulation Results} 
\subsubsection{Comparison of different Speed-up Options}\label{sec:speedups}
We start with the synchronous SGP algorithm and illustrate the effect of the proposed speed-up methods $S_1$ and $S_2$ (see Subsection \ref{convSpeedUp}). We assume a mesh backhaul network with communication delay $D=1$. For this setup, the synchronous SGP algorithm calculates a power gain update every second clock cycle. As an upper performance bound, we compute the optimal utility value by using the branch-reduce-and-bound (BRB) algorithm from \cite{6129540} with the accuracy parameter $\epsilon=10^{-2}$.\\
Figure \ref{P2fig:convSpeedup} shows the system utility $U^\mathrm{pf}$ as a function of the time index for an exemplary channel realization. The 'plain' SGP algorithm converges very slowly due to the conservative curvature bounds and thus loose step size bounds\footnote{One should note that the actual step size parameters are fixed to $\gamma_k=1.99,\forall k$. However, for each user $k$, the elements of the diagonal scaling matrix $\mathbfit{M}_k$ play the role of step size parameters, one for each component of the gradient vector $\boldsymbol \lambda_k$.}. By using normalized power gains (speed-up $S_1$), which result in tighter step size bounds, the time needed for convergence can be reduced by approximately one order of magnitude. However, the algorithm is still rather slow. A significant convergence speed-up is achieved by combining the speed-up options $S_1$ and $S_2$. By adapting the curvature bounds to the most recent operating point, the SGP algorithm becomes a second-order algorithm which accomplishes reasonably large update steps. 
This comes at the cost of the additional calculation and exchange of the curvature bounds. Moreover, it should be noted that this speed-up may cause the convergence to a different limit point because the basin of attraction of a local maximum can potentially be left.
Due to its superior performance, we subsequently focus on the SGP algorithm with both speed-up options\footnote{The DP and CCD algorithm do not calculate and exchange any second-order derivatives. However, convergence of the DP algorithm relies on specific properties of the second-order derivatives. For the CCD algorithm, it is assumed that all utility functions are known to the users.} $S_1\&S_2$.

\begin{figure}[t]
\centering
\includegraphics[width=0.5\textwidth]{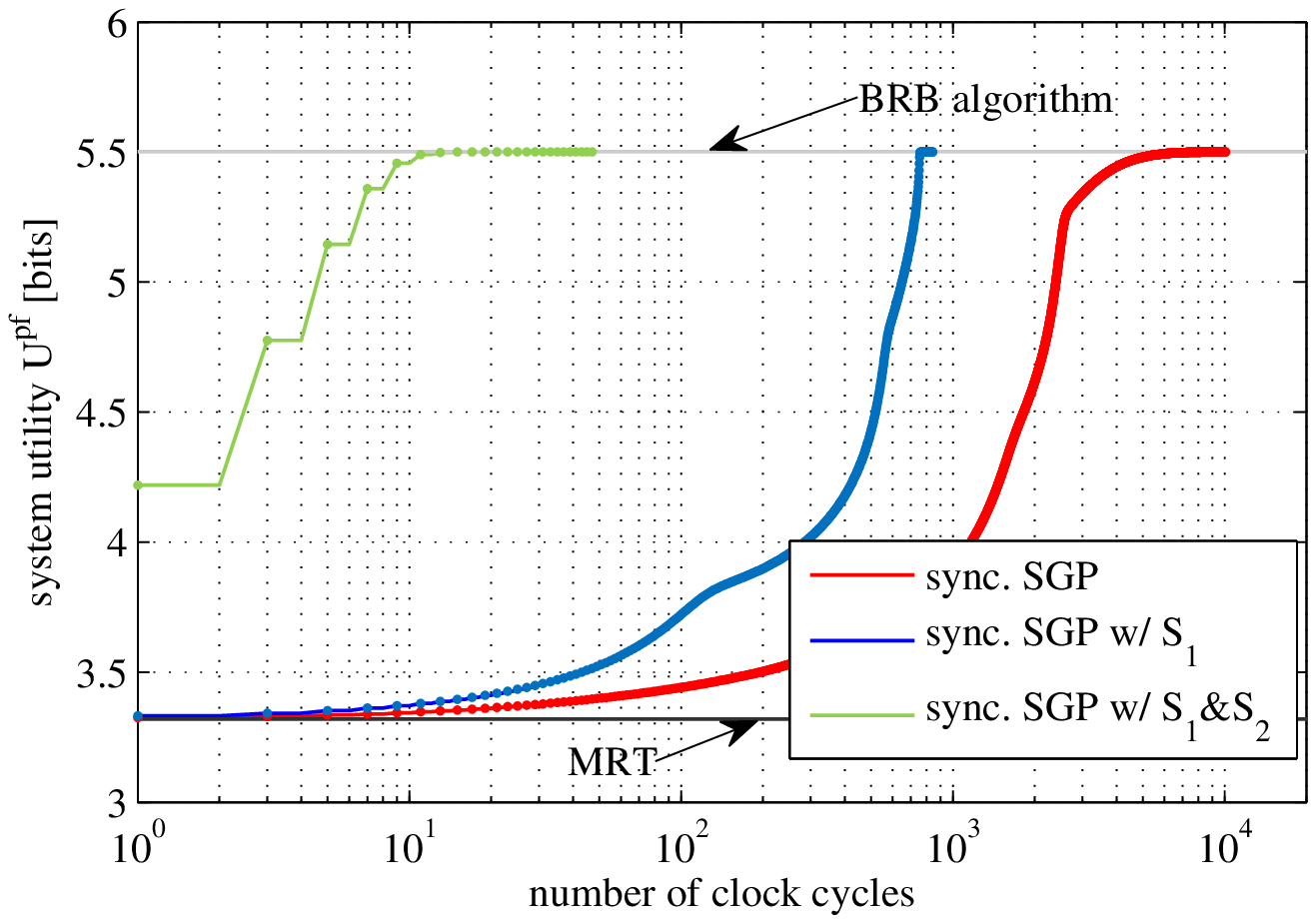}
\caption{Comparison of different speed-up methods for the sync. SGP. The plot shows the system utility $U^\mathrm{pf}$ as a function of the time index for a mesh backhaul network with communication delay $D=1$.}
\label{P2fig:convSpeedup}
%
\includegraphics[width=0.5\textwidth]{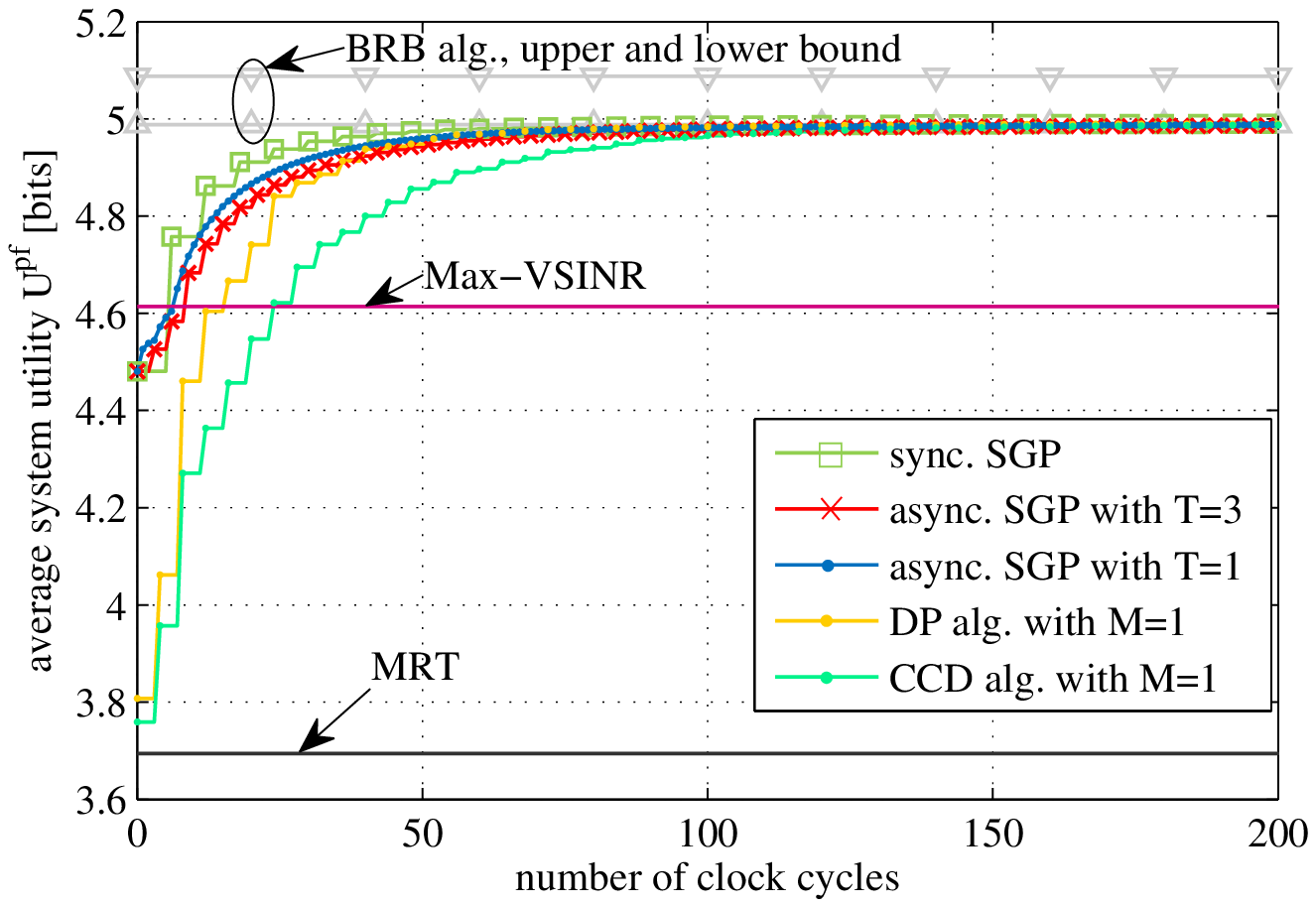}
\caption{Comparison of synchronous and asynchronous algorithms for NW1 with $D=3$. The plot shows the system utility $U^\mathrm{pf}$ as a function of the time index. The results are averaged over 100 channel realizations. }
\label{PXfig:bhnws1}
\end{figure}

\subsubsection{Comparison of Synchronous and Asynchronous Algorithms} 
Next, we compare the convergence rates of asynchronous and synchronous algorithms for different backhaul topologies. One should note that the parameter space of feasible algorithm and network configurations is very large. In the following, we concentrate on a set of parameters that best illustrates the principle behavior of the algorithms:
\begin{itemize}
	\item For each backhaul link, we assume the delay $D=3$.
	\item The asynchronous SGP algorithm is evaluated for two different update cycles $T=1$ and $T=3$. 
	\item The DP (resp. CCD) algorithm requires $M=1$ clock cycle for the 'measurement and reporting' task.	
\end{itemize}
As reference cases we plot the upper and lower bound for the optimal utility value, obtained by the BRB algorithm with $\epsilon=10^{-1}$, and the utility obtained by the maximum virtual SINR (Max-VSINR) beamforming algorithm from \cite{VSINR}.\\
Figures \ref{PXfig:bhnws1}-\ref{PXfig:bhnws3} illustrate the average utility $U^\mathrm{pf}$ as a function of the processing time. The results are averaged over 100 channel realizations. Numerical results are given in Table \ref{PX:tab:numCyclesConv}, which provides the mean and standard deviation of the number of (simulation) clock cycles needed for the algorithms' convergence. For each channel realization, the algorithms are run for a maximum number of $n_\mathrm{max}=10\,000$ clock cycles. The time instant of convergence is determined when the algorithm achieves 99\% of the utility $U^\mathrm{pf}(\mathbfit{X}(n_\mathrm{max}))$. 

\begin{figure}[t]
\centering
\includegraphics[width=0.5\textwidth]{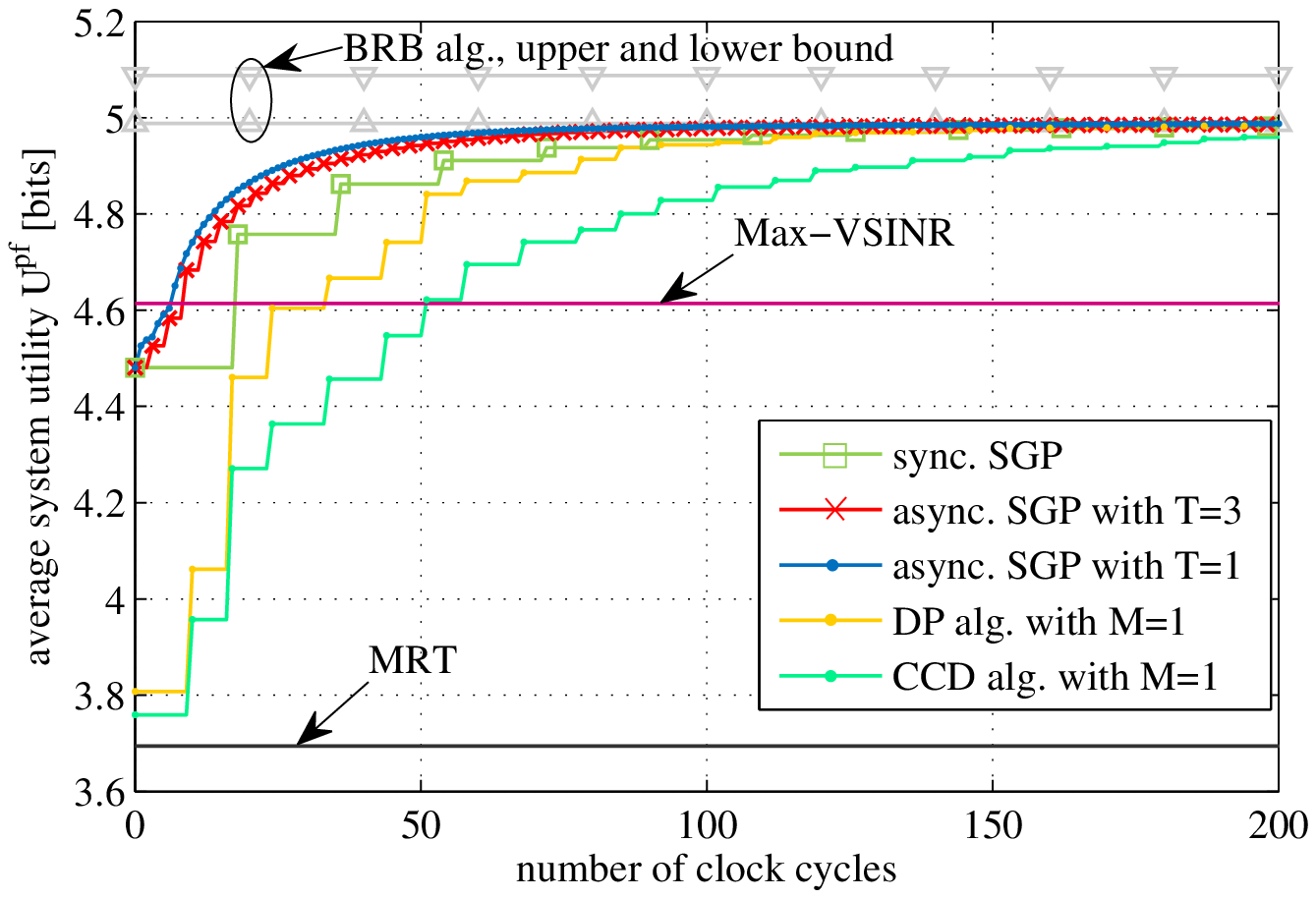}
\caption{Comparison of synchronous and asynchronous algorithms for NW2 with $D=3$. The plot shows the system utility $U^\mathrm{pf}$ as a function of the time index. The results are averaged over 100 channel realizations. }
\label{PXfig:bhnws2}
%
%
%
\includegraphics[width=0.5\textwidth]{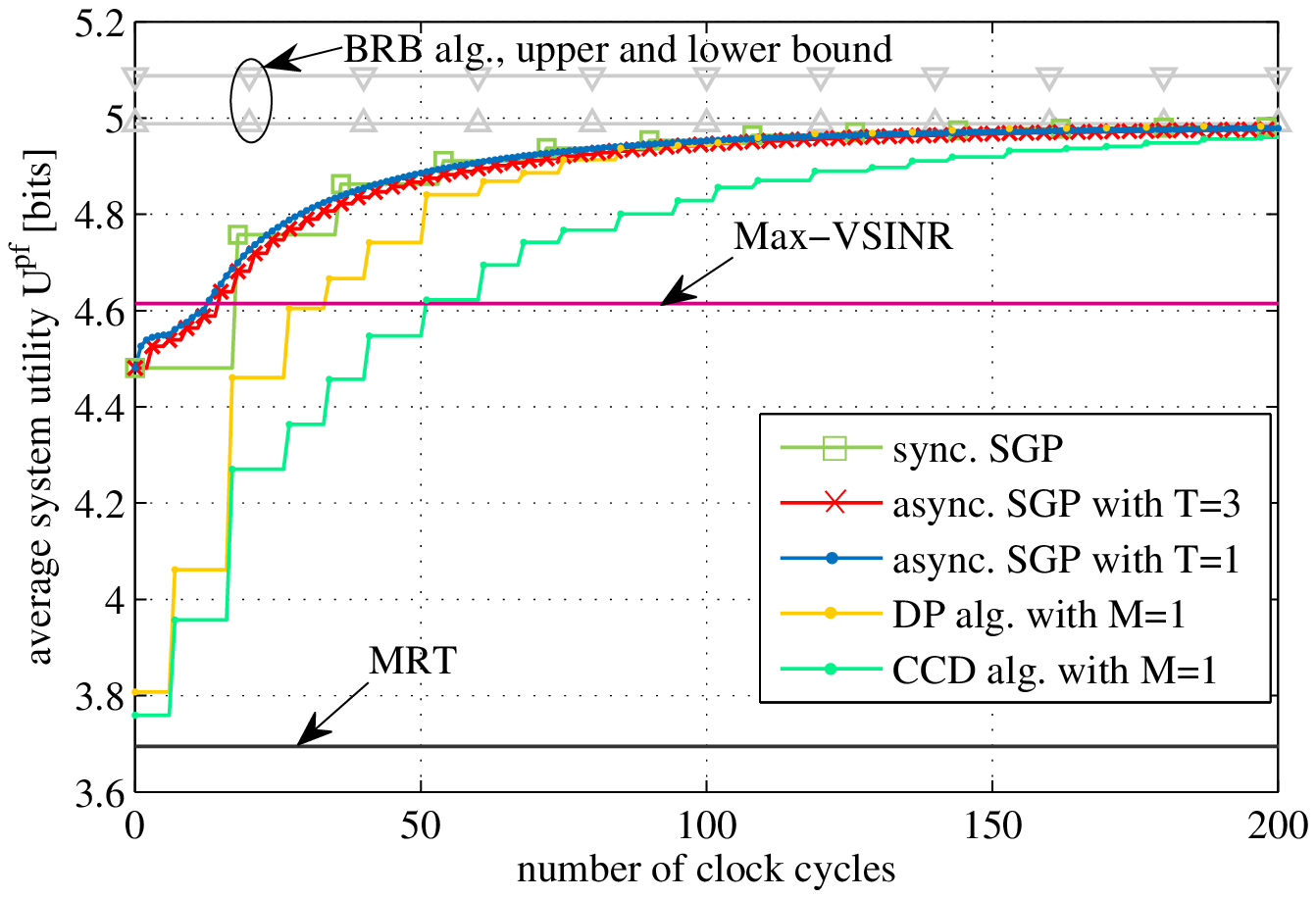}
\caption{Comparison of synchronous and asynchronous algorithms for NW3 with $D=3$. The plot shows the system utility $U^\mathrm{pf}$ as a function of the time index. The results are averaged over 100 channel realizations. }
\label{PXfig:bhnws3}
\end{figure}

\begin{table}[t]   
\small
\begin{tabular}{|l|l|l|l|l|l|}
    \hline
		             & sync. SGP     & \multicolumn{2}{c|}{async. SGP with}  & DP with          & CCD with \\
		             &               & $T=3$  &$T=1$   &   $M=1$    &  $M=1$ \\
		\hline
		\hline
		NW1 & $33.16$   & $56.26 $    & $44.86$       & $48.68$    &  $81.72$ \\
		      & $\pm 37.12$ & $\pm 64.78$    &  $\pm 50.17$    &  $\pm 36.13$ & $\pm 33.20$ \\
		\hline
		NW2 & $84.52$   & $56.80 $     & $45.44 $     & $102.30$ & $172.68$ \\
		      & $\pm 61.90$ & $\pm 66.56$    & $\pm 51.95$    & $\pm 76.07$&  $\pm 71.00$ \\
		\hline
		      & $84.52$   & $95.86$      & $71.96$      & $102.00 $   & $172.38$  \\		
		NW3 &  $\pm 61.90$ &$\pm 83.72$    & $\pm 44.58$    & $\pm 75.58$& $\pm 70.11$ \\		
			\hline	
\end{tabular}
\caption{Mean and standard deviation of the number of required clock cycles until convergence   \label{PX:tab:numCyclesConv}}
\vspace{-0.5cm}
\end{table}

Figure \ref{PXfig:bhnws1} shows the simulation results for the mesh backhaul network. First note that on average, all iterative algorithms converge to the same fixed point, which is close to the optimal solution as indicated by the lower bound of the BRB algorithm. By focusing on the convergence time, the synchronous SGP algorithm outperforms the asynchronous SGP algorithm variants. The reason is the loose bounding procedure for the gradient errors within the asynchronous SGP algorithm. However, as indicated in Figure \ref{PXfig:bhnws1}, decreasing the update interval $T$ of the asynchronous SGP algorithm reduces the convergence time. One can conclude that performing more frequent computations based on outdated information is beneficial. Moreover, the asynchronous SGP algorithm with $T=1$ outperforms the sequential DP and CCD algorithms, which primarily suffer from the synchronization penalty.\\
In Figures \ref{PXfig:bhnws2} and \ref{PXfig:bhnws3}, the simulation results for the two linear daisy chain networks are shown. Both networks induce diverse communication delays between the users. In NW2, these delays are matched to the channel gains, so that strongly coupled users are faced with small backhaul delays. In NW3, the backhaul links are permuted so that strongly coupled users observe large backhaul delays.
For the matched NW2, one can observe a clear benefit for the asynchronous SGP algorithm, which is able to exploit the fast backhaul links for exchanging substantial problem data. Here, the asynchronous SGP variants outperform all synchronous algorithms; that is, the convergence time is reduced by performing more frequent computations based on outdated information. Note that the relationship between the significance of problem data and the corresponding communication delays is critical. The NW3 possesses the same number of fast backhaul links, but the benefit of asynchronous computations is not present because the fast links carry insignificant data.

\subsubsection{Characterization of the SGP Algorithm's Limit Points} 
Finally, we check whether the limit points of the SGP algorithm are (local) maxima. Theoretically, each limit point can be a saddle point or a (local) minimum, although the latter case is very unlikely because the minima are repulsive. We evaluate 100 channel realizations. The SGP algorithm is run until convergence or a maximum number of iterations $n_\mathrm{max}=10\,000$ is reached. The convergence criterion is $\left\|\mathbfit{x}_k(n)-\mathbfit{x}_k(n-1)\right\|<10^{-6},\forall k$. After convergence, we use MATLAB function $\mathrm{fmincon}$ in order to maximize $U^\mathrm{pf}(\mathbfit{w}_1,\ldots,\mathbfit{w}_K)$, $\mathrm{s.t.}\left\|\mathbfit{w}_k\right\|\leq 1,\forall k$, starting from the limit point achieved by the SGP algorithm. The mean (resp. standard deviation) of the utility improvement is $4.231\cdot10^{-5}$[bits] (resp. $5.141\cdot10^{-5}$[bits]), which indicates for our simulations that the SGP algorithm always converges to a (local) maximum. Moreover, the corresponding correlation matrices are rank-one in all cases, which is shown by the mean and standard deviation of the smallest eigenvalues of the $2\times2 $ correlation matrices, given by $1.3596\cdot 10^{-7}$ and $6.2396\cdot 10^{-6}$, respectively.

\section{Conclusion}
The sum utility problem for the MISO IFC is re-parameterized in terms of power gains, which allows a condensed description of the user interactions. By adopting the scaled gradient projection (SGP) method, the users are allowed to perform linear update steps autonomously, based on possibly outdated gradient information. Assuming upper bounds on the objective's curvature as well as on the communication delays, we provide sufficient conditions for the asynchronous SGP algorithm that ensure the convergence to a stationary point. As illustrated by our numerical experiments, the derived step size bounds are not tight and thus yield slow convergence. However, we identify a class of utility functions (including the sum rate and proportional fair rate), for which the curvature bounds can be adjusted during the optimization process, while the algorithm's convergence behavior is preserved. Our simulations indicate a significant convergence speed-up for the resulting second-order algorithm. Finally, the convergence rate of different synchronous and asynchronous algorithms is compared. The main insight is that frequent asynchronous computations provide a convergence speed-up if the backhaul structure (in terms of communication delays) is matched to the coupling strength of the subproblems. This should normally be the case for mobile networks. However, the convergence speed-up comes at the cost of an increased number of update computations and exchanged messages.

\appendices

\section{Preliminaries: Convex Geometry and the Joint Numerical Range} \label{app_review}
In this section, we review some basic concepts from convex geometry \cite{hiriart2001fundamentals} that are utilized in Appendices \ref{prfParOpt} and \ref{prfUtilOpt}. We focus on the description of compact convex sets as the intersections of half-spaces, yielding an outer description of these sets. Then, we apply these concepts to the \emph{joint numerical range} \cite{Gutkin2004143}, which plays an essential role in our analysis because it is the generating set for the power gain region.

We begin with nonempty compact subsets $\mathcal{K}$ in $\mathbb{R}^K$. The outer boundary of $\mathcal{K}$ is denoted by $\partial_0 \mathcal{K}$, and is defined as the boundary between $\mathcal{K}$ and the unbounded component of $\mathbb{R}^K\backslash \mathcal{K}$. By $\mathrm{co}(\mathcal{K})$ we denote the convex hull of the set $\mathcal{K}$.

\begin{defn}[Partial Order on Vectors] \label{defDomVec}
Let $\mathbfit{x},\mathbfit{y}\in\mathbb{R}^K$. A vector $\mathbfit{y}$ \emph{dominates} a vector $\mathbfit{x}$ \emph{in direction} $\mathbfit{e}\in\left\{-1,+1\right\}^K$, written as $\mathbfit{y} \geq^{\mathbfit{e}} \mathbfit{x}$, if $y_le_l\geq x_le_l, \forall l$, and the inequality has at least one strict inequality.
\end{defn}

\begin{defn}[Outer Boundary Parts] \label{upperBoundary}
A point $\mathbfit{y}\in\mathbb{R}_+^K$ is called an \emph{outer boundary point} of a nonempty compact subset $\mathcal{K}\subset\mathbb{R}_+^K$ \emph{in direction} $\mathbfit{e}\in\left\{-1,+1\right\}^K$ if $\mathbfit{y}\in\mathcal{K}$ while the set $\left\{\mathbfit{y}'\in\mathbb{R}_+^K:\mathbfit{y}'\geq^\mathbfit{e} \mathbfit{y}\right\} \subset \mathbb{R}_+^K\backslash \mathcal{K}$. The set of all outer boundary points in direction $\mathbfit{e}$ is called the \emph{outer boundary part} of $\mathcal{K}$ \emph{in direction} $\mathbfit{e}$, and is denoted by $\partial_0^\mathbfit{e}\mathcal{K}$.
\end{defn}

Next, we introduce some basic concepts from convex geometry, that will help us to characterize convex sets. 

\begin{defn}[Support Function, Supporting Hyperplane / Halfspace]
Let $\mathcal{K}$ be a nonempty compact subset of $\mathbb{R}^K$ and $\boldsymbol\eta \in \mathbb{R}^K,\boldsymbol\eta\neq 0$. The function $s_{\mathcal{K}}(\boldsymbol\eta)=\max_{\mathbfit{y}\in\mathcal{K}} \boldsymbol\eta^T\mathbfit{y}$ is the \emph{support function} of $\mathcal{K}$ if it is convex and positive homogeneous (i.e., $s_{\mathcal{K}}(\alpha \boldsymbol\eta)=\alpha s_{\mathcal{K}}(\boldsymbol\eta)$ for $\alpha \in \mathbb{R}_+$).\\
The \emph{supporting hyperplane} (resp. \emph{halfspace}) of $\mathcal{K}$ in direction $\boldsymbol\eta$ is given by
$\mathcal{H}\left(\boldsymbol\eta,s_{\mathcal{K}}(\boldsymbol\eta)\right)=\left\{\mathbfit{y}\in\mathbb{R}^K:\boldsymbol\eta^T\mathbfit{y}=s_{\mathcal{K}}(\boldsymbol\eta)\right\}$  (resp. $\mathcal{H}^-\left(\boldsymbol\eta,s_{\mathcal{K}}(\boldsymbol\eta)\right)=\left\{\mathbfit{y}\in\mathbb{R}^K:\boldsymbol\eta^T\mathbfit{y}\leq s_{\mathcal{K}}(\boldsymbol\eta)\right\}$).
\end{defn}
Due to the positive homogeneity, the support function is completely determined by its value on the unit sphere $\mathcal{S}^{K-1}$. Consequently, for $\boldsymbol \eta \in S^{K-1}$, $s_{\mathcal{K}}(\boldsymbol\eta)$ is the signed distance of $\mathcal{H}\left(\boldsymbol\eta,s_{\mathcal{K}}(\boldsymbol\eta)\right)$ from the origin.\\
A fundamental concept in convex geometry is the outer description of convex sets. Every nonempty compact convex set $\mathcal{C}=\mathrm{co}(\mathcal{K})$ is given by the intersection of its supporting halfspaces (\cite[Theorem 2.2.2]{hiriart2001fundamentals}; that is,
\begin{align}
\mathcal{C}&=\bigcap_{\boldsymbol\eta\in\mathcal{S}^{K-1}} \mathcal{H}^- \left(\boldsymbol\eta,s_{\mathcal{K}}(\boldsymbol\eta)\right)\\
&= \left\{\mathbfit{y}\in\mathbb{R}^K\left|\right.\boldsymbol\eta^T\mathbfit{y}\leq s_{\mathcal{K}}(\boldsymbol\eta):\boldsymbol\eta \in \mathcal{S}^{K-1}\right\}.
\end{align}
Note that the support function determines the set $\mathcal{C}$ uniquely. It can be used to describe certain geometric properties of convex sets analytically.

Next, we describe specific parts of the (outer) boundary of $\mathcal{K}$, which are determined by the surface normal vector $\boldsymbol\eta$.
\begin{defn}[Exposed Face]
The \emph{exposed face} of $\mathcal{K}$ with the surface normal $\boldsymbol \eta \in S^{K-1}$ is given by the support set
\begin{align}
\Phi_{\mathcal{K}}(\boldsymbol\eta)=\mathcal{K} \cap \mathcal{H}\left(\boldsymbol\eta,s_{\mathcal{K}}(\boldsymbol\eta)\right).
\end{align}
\end{defn}

By \cite[Proposition 3.1]{Gutkin2004143} we have $\partial\mathrm{co}(\mathcal{K})=\partial_0\mathcal{K}$ if and only if $\Phi_{\mathcal{K}}(\boldsymbol\eta)$ is convex for any $\boldsymbol\eta \in \mathcal{S}^{K-1}$.\\

We now present a few general results pertaining to the joint numerical range and its convex hull. Moreover, we illustrate its connection to the power gain region defined in \eqref{eq:pg_region:def}.
\begin{defn}[Joint Numerical Range]
Let $\mathbfit{H}=(\mathbfit{H}_1,\ldots,\mathbfit{H}_K)^H$ be a $K$-tuple of Hermitian matrices with $\mathbfit{H}_l \in \mathbb{C}^{N\times N},\forall l$. The \emph{joint numerical range} (JNR) of the matrices $\mathbfit{H}_1,\ldots,\mathbfit{H}_K$ is defined as
\begin{align}
\mathcal{F}(\mathbfit{H})=&\left\{\left(\mathbfit{w}^H\mathbfit{H}_1\mathbfit{w},\ldots, \mathbfit{w}^H\mathbfit{H}_K\mathbfit{w}  \right)^T:\right.\nonumber\\
&\left. \mathbfit{w}\in\mathbb{C}^N,\left\|\mathbfit{w}\right\|=1\right\}.
\end{align}
\end{defn}
This set is compact but for $K>2$ not necessarily convex. For $K\leq 3$, the outer boundary of $\mathcal{F}(\mathbfit{H})$ is convex. For recent studies and developments concerning the (lack of) convexity for the joint numerical range see, e.g., \cite{Gutkin2004143},\cite{PoonJNR}.\\
If $\mathcal{V} \subset \mathbb{C}^N$ is a subspace then the joint numerical range of the restriction of $\mathbfit{H}$ to $\mathcal{V}$ is denoted by  
\begin{align}
\mathcal{F}\left(\mathbfit{H};\mathcal{V}\right)=&\left\{\left(\mathbfit{w}^H\mathbfit{H}_1\mathbfit{w},\ldots, \mathbfit{w}^H\mathbfit{H}_K\mathbfit{w}  \right)^T:\right.\nonumber\\
 &\left.\mathbfit{w}\in\mathcal{V},\left\|\mathbfit{w}\right\|=1\right\}.\nonumber
\end{align}

The convex hull of the set $\mathcal{F}(\mathbfit{H})$ is referred to as the \emph{joint field of values} (JFV) \cite{barker}, and is given by 
\begin{align}
\mathcal{W}(\mathbfit{H})=&\mathrm{co}\left(\mathcal{F}(\mathbfit{H})\right) \label{eq_WcoF}\\
=&\left\{\left(\mathrm{tr}(\mathbfit{Q}\mathbfit{H}_{1}),\ldots, \mathrm{tr}(\mathbfit{Q}\mathbfit{H}_{K})\right)^T:\right.\nonumber\\
&\left. \mathbfit{Q} \in \mathbb{C}^{N\times N},\mathrm{tr}\left(\mathbfit{Q}\right) = 1, \mathbfit{Q}\succeq 0\right\}.
\end{align}
By rewriting $\mathbfit{w}^H\mathbfit{H}_k\mathbfit{w}=\mathrm{tr}(\mathbfit{w}\mathbfit{w}^H\mathbfit{H}_k),\forall k$, the difference between $\mathcal{F}(\mathbfit{H})$ and $\mathcal{W}(\mathbfit{H})$ can be easily observed: The JNR is generated by Hermitian rank-one matrices only, while the JFV is obtained by using Hermitian matrices of arbitrary rank. 

Finally, the set $\mathcal{W}(\mathbfit{H})$ can be used to define a generalization of the power gain region $\Omega_k$ for $K$-tuples of arbitrary Hermitian matrices, which yields
\begin{align}
\Omega(\mathbfit{H})=&\mathrm{co}\left(\boldsymbol{0}\cup\mathcal{W}(\mathbfit{H})\right)\label{eq_OcoW0}\\
=&\left\{\left(\mathrm{tr}(\mathbfit{Q}\mathbfit{H}_{1}),\ldots, \mathrm{tr}(\mathbfit{Q}\mathbfit{H}_{K})\right)^T: \mathbfit{Q} \in \mathcal{Q}\right\}
\end{align}
where $\mathcal{Q}=\left\{\mathbfit{Q} \in \mathbb{C}^{N \times N}, \mathrm{tr}(\mathbfit{Q})\leq 1, \mathbfit{Q}\succeq 0\right\}$.
By setting $\mathbfit{H}=(\mathbfit{h}_{k1}\mathbfit{h}_{k1}^H,\ldots,\mathbfit{h}_{kK}\mathbfit{h}_{kK}^H)^H$, we obtain the $k$-th power gain region $\Omega_k=\Omega(\mathbfit{H})$. The relationship between $\Omega_k$ and $\mathcal{F}(\mathbfit{H})$ is illustrated in Fig. \ref{fig:jnr_pgregion} for the power gain region of user $1$ in the two-user case.

\begin{figure}[t]
\centering
\includegraphics[width=7.0cm]{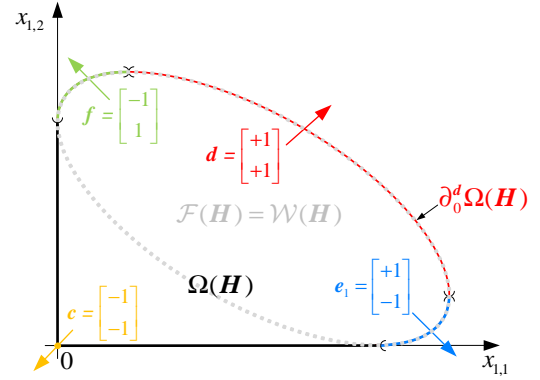}
\caption{Illustration of a two-dimensional joint numerical range $\mathcal{F}(\mathbfit{H})$ (dotted gray curve) and the corresponding power gain region $\Omega(\mathbfit{H})$ (solid curve) with its outer boundary parts (e.g., red curve $\partial_0^{\mathbfit{d}}\Omega(\mathbfit{H})$). Since $K=2$, we have $\mathcal{F}(\mathbfit{H})=\mathcal{W}(\mathbfit{H})$; that is, the joint numerical range is convex.}
\label{fig:jnr_pgregion}
\end{figure}

\begin{prop}[Properties of the JNR/JFV/Power Gain Region]\label{prop1} The following claims hold:
\begin{itemize}
  \item[(i)] The support function for the set $\mathcal{F}(\mathbfit{H})$ is given by\footnote{Note that $\boldsymbol{\eta}^T\mathbfit{H}$ represents the 'inner product' of a $K$-dimensional vector $\boldsymbol{\eta}$ with the $K$-tuple $\mathbfit{H}=(\mathbfit{H}_1,\ldots,\mathbfit{H}_K)^H$; i.e., $\boldsymbol{\eta}^T\mathbfit{H}=\sum_{l=1}^K \eta_l \mathbfit{H}_l$.} $s_{\mathcal{F}(\mathbfit{H})}(\boldsymbol\eta)=\lambda_1(\boldsymbol{\eta}^T\mathbfit{H})$.
	\item[(ii)] The subset of outer boundary points of $\mathcal{F}(\mathbfit{H})$ in direction $\boldsymbol\eta$ is given by $\Phi_{\mathcal{F}(\mathbfit{H})}(\boldsymbol\eta)=\mathcal{F}\left(\mathbfit{H};\mathcal{E}_1(\boldsymbol{\eta}^T\mathbfit{H})\right)$. If $\mathrm{dim}\left(\mathcal{E}_1(\boldsymbol{\eta}^T\mathbfit{H})\right)=1$ then $\Phi_{\mathcal{F}(\mathbfit{H})}(\boldsymbol\eta)$ is a singleton.
	\item[(iii)] The sets $\mathcal{F}(\mathbfit{H})$ and $\mathcal{W}(\mathbfit{H})$ share the same support function; that is, $s_{\mathcal{W}(\mathbfit{H})}(\boldsymbol\eta)=s_{\mathcal{F}(\mathbfit{H})}(\boldsymbol\eta)$.
	\item[(iv)] We have
$\Phi_{\mathcal{W}(\mathbfit{H})}(\boldsymbol\eta) = \mathrm{co}\left(\Phi_{\mathcal{F}(\mathbfit{H})}(\boldsymbol\eta)\right)$.
\end{itemize}
For the last two claims, we further assume $\mathbfit{H}=(\mathbfit{h}_{k1}\mathbfit{h}_{k1}^H,\ldots,\mathbfit{h}_{kK}\mathbfit{h}_{kK}^H)^H$:
\begin{itemize}
  \item[(v)] For all $\boldsymbol \eta \in S^{K-1}$, we have $s_{\Omega(\mathbfit{H})}(\boldsymbol\eta)=\mathrm{max}\left(0,\lambda_1(\boldsymbol{\eta}^T\mathbfit{H})\right) \geq 0$ .
	\item[(vi)] For $s_{\Omega(\mathbfit{H})}(\boldsymbol\eta)>0$, we have $\Phi_{\Omega(\mathbfit{H})}(\boldsymbol\eta)=\Phi_{\mathcal{W}(\mathbfit{H})}(\boldsymbol\eta)$.
\end{itemize}
\end{prop}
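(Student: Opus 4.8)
The plan is to dispatch the six claims in order, since each leans on the earlier ones, using just two ingredients: the variational (Rayleigh--Ritz) characterization of the largest eigenvalue of a Hermitian matrix, and a few elementary facts about support functions and exposed faces of convex hulls. For (i), $\boldsymbol{\eta}^T\mathbfit{H}=\sum_l\eta_l\mathbfit{H}_l$ is Hermitian, so $s_{\mathcal{F}(\mathbfit{H})}(\boldsymbol{\eta})=\max_{\|\mathbfit{w}\|=1}\mathbfit{w}^H(\boldsymbol{\eta}^T\mathbfit{H})\mathbfit{w}=\lambda_1(\boldsymbol{\eta}^T\mathbfit{H})$ straight from the definition. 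For (ii), a point $\mathbfit{y}=(\mathbfit{w}^H\mathbfit{H}_l\mathbfit{w})_l\in\mathcal{F}(\mathbfit{H})$ lies on the supporting hyperplane $\mathcal{H}(\boldsymbol{\eta},s_{\mathcal{F}(\mathbfit{H})}(\boldsymbol{\eta}))$ iff $\mathbfit{w}^H(\boldsymbol{\eta}^T\mathbfit{H})\mathbfit{w}=\lambda_1(\boldsymbol{\eta}^T\mathbfit{H})$ with $\|\mathbfit{w}\|=1$; expanding $\mathbfit{w}$ in an orthonormal eigenbasis of $\boldsymbol{\eta}^T\mathbfit{H}$ shows this equality forces $\mathbfit{w}\in\mathcal{E}_1(\boldsymbol{\eta}^T\mathbfit{H})$, which is precisely $\Phi_{\mathcal{F}(\mathbfit{H})}(\boldsymbol{\eta})=\mathcal{F}(\mathbfit{H};\mathcal{E}_1(\boldsymbol{\eta}^T\mathbfit{H}))$, and when $\mathcal{E}_1(\boldsymbol{\eta}^T\mathbfit{H})$ is one-dimensional every unit vector in it equals $c\,\mathbfit{v}$ for a fixed unit $\mathbfit{v}$ and some $c$ with $|c|=1$, so $c$ cancels in each $\mathbfit{w}^H\mathbfit{H}_l\mathbfit{w}$ and the exposed face is a singleton.

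Claims (iii) and (iv) are the generic facts that a compact set and its convex hull share a support function, and that the exposed face of the convex hull in a direction is the convex hull of the exposed face of the original set. The first is immediate from $\mathcal{W}(\mathbfit{H})=\mathrm{co}(\mathcal{F}(\mathbfit{H}))$ because a linear functional attains the same maximum over a set and over its convex hull. For the second I would check both inclusions: by linearity any convex combination of points of $\Phi_{\mathcal{F}(\mathbfit{H})}(\boldsymbol{\eta})$ still attains $s_{\mathcal{F}(\mathbfit{H})}(\boldsymbol{\eta})=s_{\mathcal{W}(\mathbfit{H})}(\boldsymbol{\eta})$, hence lies in $\Phi_{\mathcal{W}(\mathbfit{H})}(\boldsymbol{\eta})$; conversely, writing $\mathbfit{y}\in\Phi_{\mathcal{W}(\mathbfit{H})}(\boldsymbol{\eta})$ via Carath\'eodory as $\sum_i\alpha_i\mathbfit{y}_i$ with $\mathbfit{y}_i\in\mathcal{F}(\mathbfit{H})$ and $\alpha_i>0$, the chain $s=\boldsymbol{\eta}^T\mathbfit{y}=\sum_i\alpha_i\,\boldsymbol{\eta}^T\mathbfit{y}_i\le s$ forces $\boldsymbol{\eta}^T\mathbfit{y}_i=s$ for every $i$, i.e.\ $\mathbfit{y}_i\in\Phi_{\mathcal{F}(\mathbfit{H})}(\boldsymbol{\eta})$.

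For (v) and (vi) the key reduction is that, $\mathcal{W}(\mathbfit{H})$ being convex, the definition \eqref{eq_OcoW0} yields $\Omega(\mathbfit{H})=\{t\mathbfit{z}:t\in[0,1],\,\mathbfit{z}\in\mathcal{W}(\mathbfit{H})\}$, since every convex combination of $\boldsymbol{0}$ and points of $\mathcal{W}(\mathbfit{H})$ collapses to such a scaling. Consequently $s_{\Omega(\mathbfit{H})}(\boldsymbol{\eta})=\max_{t\in[0,1],\,\mathbfit{z}\in\mathcal{W}(\mathbfit{H})}t\,\boldsymbol{\eta}^T\mathbfit{z}=\max\bigl(0,\,s_{\mathcal{W}(\mathbfit{H})}(\boldsymbol{\eta})\bigr)$, which with (i) and (iii) gives (v), nonnegativity being automatic. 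For (vi), when $s_{\Omega(\mathbfit{H})}(\boldsymbol{\eta})>0$ a scaled point $t\mathbfit{z}$ attains it only if $t=1$ and $\boldsymbol{\eta}^T\mathbfit{z}=s_{\mathcal{W}(\mathbfit{H})}(\boldsymbol{\eta})$, i.e.\ $\mathbfit{z}\in\Phi_{\mathcal{W}(\mathbfit{H})}(\boldsymbol{\eta})$; conversely every such $\mathbfit{z}$ already lies in $\Omega(\mathbfit{H})$ and attains the maximum, so $\Phi_{\Omega(\mathbfit{H})}(\boldsymbol{\eta})=\Phi_{\mathcal{W}(\mathbfit{H})}(\boldsymbol{\eta})$.

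None of these steps is deep; the places where I expect to have to be careful are the bookkeeping around the origin in (v)--(vi) --- justifying the representation $\Omega(\mathbfit{H})=[0,1]\cdot\mathcal{W}(\mathbfit{H})$ cleanly and separating the case $\lambda_1(\boldsymbol{\eta}^T\mathbfit{H})\le 0$, where the maximum is attained at $\boldsymbol{0}$ and the exposed face genuinely differs from $\Phi_{\mathcal{W}(\mathbfit{H})}(\boldsymbol{\eta})$, from the case $\lambda_1(\boldsymbol{\eta}^T\mathbfit{H})>0$ --- and making the finite-convex-combination step in (iv) rigorous (Carath\'eodory, plus compactness of $\mathcal{F}(\mathbfit{H})$ if one wants the number of terms bounded).
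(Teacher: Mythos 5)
Your proof is correct, and its skeleton is the same as the paper's: the same six-claim decomposition, the same use of the relations $\mathcal{W}(\mathbfit{H})=\mathrm{co}(\mathcal{F}(\mathbfit{H}))$ from \eqref{eq_WcoF} and $\Omega(\mathbfit{H})=\mathrm{co}(\boldsymbol{0}\cup\mathcal{W}(\mathbfit{H}))$ from \eqref{eq_OcoW0} to pass from (i)--(ii) to (iii)--(vi). The difference is one of self-containedness: the paper disposes of (i), (ii) and (iv) by citing Gutkin et al.\ (Propositions 3.5 and 3.1 on the joint numerical range) and treats (iii), (v), (vi) as immediate from the two relations, whereas you re-derive the cited facts from first principles --- Rayleigh--Ritz for (i), the eigenbasis expansion forcing $\mathbfit{w}\in\mathcal{E}_1(\boldsymbol{\eta}^T\mathbfit{H})$ for (ii) (including the phase-invariance argument for the singleton case), the standard two-inclusion argument with finite convex combinations for (iv), and the explicit representation $\Omega(\mathbfit{H})=\{t\mathbfit{z}:t\in[0,1],\,\mathbfit{z}\in\mathcal{W}(\mathbfit{H})\}$ (valid because $\mathcal{W}(\mathbfit{H})$ is convex) for (v)--(vi), with the correct case split at $\lambda_1(\boldsymbol{\eta}^T\mathbfit{H})\le 0$ and the observation that positivity of the support value forces $t=1$ in (vi). Your route buys a proof that does not lean on the external JNR reference and makes the minor points (uniqueness of the exposed face, handling of the origin) explicit; the paper's route is shorter at the cost of delegating exactly those steps. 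One cosmetic remark: Carath\'eodory is not needed in (iv) --- membership in the convex hull already means a finite convex combination of points of $\mathcal{F}(\mathbfit{H})$ --- but invoking it does no harm.
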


\begin{proof} 
The claims (i) and (ii) are given by \cite[Proposition 3.5]{Gutkin2004143}. The third claim follows from relation \eqref{eq_WcoF}. Claim (iv) follows from \cite[Proposition 3.1]{Gutkin2004143} and relation \eqref{eq_WcoF}. The fifth claim is immediate from claim (iii) and relation \eqref{eq_OcoW0}; that is, $s_{\Omega(\mathbfit{H})}(\boldsymbol\eta)=\mathrm{max}(0,s_{\mathcal{W}(\mathbfit{H})}(\boldsymbol\eta))= \mathrm{max}\left(0,\lambda_1(\boldsymbol{\eta}^T\mathbfit{H})\right)\geq 0$. Claim (vi) follows from claim (v).
\end{proof}

\section{Proof of Theorem \ref{thm1}}\label{prfParOpt}
The proof that single-stream beamforming is sufficient to achieve all Pareto optimal points is accomplished in two steps. The first part is identical to \cite[Theorem 2]{5643183}, \cite[Lemma 1.5]{CIT-069}. There, it is shown that the Pareto boundary $\mathcal{PB}(\mathcal{U})$ is achieved by transmit correlation matrices $\mathbfit{Q}_1,\ldots,\mathbfit{Q}_K$ that, for each $k$, also achieve the outer boundary part of the power gain region $\Omega_k$ in the direction $\mathbfit{e}_k=\left[-1 \ldots -1 +1 -1 \ldots -1\right]^T$, where only the $k$-th component is positive. The proof works by contradiction. Assume that $\mathbfit{Q}_1,\ldots,\mathbfit{Q}_K$ achieve a point on the Pareto boundary $\mathcal{PB}(\mathcal{U})$ but there is a user $k$ whose power gain vector $\mathbfit{x}_k(\mathbfit{Q}_k)$ is not on the outer boundary part $\partial_0^\mathbfit{e_k}\Omega_k$. Then, it is possible to increase the $k$-th component of $\mathbfit{x}_k(\mathbfit{Q}_k)$ without changing the other components. By Assumption \ref{ukProp} on the monotonicity of $u_k$, the modified power gain vector leads to an improved utility of user $k$ and unchanged utilities for all other users. This is a contradiction to the assumption that $\mathbfit{Q}_1,\ldots,\mathbfit{Q}_K$ achieved the Pareto boundary $\mathcal{PB}(\mathcal{U})$.

In the second part of the proof, we show that \emph{all} boundary points in $\partial_0^\mathbfit{e_k}\Omega_k$  can be achieved by correlation matrices with $\mathrm{rank}(\mathbfit{Q}_k)\leq 1$. By symmetry, it suffices to consider the $k$-th user only. For our analysis, we adopt the methods from convex geometry introduced in Appendix \ref{app_review}. We begin with a review of the solution approach from \cite[Lemma 3]{5643183} and \cite[Lemma 1.7]{CIT-069}, in which the problem is examined for some \emph{arbitrary} outer boundary part; that is, $\mathbfit{e}\in \left\{-1,+1\right\}^K$. Based on the Supporting Hyperplane Theorem \cite[Theorem 1.5]{tuy1998convex}, the authors in \cite{5643183},\cite{CIT-069} characterize every exposed face $\Phi_{\Omega_k}(\boldsymbol\eta)$ of $\Omega_k$ (with normal vector $\boldsymbol\eta\in \mathcal{S}^{K-1}$) by the following optimization problem
\begin{align}
&\max_{\mathbfit{Q}_k\in\mathcal{Q}} \boldsymbol\eta^T \mathbfit{x}_k(\mathbfit{Q}_k)\;\;\;\;\mathrm{s.\,t.}\; \mathrm{tr}(\mathbfit{Q}_k)\leq 1.   \tag{$\mathrm{P}_4$} \label{expFaceProb}\nonumber
\end{align}
They show that Problem \eqref{expFaceProb} has always solutions with $\mathrm{rank}(\mathbfit{Q}_k)\leq 1$; that is, there always exists a point $\mathbfit{y}\in \Phi_{\Omega_k}(\boldsymbol\eta)$ that is achieved by correlation matrix with $\mathrm{rank}(\mathbfit{Q}_k)\leq 1$. 
Unfortunately, the set $\Phi_{\Omega_k}(\boldsymbol\eta)$ is not necessarily a singleton (i.e., Problem \eqref{expFaceProb} may have several solutions), which is the case when there exist multiple points on the outer boundary  $\partial_0\Omega_k$ with the same normal vector $\boldsymbol\eta$. 
In order to complete the proof of \cite[Lemma 3]{5643183} and \cite[Lemma 1.7]{CIT-069}, it must be shown that \emph{all} elements of $\Phi_{\Omega_k}(\boldsymbol\eta)$ can be achieved by correlation matrices with $\mathrm{rank}(\mathbfit{Q}_k)\leq 1$. 

We briefly illustrate the difficulty of this problem. By Proposition \ref{prop1} (v), the optimal value of Problem \eqref{expFaceProb} is given by the support function $s_{\Omega_k}(\boldsymbol\eta)\geq 0$. Now, consider an exposed face $\Phi_{\Omega_k}(\boldsymbol\eta)$ of $\Omega_k$ with $s_{\Omega_k}(\boldsymbol\eta)> 0$. By Proposition \ref{prop1} (vi) and (iv), the exposed face can be written as $\Phi_{\Omega_k}(\boldsymbol\eta)=\Phi_{\Omega(\mathbfit{H})}(\boldsymbol\eta)=\Phi_{\mathcal{W}(\mathbfit{H})}(\boldsymbol\eta)=\mathrm{co}\left(\Phi_{\mathcal{F}(\mathbfit{H})}(\boldsymbol\eta)\right)$ with $\mathbfit{H}=(\mathbfit{h}_{k1}\mathbfit{h}_{k1}^H,\ldots,\mathbfit{h}_{kK}\mathbfit{h}_{kK}^H)^H$. This means that we have to show that the set $\Phi_{\mathcal{F}(\mathbfit{H})}(\boldsymbol\eta)$ is convex. By Proposition \ref{prop1} (ii), this set is given by $\Phi_{\mathcal{F}(\mathbfit{H})}(\boldsymbol\eta)=\mathcal{F}\left(\mathbfit{H};\mathcal{E}_1(\boldsymbol{\eta}^T\mathbfit{H})\right)$; that is, the exposed face is itself a joint numerical range. Since none of the known conditions for convexity of the joint numerical range (see, e.g., \cite{PoonJNR}) applies for the general case with arbitrary $N$ and $K$, the problem as treated in \cite{5643183},\cite{CIT-069} remains \emph{unsolved}.

However, in order to prove the sufficiency of single-stream beamforming for Pareto optimality, it suffices to consider only the outer boundary part $\partial_0^\mathbfit{e_k}\Omega_k$. As illustrated in Fig. \ref{fig:jnr_pgregion}, this boundary part corresponds to the set of exposed faces $\Phi_{\Omega_k}(\boldsymbol\eta)$ with the normal vectors $\boldsymbol\eta\in \mathcal{T}_k$, where $\mathcal{T}_{k}=\left\{\boldsymbol\eta \in \mathcal{S}^{K-1}: \eta_k > 0, \eta_l < 0,\forall l\neq k\right\}$. The idea behind our proof is to distinguish between exposed faces on the conical boundary part of $\Omega_k$ (i.e., all sets $\Phi_{\Omega_k}(\boldsymbol\eta)$ with $s_{\Omega_k}(\boldsymbol\eta)= 0$), and exposed faces with $s_{\Omega_k}(\boldsymbol\eta)>0$. For the latter set we show that if $\boldsymbol\eta\in \mathcal{T}_k$ then $\Phi_{\Omega_k}(\boldsymbol\eta)$ is always a singleton and thus convex. We then complete the proof by showing that the exposed faces on the conical boundary part with $\boldsymbol\eta\in \mathcal{T}_k$ can be reached by scaled versions of the exposed faces that are singletons. The formal proof is given by Proposition \ref{prop3893883} with $\Omega_k=\Omega(\mathbfit{H})$.

\begin{prop} \label{prop3893883}
If $\boldsymbol\eta \in \mathcal{T}_{k}$ then all points in $\Phi_{\Omega(\mathbfit{H})}(\boldsymbol\eta)$ can be achieved by correlation matrices with $\mathrm{rank}(\mathbfit{Q}_k)\leq 1$.
\end{prop}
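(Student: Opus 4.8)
The plan is to reduce the claim, via the support‑function calculus of Proposition~\ref{prop1}, to two cases distinguished by the sign of $\lambda_1(\boldsymbol{\eta}^T\mathbfit{H})$. Here $\boldsymbol{\eta}^T\mathbfit{H}=\sum_l\eta_l\mathbfit{h}_{kl}\mathbfit{h}_{kl}^H=\eta_k\mathbfit{h}_{kk}\mathbfit{h}_{kk}^H-\sum_{l\neq k}|\eta_l|\mathbfit{h}_{kl}\mathbfit{h}_{kl}^H$ for $\boldsymbol{\eta}\in\mathcal{T}_k$ (so $\eta_k>0$ and $\eta_l<0$ for $l\neq k$), and the structural fact exploited throughout is that $\boldsymbol{\eta}^T\mathbfit{H}$ is a \emph{positive rank‑one term minus a positive semi‑definite term}. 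By Proposition~\ref{prop1}(v) the support function is $s_{\Omega(\mathbfit{H})}(\boldsymbol{\eta})=\max\!\left(0,\lambda_1(\boldsymbol{\eta}^T\mathbfit{H})\right)$, which singles out the two cases.

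\emph{Case $\lambda_1(\boldsymbol{\eta}^T\mathbfit{H})>0$.} I would first show that the top eigenspace $\mathcal{E}_1(\boldsymbol{\eta}^T\mathbfit{H})$ is one‑dimensional. For any unit eigenvector $\mathbfit{w}$ with eigenvalue $\lambda_1>0$ one has $\lambda_1=\mathbfit{w}^H(\boldsymbol{\eta}^T\mathbfit{H})\mathbfit{w}=\eta_k|\mathbfit{h}_{kk}^H\mathbfit{w}|^2-\sum_{l\neq k}|\eta_l||\mathbfit{h}_{kl}^H\mathbfit{w}|^2\leq\eta_k|\mathbfit{h}_{kk}^H\mathbfit{w}|^2$, so $\mathbfit{h}_{kk}^H\mathbfit{w}\neq0$; hence every vector of $\mathcal{E}_1(\boldsymbol{\eta}^T\mathbfit{H})$ has a nonzero component along $\mathbfit{h}_{kk}$, which is impossible if $\dim\mathcal{E}_1(\boldsymbol{\eta}^T\mathbfit{H})\geq2$ (the subspace would meet $\mathbfit{h}_{kk}^\perp$ nontrivially). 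Since $s_{\Omega(\mathbfit{H})}(\boldsymbol{\eta})>0$, Proposition~\ref{prop1}(vi),(iv),(ii) give $\Phi_{\Omega(\mathbfit{H})}(\boldsymbol{\eta})=\mathrm{co}\!\left(\mathcal{F}\!\left(\mathbfit{H};\mathcal{E}_1(\boldsymbol{\eta}^T\mathbfit{H})\right)\right)$, which is the single point $(\mathbfit{w}^H\mathbfit{H}_1\mathbfit{w},\ldots,\mathbfit{w}^H\mathbfit{H}_K\mathbfit{w})$ obtained from a unit $\mathbfit{w}$ spanning $\mathcal{E}_1(\boldsymbol{\eta}^T\mathbfit{H})$; that is, it is achieved by the rank‑one correlation matrix $\mathbfit{Q}_k=\mathbfit{w}\mathbfit{w}^H$.

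\emph{Case $\lambda_1(\boldsymbol{\eta}^T\mathbfit{H})\leq0$ (the conical part of $\partial_0^{\mathbfit{e}_k}\Omega(\mathbfit{H})$).} Here $s_{\Omega(\mathbfit{H})}(\boldsymbol{\eta})=0$ and $\Phi_{\Omega(\mathbfit{H})}(\boldsymbol{\eta})=\Omega(\mathbfit{H})\cap\{\mathbfit{y}:\boldsymbol{\eta}^T\mathbfit{y}=0\}$. From \eqref{eq_OcoW0} and convexity of $\mathcal{W}(\mathbfit{H})$, every $\mathbfit{y}\in\Omega(\mathbfit{H})$ equals $\mu\mathbfit{w}'$ with $\mu\in[0,1]$, $\mathbfit{w}'\in\mathcal{W}(\mathbfit{H})$, and then $\boldsymbol{\eta}^T\mathbfit{y}\leq\mu\,\lambda_1(\boldsymbol{\eta}^T\mathbfit{H})$; so if $\lambda_1(\boldsymbol{\eta}^T\mathbfit{H})<0$ the face is $\{\mathbf{0}\}$, achieved by $\mathbfit{Q}_k=\mathbf{0}$, while if $\lambda_1(\boldsymbol{\eta}^T\mathbfit{H})=0$ any face point is $\mathbf{0}$ or $\mu\mathbfit{w}'$ with $\mathbfit{w}'\in\Phi_{\mathcal{W}(\mathbfit{H})}(\boldsymbol{\eta})$. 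For the latter, let $\mathcal{E}_1=\ker(\boldsymbol{\eta}^T\mathbfit{H})$ and let $P$ be the orthogonal projector onto it; restricting $\boldsymbol{\eta}^T\mathbfit{H}$ to $\mathcal{E}_1$ yields the operator identity $\eta_k(P\mathbfit{h}_{kk})(P\mathbfit{h}_{kk})^H=\sum_{l\neq k}|\eta_l|(P\mathbfit{h}_{kl})(P\mathbfit{h}_{kl})^H$ on $\mathcal{E}_1$. The left side has rank at most one and all coefficients on the right are strictly positive, so either $P\mathbfit{h}_{kk}=\mathbf{0}$ (then all $P\mathbfit{h}_{kl}=\mathbf{0}$ and the face collapses to $\{\mathbf{0}\}$) or every $P\mathbfit{h}_{kl}$ is a scalar multiple of $P\mathbfit{h}_{kk}$. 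In the second situation, for unit $\mathbfit{w}\in\mathcal{E}_1$ the power‑gain vector $(|\mathbfit{h}_{k1}^H\mathbfit{w}|^2,\ldots,|\mathbfit{h}_{kK}^H\mathbfit{w}|^2)$ equals $|\mathbfit{h}_{kk}^H\mathbfit{w}|^2\,\mathbfit{c}$ for a fixed $\mathbfit{c}\in\mathbb{R}_+^K$ with $c_k=1$; hence $\mathcal{F}(\mathbfit{H};\mathcal{E}_1)$, and therefore $\Phi_{\mathcal{W}(\mathbfit{H})}(\boldsymbol{\eta})=\mathrm{co}(\mathcal{F}(\mathbfit{H};\mathcal{E}_1))$, is a line segment from $\mathbf{0}$ along $\mathbfit{c}$, each point of which is $|\mathbfit{h}_{kk}^H\mathbfit{w}|^2\mathbfit{c}$ for a suitable unit $\mathbfit{w}\in\mathcal{E}_1$ and is thus achieved by $\mathbfit{Q}_k=\mathbfit{w}\mathbfit{w}^H$. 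Scaling by $\mu\in[0,1]$ (i.e.\ using $\mu\,\mathbfit{w}\mathbfit{w}^H$, still of trace $\leq1$) then covers the whole exposed face.

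The main obstacle is precisely the conical case $\lambda_1(\boldsymbol{\eta}^T\mathbfit{H})=0$: there $\ker(\boldsymbol{\eta}^T\mathbfit{H})$ may have dimension greater than one, so the restricted joint numerical range $\mathcal{F}(\mathbfit{H};\mathcal{E}_1)$ need not be convex and the singleton argument of the first case breaks down — this is exactly the gap in the treatment of a general exposed face in \cite{5643183},\cite{CIT-069}. The resolution is the observation above that the degeneracy $\boldsymbol{\eta}^T\mathbfit{H}|_{\mathcal{E}_1}=0$, \emph{combined with the sign pattern $\eta_k>0>\eta_l$ forced by $\boldsymbol{\eta}\in\mathcal{T}_k$}, makes the projected channel directions collinear and collapses $\mathcal{F}(\mathbfit{H};\mathcal{E}_1)$ to a segment; it is the restriction to $\mathcal{T}_k$ (equivalently, to the outer boundary part $\partial_0^{\mathbfit{e}_k}\Omega(\mathbfit{H})$) that rescues convexity where the general case fails.
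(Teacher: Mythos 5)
Your proof is correct, and in the crucial degenerate case it takes a genuinely different route from the paper. For $s_{\Omega(\mathbfit{H})}(\boldsymbol\eta)>0$ you prove $\dim\mathcal{E}_1(\boldsymbol\eta^T\mathbfit{H})=1$ by noting every top eigenvector must have nonzero overlap with $\mathbfit{h}_{kk}$ and counting dimensions, whereas the paper splits $\boldsymbol\eta^T\mathbfit{H}=\mathbfit{A}+\mathbfit{B}$ (rank-one PSD plus NSD) and invokes Weyl's inequality to separate $\lambda_1$ from $\lambda_2$; both are sound and yield the same singleton conclusion via Proposition~\ref{prop1}. The real divergence is the conical case $s_{\Omega(\mathbfit{H})}(\boldsymbol\eta)=0$: the paper argues topologically, scaling the region to $\Omega(\rho\mathbfit{H})$, approximating the boundary point by nearby exposed faces with positive support (whose normals remain in $\mathcal{T}_k$), and concluding by closedness of $\mathcal{F}(\rho\mathbfit{H})$; you instead argue algebraically, using $P(\boldsymbol\eta^T\mathbfit{H})P=0$ on the kernel together with the strict sign pattern $\eta_k>0>\eta_l$ and the fact that ranges of PSD summands lie in the range of their sum, to force all projected channels $P\mathbfit{h}_{kl}$ to be collinear with $P\mathbfit{h}_{kk}$, so that $\mathcal{F}(\mathbfit{H};\mathcal{E}_1)$ collapses to a (possibly degenerate) segment along a fixed ray, is therefore convex, and every face point of $\Omega(\mathbfit{H})$ is reached by $\mu\,\mathbfit{w}\mathbfit{w}^H$ with $\mu\in[0,1]$. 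Your route avoids the paper's limiting argument and its slightly delicate continuity claims (e.g., that the approximating normals stay in $\mathcal{T}_k$), and it additionally exhibits the exact geometric structure of the degenerate exposed face; the paper's argument, in exchange, needs no analysis of the kernel compression and handles the $\rho$-scaling uniformly. One cosmetic imprecision on your side: when $\dim\mathcal{E}_1=1$ the set $\mathcal{F}(\mathbfit{H};\mathcal{E}_1)$ is a single point rather than a "segment from $\mathbf{0}$" (and in general the segment is $\{t\mathbfit{c}:t\in[t_{\min},t_{\max}]\}$ with possibly $t_{\min}>0$), but nothing in your argument uses that it starts at the origin, so the conclusion stands.
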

\begin{proof}
By Proposition \ref{prop1} (v) the support function of $\Omega(\mathbfit{H})$ is always non-negative; that is, $s_{\Omega(\mathbfit{H})}(\boldsymbol{\eta})\geq 0$, $\forall \boldsymbol\eta\in\mathcal{S}^{K-1}$. We distinguish between the two cases:\\
1) If $s_{\Omega(\mathbfit{H})}(\boldsymbol{\eta})> 0$ then by Proposition \ref{prop1} (claims (ii), (iv), (vi)) we have $\Phi_{\Omega(\mathbfit{H})}(\boldsymbol\eta)=\mathrm{co}\left(\mathcal{F}(\mathbfit{H};\mathcal{E}_1(\boldsymbol{\eta}^T\mathbfit{H})\right)$. By showing that the set $\mathcal{F}(\mathbfit{H};\mathcal{E}_1(\boldsymbol{\eta}^T\mathbfit{H}))$ is a singleton, we ensure that the exposed face $\Phi_{\Omega(\mathbfit{H})}(\boldsymbol\eta)$ is achieved by a rank-one correlation matrix. Thus, we only have to prove that the eigenspace $\mathcal{E}_1(\boldsymbol{\eta}^T\mathbfit{H})$ has dimension one; that is, the geometric multiplicity of the largest eigenvalue $\lambda_1(\boldsymbol\eta^T\mathbfit{H})$ equals one. This is accomplished by showing that the first and second eigenvalue of the matrix $\boldsymbol\eta^T\mathbfit{H}$ are strictly separated. Therefore, we rewrite $\boldsymbol\eta^T\mathbfit{H} = \mathbfit{A} + \mathbfit{B}$ with $\mathbfit{A}=\eta_k \mathbfit{h}_{k}\mathbfit{h}_{k}^H$ and $\mathbfit{B}=\sum_{l\neq k} \eta_l \mathbfit{h}_{l}\mathbfit{h}_{l}^H$. 
If $\boldsymbol{\eta} \in \mathcal{T}_k$ and $s_{\Omega(\mathbfit{H})}(\boldsymbol{\eta})>0$, then we have $\eta_k > 0, \eta_l < 0,\forall l\neq k$  and $\mathbfit{A} \succeq 0$, $\mathrm{rank}(\mathbfit{A})\leq 1$, $\mathbfit{B} \preceq 0$. Applying Weyl's eigenvalue inequality \cite[Section 1.3]{taotopics} yields 
\begin{align}
\lambda_1(\boldsymbol\eta^T\mathbfit{H})-\lambda_2(\boldsymbol\eta^T\mathbfit{H}) &\geq \lambda_1(\boldsymbol\eta^T\mathbfit{H}) - \lambda_2(\mathbfit{A})-\lambda_1(\mathbfit{B})\nonumber\\
& = \lambda_1(\boldsymbol\eta^T\mathbfit{H}) + \left|\lambda_1(\mathbfit{B})\right|\nonumber\\
& >0.
\end{align}
Hence, $\mathrm{dim}\left(\mathcal{E}_1(\boldsymbol\eta^T\mathbfit{H})\right)=1$; that is, the set $\Phi_{\Omega(\mathbfit{H})}(\boldsymbol\eta)=\mathcal{F}(\mathbfit{H};\mathcal{E}_1(\boldsymbol\eta^T\mathbfit{H}))$ is a singleton.\\
2) For the case $s_{\Omega(\mathbfit{H})}(\boldsymbol{\eta}) = 0$, we consider a certain point $\mathbfit{y}\in\Phi_{\Omega(\mathbfit{H})}(\boldsymbol\eta)$ with $\boldsymbol\eta\in\mathcal{T}_k$. 
We show that every neighborhood of $\mathbfit{y}$ contains a point that is achieved by a correlation matrix with $\mathrm{rank}(\mathbfit{Q}_k)\leq 1$; that is, $\mathbfit{y}$ is a limit point of a (scaled) joint numerical range. Since a closed set contains its limit points, the point $\mathbfit{y}$ is likewise achieved by a correlation matrix with $\mathrm{rank}(\mathbfit{Q}_k)\leq 1$.\\
Let $\rho\in[0,1]$ be the smallest scaling factor such that $\mathbfit{y}\in \Omega(\rho\mathbfit{H})$. The boundary of $\Omega(\rho\mathbfit{H})$ can be divided into two (possibly overlapping) sets $\mathcal{A}$ and $\mathcal{B}$, with $\partial\Omega(\rho\mathbfit{H})=\mathcal{A}\cup \mathcal{B}$. The conical boundary part of $\Omega(\rho\mathbfit{H})$ is given by the closed set
\begin{align}
 \mathcal{A}=\left\{\mathbfit{y}\in \Phi_{\Omega(\rho\mathbfit{H})}(\boldsymbol\eta):\boldsymbol\eta\in\mathcal{S}^{K-1},s_{\Omega(\rho\mathbfit{H})}(\boldsymbol\eta)=0\right\},\nonumber
\end{align}
while the remaining boundary part is included in the set
\begin{align}
 \mathcal{B}=\left\{\mathbfit{y}\in \Phi_{\Omega(\rho\mathbfit{H})}(\boldsymbol\eta):\boldsymbol\eta\in\mathcal{S}^{K-1},s_{\Omega(\rho\mathbfit{H})}(\boldsymbol\eta)>0\right\}.\nonumber
\end{align}
By definition of $\rho$, the point $\mathbfit{y}$ must lie on the boundary of the subset $\mathcal{A}$. Consequently, every open neighborhood of $\mathbfit{y}$ contains at least one point $\mathbfit{y}'\in\mathcal{B}$ with corresponding $\boldsymbol{\eta}'$ and $s_{\Omega(\rho\mathbfit{H})}(\boldsymbol{\eta}')>0$. Note that $\boldsymbol{\eta}'\in\mathcal{T}_k$ because $\partial\Omega(\rho\mathbfit{H})$ is a (continuous) convex curve and $\mathcal{T}_k$ is an open set.\\
By applying case 1), we have $\mathbfit{y}'\in\mathcal{F}(\rho\mathbfit{H})$. If $\rho>0$ then every such point $\mathbfit{y}'$ can be achieved by a rank-one correlation matrix. Since every neighborhood of $\mathbfit{y}$ contains such a point $\mathbfit{y}'$, the point $\mathbfit{y}$ is a limit point of the (closed) set $\mathcal{F}(\rho\mathbfit{H})$ and thus must be itself an element of this set. If $\rho=0$, then we simply have $\mathrm{rank}(\mathbfit{Q}_k)=0$.
\end{proof}

\section{Proof of Theorem \ref{thm2}}\label{prfUtilOpt}
The proof that all stationary points of \eqref{bfOptProb} can be achieved with single-stream beamforming is based on the convex geometry of the power gain region, see Appendix \ref{app_review}. By symmetry, it suffices to consider the $k$-th user only. Set $\mathbfit{H}=(\mathbfit{h}_{k1}\mathbfit{h}_{k1}^H,\ldots,\mathbfit{h}_{kK}\mathbfit{h}_{kK}^H)^H$, then we have $\Omega_k=\Omega(\mathbfit{H})$. The Condition \eqref{eq:optcond1} can be reformulated as
$\mathbfit{x}_k^* \in \Phi_{\Omega(\mathbfit{H})}(\boldsymbol\eta^*),\forall k$ 
where $\boldsymbol\eta^*=\slfrac{\nabla_{k}U\left(\mathbfit{X}^*\right)}{\left\|\nabla_{k}U\left(\mathbfit{X}^*\right)\right\|}$. 
By Assumption \ref{ukProp}, we have $\slfrac{\partial{U(\mathbfit{X}^*)}}{\partial{x_{k,l}}}<0,\forall l\neq k$ and $\slfrac{\partial{U(\mathbfit{X}^*)}}{\partial{x_{k,k}}}>0$. Consequently, the normal vector $\boldsymbol\eta^*$ must be an element of the set $\mathcal{T}_{k}=\left\{\boldsymbol\eta \in \mathcal{S}^{K-1}: \eta_k > 0, \eta_l < 0,\forall l\neq k\right\}$. Now, we can invoke Proposition \ref{prop3893883} which shows that all outer boundary points $\mathbfit{x}\in  \Phi_{\Omega(\mathbfit{H})}(\boldsymbol\eta)$ with $\boldsymbol\eta \in \mathcal{T}_{k}$ can be achieved by correlation matrices $\mathbfit{Q}_k$ with $\mathrm{rank}(\mathbfit{Q}_k)\leq 1$.\\
Next, we show how to find the corresponding beamforming vector $\mathbfit{w}_k^*$. By applying the scaled gradient projection algorithm (Section \ref{ADBF}), we obtain a correlation matrix $\mathbfit{Q}_k^*$ that achieves the $k$-th component $\mathbfit{x}_k^*$ of the stationary solution $\mathbfit{X}^*$. Depending on the rank of this matrix, we distinguish between the following two cases:
\begin{enumerate}
	\item  If $\mathrm{rank}(\mathbfit{Q}_k^*)\leq 1$ then we have $\mathbfit{Q}_k^*=\mathbfit{w}_k^*(\mathbfit{w}_k^*)^H$. The vector $\mathbfit{w}_k^*$ is given by the dominant eigenvector $v_\mathrm{max}(\mathbfit{Q}_k^*)$, scaled by the square root of the largest eigenvalue $\lambda_1(\mathbfit{Q}_k^*)$.
	\item If $\mathrm{rank}(\mathbfit{Q}_k^*)>1$ then we have to find a beamforming vector that achieves the power gain vector $\mathbfit{x}_k^*=\mathbfit{x}_k(\mathbfit{Q}_k^*)$, which yields the feasibility problem
\begin{align}
\mathrm{find}\;&\mathbfit{w}_k  \tag{$\mathrm{P}_5$} \label{findBfVecRank1Prob} \nonumber  \\
\mathrm{s.\,t.}\;  &\left|\mathbfit{h}_{kl}^H\mathbfit{w}_k\right|^2 = x_{k,l}(\mathbfit{Q}_k^*),\forall l\nonumber\\
&\left\|\mathbfit{w}_k\right\|^2\leq 1.\nonumber
\end{align}
This problem is non-convex due to the quadratic equality constraints. However, we can transform Problem \eqref{findBfVecRank1Prob} into a convex optimization problem by searching for beamforming vectors that yield a sum utility which is at least as good as the original one. By the monotonicity Assumption \ref{Uprop}, we can replace the equality constraints for the interference powers $x_{k,l}(\mathbfit{Q}_k^*),\forall l\neq k$, by the inequality constraints $\left|\mathbfit{h}_{kl}^H\mathbfit{w}_k\right|^2\leq x_{k,l}(\mathbfit{Q}_k^*),\forall l\neq k$. By maximizing the useful signal power $\left|\mathbfit{h}_{kk}^H\mathbfit{w}_k\right|^2$, we obtain the interference-constrained beamforming problem
\begin{align}
\min_{\mathbfit{w}_k} &-\left|\mathbfit{h}_{kk}^H\mathbfit{w}_k\right|^2 \tag{$\mathrm{P}_6$} \label{findBfVecRank1Prob2}  \nonumber \\
\mathrm{s.\,t.}\;  &\left|\mathbfit{h}_{kl}^H\mathbfit{w}_k\right|^2\leq x_{k,l}(\mathbfit{Q}_k^*),\forall l\neq k\nonumber\\
&\left\|\mathbfit{w}_k\right\|^2\leq 1.\nonumber
\end{align}
This problem is still non-convex due to the concave objective. Similar to \cite{4599181}, we note that any solution of Problem \eqref{findBfVecRank1Prob2} is invariant with respect to a phase rotation. Thus, the optimal solution can be found by assuming that $\mathbfit{h}_{kk}^H\mathbfit{w}_k$ is real and nonnegative, yielding the convex optimization problem in \eqref{findBfVecProb}. Note that \eqref{findBfVecRank1Prob}, \eqref{findBfVecRank1Prob2}, \eqref{findBfVecProb} are always feasible because the first part of Theorem \ref{thm2} proves the existence of a non-zero beamforming vector $\mathbfit{w}_k^*$ with $\left|\mathbfit{h}_{kl}^H\mathbfit{w}_k^*\right|^2= x_{k,l}(\mathbfit{Q}_k^*),\forall l$.
\end{enumerate}

\section{Proof of Theorem \ref{thm:tsit}} \label{proof:tsit}
Before proving Theorem \ref{thm:tsit}, we first establish a block-ascent property (cf. \cite[Lemma 5.1]{Bertsekas:1989:PDC}) for the scaled gradient projection algorithm. Therefore, we rewrite \eqref{eq:updateStep} as
\begin{align}
\mathbfit{x}_k(n+1)&=\left[\mathbfit{x}_k(n)+\gamma_k \mathbfit{M}_k^{-1} \boldsymbol{\lambda}_k(n)\right]_{\mathbfit{M}_k}^{\Omega_k}\nonumber\\
&=\mathbfit{x}_k(n)+\gamma_k \mathbfit{s}_k(n) \label{eq_x_s}
\end{align}
with $\mathbfit{s}_k(n)=1/\gamma_k([\mathbfit{x}_k(n)+\gamma_k \mathbfit{M}_k^{-1} \boldsymbol{\lambda}_k(n)]_{\mathbfit{M}_k}^{\Omega_k}-\mathbfit{x}_k(n))$.
\begin{lem}[Block Ascent Property]\label{lemma2}
Let $\mathbfit{M}_k=\mathrm{diag}\left(\beta_{k,1},\ldots,\beta_{k,K}\right)$, then we have for each $k$ and $n$
\begin{align}
\mathbfit{s}_k(n)^T\boldsymbol{\lambda}_k(n) \geq \mathbfit{s}_k(n)^T\mathbfit{M}_k \mathbfit{s}_k(n)=\sum_{l=1}^K\beta_{k,l} \left|s_{k,l}(n)\right|^2. \label{ieq_bap}
\end{align}
\end{lem}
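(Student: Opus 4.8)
The plan is to derive \eqref{ieq_bap} directly from the variational (obtuse-angle) characterization of the scaled projection onto the convex set $\Omega_k$, and then to exploit that $\mathbfit{x}_k(n)$ is itself a feasible competitor in that characterization.

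First I would recall from \cite[Prop.~3.7(b)]{Bertsekas:1989:PDC} that the scaled projection $\mathbfit{x}_k(n+1)=\left[\mathbfit{z}\right]_{\mathbfit{M}_k}^{\Omega_k}$ of the point $\mathbfit{z}=\mathbfit{x}_k(n)+\gamma_k\mathbfit{M}_k^{-1}\boldsymbol{\lambda}_k(n)$ is the unique vector in $\Omega_k$ satisfying $(\mathbfit{z}-\mathbfit{x}_k(n+1))^T\mathbfit{M}_k(\mathbfit{x}-\mathbfit{x}_k(n+1))\leq 0$ for all $\mathbfit{x}\in\Omega_k$. Using $\mathbfit{x}_k(n+1)=\mathbfit{x}_k(n)+\gamma_k\mathbfit{s}_k(n)$ from \eqref{eq_x_s}, one has $\mathbfit{z}-\mathbfit{x}_k(n+1)=\gamma_k(\mathbfit{M}_k^{-1}\boldsymbol{\lambda}_k(n)-\mathbfit{s}_k(n))$, so that after dividing by $\gamma_k>0$ the characterization reads $(\mathbfit{M}_k^{-1}\boldsymbol{\lambda}_k(n)-\mathbfit{s}_k(n))^T\mathbfit{M}_k(\mathbfit{x}-\mathbfit{x}_k(n+1))\leq 0$ for every $\mathbfit{x}\in\Omega_k$.

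Second, I would substitute the feasible choice $\mathbfit{x}=\mathbfit{x}_k(n)\in\Omega_k$, for which $\mathbfit{x}-\mathbfit{x}_k(n+1)=-\gamma_k\mathbfit{s}_k(n)$; dividing once more by $\gamma_k>0$ and reversing the sign gives $(\mathbfit{M}_k^{-1}\boldsymbol{\lambda}_k(n)-\mathbfit{s}_k(n))^T\mathbfit{M}_k\mathbfit{s}_k(n)\geq 0$. Expanding and using that $\mathbfit{M}_k$ is diagonal --- hence symmetric, so $(\mathbfit{M}_k^{-1})^T\mathbfit{M}_k$ is the identity --- turns $(\mathbfit{M}_k^{-1}\boldsymbol{\lambda}_k(n))^T\mathbfit{M}_k\mathbfit{s}_k(n)$ into $\boldsymbol{\lambda}_k(n)^T\mathbfit{s}_k(n)$, whence $\mathbfit{s}_k(n)^T\boldsymbol{\lambda}_k(n)\geq\mathbfit{s}_k(n)^T\mathbfit{M}_k\mathbfit{s}_k(n)$. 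Finally, $\mathbfit{M}_k=\mathrm{diag}(\beta_{k,1},\ldots,\beta_{k,K})$ rewrites the right-hand side as $\sum_{l=1}^K\beta_{k,l}|s_{k,l}(n)|^2$, which is exactly \eqref{ieq_bap}.

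There is essentially no analytic obstacle here; the only points demanding care are the bookkeeping of the strictly positive scalars $\gamma_k$ in tandem with the orientation of the projection inequality, and the remark that the symmetry of $\mathbfit{M}_k$ is precisely what lets $(\mathbfit{M}_k^{-1}\boldsymbol{\lambda}_k)^T\mathbfit{M}_k\mathbfit{s}_k$ collapse back to $\boldsymbol{\lambda}_k^T\mathbfit{s}_k$. I note in passing that the argument carries over verbatim to any symmetric positive definite $\mathbfit{M}_k$; the diagonal hypothesis is used only to display the bound as a weighted sum of squares, which will be convenient in the subsequent proof of Theorem~\ref{thm:tsit}.
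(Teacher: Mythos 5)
Your proposal is correct and follows essentially the same route as the paper: both invoke the scaled projection variational inequality from \cite[Proposition 3.7 (b)]{Bertsekas:1989:PDC} with the feasible competitor $\mathbfit{x}_k(n)\in\Omega_k$, substitute $\mathbfit{x}_k(n+1)=\mathbfit{x}_k(n)+\gamma_k\mathbfit{s}_k(n)$, divide out the positive factors of $\gamma_k$, and use the symmetry of $\mathbfit{M}_k$ to collapse $(\mathbfit{M}_k^{-1}\boldsymbol{\lambda}_k)^T\mathbfit{M}_k\mathbfit{s}_k$ into $\boldsymbol{\lambda}_k^T\mathbfit{s}_k$. Your closing remark that only symmetry and positive definiteness (not diagonality) of $\mathbfit{M}_k$ is needed for the inequality itself is a fair observation, but it does not change the substance of the argument.
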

\begin{proof}
By the Scaled Projection Theorem \cite[Proposition 3.7 (b)]{Bertsekas:1989:PDC} we have
\begin{align}
\left[\mathbfit{x}_k(n+1)-\mathbfit{x}_k(n)\right]^T\mathbfit{M}_k \left[\mathbfit{x}_k(n+1)-\right.\nonumber\\
\left.\left(\mathbfit{x}_k(n)+\gamma_k \mathbfit{M}_k^{-1} \boldsymbol{\lambda}_k(n)\right)\right] &\leq 0\nonumber
\end{align}
Equivalently, we can write
\begin{align}
\gamma_k \mathbfit{s}_k(n)^T \mathbfit{M}_k\left[\gamma_k \mathbfit{s}_k(n)-\gamma_k \mathbfit{M}_k^{-1} \boldsymbol{\lambda}_k(n)\right] &\leq 0\nonumber
\end{align}
from which inequality \eqref{ieq_bap} follows.
\end{proof}

The first part of the proof for Theorem \ref{thm:tsit} closely follows the proof in \cite[Theorem 5.6.1]{2140467}. Therefore, we only present the basic idea and the parts which differ from the original proof. 
Starting with the second-order Taylor expansion of $U$, we derive\footnote{Let $f:\mathbb{R}^N\rightarrow\mathbb{R}$ be a continuously differentiable function. Based on Taylor's remainder theorem, we have $\forall \mathbfit{x},\mathbfit{s}\in\mathbb{R}^N,\exists \mathbfit{y}\in [\mathbfit{x},\mathbfit{x}+\mathbfit{s}]$ such that $f(\mathbfit{x}+\mathbfit{s})=f(\mathbfit{x}) + \mathbfit{s}^T\nabla f(\mathbfit{x}) + \frac{1}{2}\mathbfit{s}^T\nabla^2 f(\mathbfit{y})\mathbfit{s}$. Further, the quadratic term is lower bounded by $\mathbfit{s}^T\nabla^2 f(\mathbfit{y})\mathbfit{s} \geq -\sum_k \left|s_k\right|^2\sum_l \left|\frac{\partial^2 f(\mathbfit{y})}{\partial x_k\partial x_l}\right|$.} a lower of bound for $U(\mathbfit{X}(n+1))$ as 
\begin{align}
U(\mathbfit{X}(n+1))&\geq U(\mathbfit{X}(n)) + \sum_{k=1}^K \gamma_k \nabla_k U(\mathbfit{X}(n))\mathbfit{s}_k(n) \nonumber\\
&\quad - \frac{1}{2}\sum_{k=1}^K\sum_{l=1}^K \gamma_k^2  \left|s_{k,l}(n)\right|^2 \sum_{s=1}^K K_{kl,s}. \label{PXeq:taylorLB}
\end{align}
By the Block Ascent Property (Lemma \ref{lemma2}) we have
\begin{align}
\nabla_k U(\mathbfit{X}(n))\mathbfit{s}_k(n)&= \boldsymbol{\lambda}_k(n)^T \mathbfit{s}_k(n) + \nonumber\\
& \quad \; [\nabla_k U(\mathbfit{X}(n)) - \boldsymbol{\lambda}_k(n)^T]\mathbfit{s}_k(n) \nonumber\\
&\geq \sum_{l=1}^K\beta_{k,l} \left|s_{k,l}(n)\right|^2 + \nonumber \\
& \quad \; [\nabla_k U(\mathbfit{X}(n)) - \boldsymbol{\lambda}_k(n)^T]\mathbfit{s}_k(n).
\end{align}
After some algebraic manipulations and summing for different values of $n$, we obtain
\begin{align}
U(\mathbfit{X}(n+1))&\geq U(\mathbfit{X}(0)) \nonumber\\
&+ \sum_{p=0}^n\sum_{k=1}^K\sum_{l=1}^K \frac{1}{2} \gamma_k \left|s_{k,l}(p)\right|^2 \left(2 \beta_{k,l} -\gamma_k D_{k,l}\right)
\end{align}
with $D_{k,l}=\sum_{s=1}^K K_{kl,s}(1+P_{s,l}+Q_{l,k})+K_{sl,k}\left(P_{k,l}+Q_{l,s}\right)$.\\
Let $G_k=\mathrm{min}_l\, \slfrac{2\beta_{k,l}}{D_{k,l}}$ and assume that $\gamma_k\in(0,G_k)$. Then we have some $C_1>0$ with
\begin{align}
0< C_1 \leq  2 \beta_{k,l}-\gamma_k D_{k,l},\forall k,l\nonumber
\end{align}
for which it holds 
\begin{align}
 \sum_{k=1}^K\sum_{l=1}^K \sum_{p=0}^n \gamma_k \left|s_{k,l}(p)\right|^2 \leq \frac{2}{C_1}\left[U(\mathbfit{X}(n+1)) - U(\mathbfit{X}(0))\right].
\end{align}
Since $U$ is bounded from above and $U\geq 0$, we have a finite $C_2\geq U(\mathbfit{X}(n+1)) - U(\mathbfit{X}(0))$ such that for every $k,l$ and $p\geq 0$,
\begin{align}
\sum_{p=0}^\infty  \left|s_{k,l}(p)\right|^2 \leq  \frac{2C_2}{\gamma_k C_1}< \infty,
\end{align}
which implies $\lim_{p\rightarrow\infty}  \left|s_{k,l}(p)\right|^2 =0$. By \eqref{eq_x_s} we have $\lim_{p\rightarrow\infty} \mathbfit{x}_k(p+1)-\mathbfit{x}_k(p) =0$; that is, for every $k$ the sequence of power gain vectors $\left\{\mathbfit{x}_k(p)\right\}$ converges to a limit point $\mathbfit{x}_k^*$.\\
Finally, we show that the limit point satisfies the Optimality Condition \eqref{eq:optcond1}. Let $T_k^p:\Omega_k \rightarrow \Omega_k$ be the mapping that corresponds to the $p$-th iteration of the asynchronous SGP algorithm (i.e., $\mathbfit{x}_k(p+1)=T_k^p(\mathbfit{x}_k(p))$), then by \cite[Proposition 3.7 (e)]{Bertsekas:1989:PDC} we have $T_k^p(\mathbfit{x}_k^*) = \mathbfit{x}_k^*$ if and only if $\nabla_k U\left(\mathbfit{X}^*\right)(\mathbfit{y}-\mathbfit{x}_k^*)\leq 0,\forall \mathbfit{y}\in\Omega_k$, which completes the proof.

\section{Example: Proportional Fair Rate}\label{ExamplePF} 
We now illustrate the calculation of the curvature bounds for the proportional fair rate utility \eqref{eq:PFutil}
with $u_k(\mathbfit{x}^k) = \log_2\left(1 + \Gamma_k(\mathbfit{x}^k)\right)$ and $\Gamma_k(\mathbfit{x}^k)=x_{k,k}/(\sum_{l\neq k} x_{l,k} + \sigma^2)$. An equivalent problem formulation, which satisfies Assumption \ref{Uprop}, is obtained by transforming the objective $U^\mathrm{pf}$ with the monotonously increasing $\log(.)$ function (cf. \cite{Boyd}), yielding
\begin{align}
\arg\max U^\mathrm{pf} &\equiv \arg\max \sum_k \log u_k + c_k.
\end{align}
The constants $c_k,\forall k$ can be chosen such that Assumption \ref{ukProp} is satisfied. However, they do not depend on $\mathbfit{X}$ and thus can be omitted.\\
Next, let $U=\sum_k \log u_k$ and  $P_k^\mathrm{rx}=\sum_{l=1}^K x_{l,k} + \sigma^2$. The second-order partial derivatives of $U$ are given by
\begin{align}
\frac{\partial^2 U}{\partial x_{k,k}^2}&=-\frac{1}{(P_k^\mathrm{rx})^2}\left[\frac{1}{u_k^2}+\frac{1}{u_k}\right], \label{P2eq:Kkkk}\\
\frac{\partial^2 U}{\partial x_{k,k}\partial x_{l,k}}&=\frac{\partial^2 U}{\partial x_{l,k}\partial x_{k,k}}=\frac{1}{(P_k^\mathrm{rx})^2}\left[\frac{\Gamma_k}{u_k^2}-\frac{1}{u_k}\right], \label{P2eq:Kkkl}\\
\frac{\partial^2 U}{\partial x_{l,k}^2}&=\frac{\partial^2 U}{\partial x_{l,k} \partial x_{s,k}}=\frac{1}{u_k}\frac{\Gamma_k^2}{(P_k^\mathrm{rx})^2}\left[\frac{2}{\Gamma_k}+1-\frac{1}{u_k}\right]. \label{P2eq:Kskl}
\end{align}

A set of valid (global) curvature bounds $K_{lk,s},\forall l,s$ can be determined by the $k$-th user solely on the basis of its local CSI knowledge; that is, the $k$-th user needs to determine for each $l,s$ 
\begin{align}
K_{lk,s}=\max_{(x_{1,k},\ldots,x_{K,k})\in \mathcal{I}_{1}\times \ldots \times \mathcal{I}_{K}} \left|\frac{\partial^2 U}{\partial x_{l,k}\partial x_{s,k}}\right|
\end{align}
with the intervals $\mathcal{I}_{l}=[0,\left\|\mathbfit{h}_{lk}\right\|^2],l\neq k$ and  $\mathcal{I}_{k}=[\frac{1}{\mu},\left\|\mathbfit{h}_{kk}\right\|^2],\mu\in\mathbb{R}_{++}$.

Finally, we show that the transformed problem formulation satisfies Assumption \ref{P2monCurvAssm}.
First note that the second-order partial derivatives in \eqref{P2eq:Kkkk}, \eqref{P2eq:Kkkl}, \eqref{P2eq:Kskl} do not change their sign\footnote{By omitting the strictly positive factor $\slfrac{1}{(P_k^\mathrm{rx})^2}$, we can treat the second-order partial derivative $\slfrac{\partial^2 U}{\partial x_{l,k}\partial x_{s,k}}$ as a function $f_{lk,s}(\Gamma_k)$ that solely depends on $\Gamma_k$. This function can be analyzed in order to derive the statement.} for all $\mathbfit{x}^k\in\mathbb{R}_+^K$ with $x_{k,k}>0$. Thus, we can neglect the absolute value operator for their monotonicity analysis. By inspecting the partial derivatives of the expressions \eqref{P2eq:Kkkk}, \eqref{P2eq:Kkkl}, \eqref{P2eq:Kskl} with respect to $x_{l,k},\forall l$, we observe the following relationships:
\begin{itemize}
	\item \eqref{P2eq:Kkkk} is monotonically decreasing with respect to $x_{k,k}$, and monotonically increasing with respect to $x_{l,k},\forall l\neq k$.
	\item \eqref{P2eq:Kkkl} and \eqref{P2eq:Kskl} are monotonically decreasing with respect to $x_{l,k},\forall l$.
\end{itemize}
Thus, the utility $U=\sum_k \log u_k$ satisfies Assumption \ref{P2monCurvAssm}.


%
%

\ifCLASSOPTIONcaptionsoff
  \newpage
\fi



%

\bibliographystyle{IEEEtran}
\bibliography{bibtex_refs_TSP}

%
%

%

%
%
%




\end{document}